\newtheorem{theorem}{Theorem}[section]
\newtheorem{lemma}[theorem]{Lemma}
\newtheorem{claim}[theorem]{Claim}
\newcommand{\R}{\mathbb{R}}
\newcommand{\trivi}{\hfill$\square$}
\newcommand{\st}{fatness{~}}
\newcommand{\konstlin}{2n+3m}
\newcommand{\konstexp}{m}
\newcommand{\konstterm}{2n+6m}
\newcommand{\ketenn}{2n+3m}
\title{Concave Generalized Flows with Applications to Market Equilibria\\
\nottoggle{full}{%
\vspace{8mm}
\normalsize{Extended abstract}
}{}
}
\author{L\'aszl\'o A. V\'egh\thanks{Supported by NSF Grant CCF-0914732.}\\
{College of Computing, Georgia Institute of Technology}\\
E-mail: {\tt veghal\char'100 cs.elte.hu } 
}
\begin{document}
\nottoggle{full}{%
\begin{titlepage}
\thispagestyle{empty}
}

\maketitle

\begin{abstract}
We consider a nonlinear extension of the generalized network flow model, with the flow leaving an arc being an increasing concave function of the flow entering it, as proposed by Truemper \cite{Truemper78} and Shigeno \cite{Shigeno06}. We give a polynomial time combinatorial algorithm for solving corresponding flow maximization problems, finding an $\varepsilon$-approximate solution in $O(m(m+\log n)\log(MUm/\varepsilon))$ arithmetic operations and value oracle queries, where $M$ and $U$ are upper bounds on simple parameters. 
This also gives a new algorithm for linear generalized flows, an efficient, purely scaling variant
of the Fat-Path algorithm by Goldberg, Plotkin and Tardos \cite{Goldberg91}, not using any cycle cancellations.

We show that this general convex programming model serves as a common framework for several market equilibrium problems, including the linear Fisher market model and its various extensions. Our result immediately extends these market models to more general settings. We also obtain a combinatorial algorithm for nonsymmetric Arrow-Debreu Nash bargaining, settling an open question by Vazirani \cite{Vazirani11}. 
\nottoggle{full}{%

\medskip

The full version  of this paper is available on arXiv ({\tt http://arXiv.org/abs/1109.3893}).
}{}
\end{abstract}

\nottoggle{full}{%
\end{titlepage}
}

\section{Introduction}
A classical extension of network flows is the 
 {\sl generalized network flow model}, with a {gain factor} 
$\gamma_e>0$ associated with each arc $e$ so that if $\alpha$ units of flow enter arc $e$, then $\gamma_e\alpha$ units leave it. Since first studied in the sixties by Dantzig \cite{Dantzig63} and Jewell \cite{Jewell62}, the problem has found many applications including financial analysis, transportation, management sciences, see \cite[Chapter 15]{amo}.

In this paper, we consider a nonlinear extension, {\sl concave generalized flows},  studied by Truemper \cite{Truemper78} in 1978, and by Shigeno \cite{Shigeno06} in 2006. For each arc $e$ we are given a concave, monotone increasing  function $\Gamma_e$ such that if $\alpha$ units enter $e$ then $\Gamma_e(\alpha)$ units leave it. 
We give a combinatorial algorithm for corresponding flow maximization problems, with running time polynomial in the network data and some simple parameters.
We also exhibit new applications, showing that it is a general framework containing multiple convex programs for market equilibrium settings, for which combinatorial algorithms have been developed in the last decade. 
As an application, we also get a combinatorial algorithm for nonsymmetric Arrow-Debreu Nash bargaining, resolving an open question by Vazirani \cite{Vazirani11}. We can also extend existing results to more general settings.

Generalized flows are linear programs and thus can be solved efficiently by general linear programming techniques, the currently most efficient such algorithm being the interior-point method by Kapoor and Vaidya \cite{Kapoor96}. Combinatorial approaches have been used since the sixties (e.g. \cite{Jewell62, Onaga66,Truemper77}), yet the first polynomial-time combinatorial algorithms were given only in 1991
by Goldberg, Plotkin and Tardos \cite{Goldberg91}. This inspired a line of research to develop further
polynomial-time combinatorial algorithms, e.g.\ \cite{Cohen94,Goldfarb96,Goldfarb97,Tardos98,Fleischer02,Goldfarb02a,Goldfarb02,Wayne02,Radzik04,Restrepo09};  for a survey on combinatorial generalized flow algorithms,   see \cite{Shigeno07}. Despite the vast literature, no strongly polynomial algorithm is known so far. %
Our algorithm for this special case derives from the \textsc{Fat-Path} algorithm in \cite{Goldberg91}, with the remarkable difference that no cycle
cancellations are needed.

Nonlinear extensions of generalized flows have also been studied, e.g. in \cite{Ahlfeld87,Bertsekas97},  minimizing a separable convex cost function for generalized flows. However, these frameworks do not contain our problem, which involves nonlinear convex constraints. 


Concave generalized flows being nonlinear convex programs, they can also be solved  by the ellipsoid method, yet no practically efficient methods are known for this problem. Hence finding a combinatorial algorithm is also a matter of running time efficiency. 
Shigeno \cite{Shigeno06} gave the first combinatorial algorithm that runs in polynomial time for some restricted classes of functions $\Gamma_e$, including piecewise linear. It is also an extension of the \textsc{Fat-Path} algorithm in \cite{Goldberg91}. In spite of this development, it has remained an open problem to find a combinatorial polynomial-time algorithm for arbitrary  concave increasing gain functions.

Our result settles this question by allowing arbitrary increasing concave gain functions provided via value oracle access.
The running time bounds for this general problem are reasonably close to the most efficient linear generalized flow algorithms. 
Concave gain functions extend the applicability range of the classical generalized flow model, as they can  describe e.g. deminishing marginal utilities. From the application  point of view, another contribution of the paper is extending generalized flow techniques to the domain of market equilibrium computations, where this model turns out to be a concise unifying framework.

The concave optimization problem might have irrational optimal solutions: in general, we give a fully polynomial-time approximation scheme, with running time dependent on $\log (\frac 1\varepsilon)$ for finding an
$\varepsilon$-approximate solution. In the market equilibrium applications we have rational convex programs (as in \cite{Vazirani11}): the existence of a rational optimal solution is guaranteed. We show a general technique to transform a sufficiently good approximation delivered by our algorithm to an exact optimal solution under certain circumstances. \iftoggle{full}{We demonstrate how this technique can be applied on the example of nonsymmetric Arrow-Debreu Nash bargaining, where the existence of a combinatorial algorithm was open \cite{Vazirani11}. 

In Section~\ref{sec:preliminaries}, we give the precise definition of the problems considered. Thereby we introduce a new, equivalent variant of the problem, called the {\sl symmetric formulation}, providing a more flexible algorithmic framework. 
Section~\ref{sec:market-app} shows the applications for market equilibrium problems.
Section~\ref{sec:background} explores the background of minimum-cost circulation and generalized flow algorithms, and exhibits the main algorithmic ideas. 
We first present our symmetric generalized flow algorithm in 
Section~\ref{sec:gen-alg} for the special case of linear gains. Based on this, Section~\ref{sec:concave} gives the algorithm for arbitrary concave gain functions. Section~\ref{sec:sink} adapts these algorithms for the more standard sink formulation of the problems. Section~\ref{sec:market} considers the case when  the existence of a rational optimal solution is guaranteed, and shows how the approximate solution provided by our algorithm can be turned to an optimal solution. The final Section~\ref{sec:disc} discusses possible further directions.}
{%

The rest of this extended abstract is organized as follows. 
In Section~\ref{sec:preliminaries}, we give the precise definition of the problems considered. 
Thereby we introduce a new, equivalent variant of the problem, called the {\sl symmetric formulation}, providing a more flexible algorithmic framework. 
Section~\ref{sec:market-app} shows the applications for market equilibrium problems.
Section~\ref{sec:background} explores the background of minimum-cost circulation and generalized flow algorithms.
Section~\ref{sec:concave} gives the algorithm for symmetric concave generalized flows. Section~\ref{sec:sink} shows how the algorithm can be applied for the more standard sink formulation. 

The full version of the paper first presents the algorithm for the special case of linear generalized flows. It also describes a general method for finding the optimal solutions for rational convex programs, and applies it to the  nonsymmetric Arrow-Debreu Nash bargaining problem,  where the existence of a combinatorial algorithm was open \cite{Vazirani11}. Furthermore, it gives a detailed overview on related linear and convex flow problems.
}{}

\section{Problem definitions}\label{sec:preliminaries}
\iftoggle{full}{We define two closely related variants of the linear and the concave generalized flow problem.
Let $G=(V,E)$ be a directed graph. 
Let $n=|V|$, $m=|E|$, and for each node $i\in V$, let $d_i$ be the total number of incoming and outgoing arcs incident to $i$.

We are given lower and upper arc capacities $\ell,u:E\rightarrow \R$  and gain factors $\gamma:E\rightarrow \R^+$ on the arcs, and node demands
$b:V\rightarrow \R$.
By a {\sl pseudoflow} we mean a function  $f:E\rightarrow \R$ with $\ell\le f\le u$.
Given the pseudoflow $f$, let 
\begin{equation}\label{eq:excess}
e_i:=\sum_{j:ji\in E}\gamma_{ji}f_{ji}-\sum_{j:ij\in E}f_{ij}-b_i.
\end{equation}

In the first variant of the problem, called the \textsl{sink formulation},
there is a distinguished {\em sink} node $t\in V$. 
The objective is to maximize  $e_t$ for pseudoflows satisfying $e_i\ge 0$ for all $i\in V-t$.

This differs from the  way the problem is usually defined in the literature  with the more restrictive $e_i=0$ for $i\in V-t$, and assuming $\ell\equiv 0$, $b\equiv 0$. However, this problem can easily be reduced to solving the sink formulation, see e.g. \cite{Shigeno07}.

The following extension has been proposed by Truemper \cite{Truemper78} and Shigeno \cite{Shigeno06}.
On each arc $ij\in E$, we are given lower and upper arc capacities $\ell,u:E\rightarrow \R$  and a monotone increasing concave function $\Gamma_{ij}:[\ell_{ij},u_{ij}]\rightarrow \R\cup\{-\infty\}$; we are also given node demands $b:V\rightarrow \R$. 
As for generalized flows, a pseudoflow is a function $f:E\rightarrow \R$ with $\ell\le f\le u$. For a pseudoflow $f$, let 
\[
e_i:=\sum_{j:ji\in E}\Gamma_{ji}(f_{ji})-\sum_{j:ij\in E}f_{ij}-b_i.
\]
In the \textsl{concave sink formulation},
we say that the pseudoflow $f$ is {\sl feasible}, if $e_i\ge 0$ for all $i\in V-t$ and $e_t>-\infty$. The objective is to maximize  $e_t$ for feasible pseudoflows.

Shigeno \cite{Shigeno06} defines this problem with
$e_i=0$ if $i\in V-t$, and $b\equiv 0$ and without explicit capacity constraints. She also discusses the version
with $e_i\ge 0$, and gives a reduction from the original version to this one. Whereas capacity constraints can be simulated by the functions $\Gamma_e$, we impose them explicitly as they will be included in the running time bounds. 
The formulation with $e_i\ge 0$ seems more natural as it gives a convex optimization problem, which is not the case for $e_i=0$.

\medskip

In the sink formulation, the node $t$ plays a distinguished role. It turns out to be more convenient  to handle all nodes equally. For this reason, we introduce another, seemingly more general version, called  the  \textsl{symmetric formulation} of both problems. Ideally, we would like to find a pseudoflow satisfying $e_i\ge 0$ for every $i\in V$. The formulation will be a relaxation of this feasibility problem, allowing
violation of the constraints, penalized by possibly different rates at different  nodes.

For each node $i\in V$ we are given a penalty factor $M_i>0$ and an auxiliary variable $\kappa_i\ge 0$. The objective is to minimize $\kappa_f=\sum_{i\in V} M_i\kappa_i$ for a pseudoflow $f$ subject to 
$e_i+\kappa_i\ge 0$ for each $i\in V$. 

The objective $\kappa_f$ is called the \textsl{excess discrepancy}. $\kappa_f=0$ means $e_i\ge 0$ for each $i\in V$. These conditions might be violated, but we have to pay penalty $M_i$ per unit  violation at $i$.

The sink version fits into this framework with $M_i=\infty$ for $i\neq t$ and $M_t=1$. 
However, it can be shown that setting finite, polynomially bounded $M_i$ values, the symmetric version returns an optimal (or sufficiently close approximate) solution to the  sink version, both for linear and for concave gain functions.
Besides the sink version, another natural setting is when $M_i=1$ for all $i\in V$, that is, maintaining $e_i\ge 0$ has the same importance for all nodes.

While the symmetric formulation could seem more general than the sink version, it can indeed be reduced to it. For an instance of the symmetric version with graph $G=(V,E)$, let us add a new sink node $t$ with an arc from $t$ to every node $i\in V$ with gain factor $1/M_i$.
Solving the sink version for this extended instance gives an optimal solution to the original problem.
The reason for introducing the symmetric formulation is its pertinence to our algorithmic purposes.}{
We define two closely related variants of the concave generalized flow problem. The first is essentially the problem
proposed by Truemper \cite{Truemper78} and Shigeno \cite{Shigeno06}.
Let $G=(V,E)$ be a directed graph. 
Let $n=|V|$, $m=|E|$, and for each node $i\in V$, let $d_i$ be the total number of incoming and outgoing arcs incident to $i$.

We are given lower and upper arc capacities $\ell,u:E\rightarrow \R$  and a monotone increasing concave function $\Gamma_{ij}:[\ell_{ij},u_{ij}]\rightarrow \R\cup\{-\infty\}$ on each arc and node demands
$b:V\rightarrow \R$.
By a {\sl pseudoflow} we mean a function  $f:E\rightarrow \R$ with $\ell\le f\le u$.
Given the pseudoflow $f$, let 
\[
e_i:=\sum_{j:ji\in E}\Gamma_{ji}(f_{ji})-\sum_{j:ij\in E}f_{ij}-b_i.
\]
In the first variant of the problem, called the \textsl{sink formulation},
there is a distinguished {\em sink} node $t\in V$.
The pseudoflow $f$ is {\sl feasible}, if $e_i\ge 0$ for all $i\in V-t$ and $e_t>-\infty$. The objective is to maximize  $e_t$ for feasible pseudoflows.

Shigeno \cite{Shigeno06} defines this problem with
$e_i=0$ if $i\in V-t$, and $b\equiv 0$ and without explicit capacity constraints. She also discusses the version
with $e_i\ge 0$, and gives a reduction from the original version to this one.
Whereas capacity constraints can be simulated by the functions $\Gamma_e$, we impose them explicitly as they will be included in the running time bounds. 
The formulation with $e_i\ge 0$ seems more natural as it gives a convex optimization problem, which is not the case for $e_i=0$.

\medskip

In the sink formulation, the node $t$ plays a distinguished role. It turns out to be more convenient to handle all nodes equally. For this reason, we introduce another, seemingly more general version, called  the  \textsl{symmetric formulation}. Ideally, we would like to find a pseudoflow satisfying $e_i\ge 0$ for every $i\in V$. The formulation will be a relaxation of this feasibility problem, allowing
violation of the constraints, penalized by possibly different rates at different  nodes.

For each node $i\in V$ we are given a penalty factor $M_i>0$ and an auxiliary variable $\kappa_i\ge 0$. The objective is to minimize $\kappa_f=\sum_{i\in V} M_i\kappa_i$ for a pseudoflow $f$ subject to 
$e_i+\kappa_i\ge 0$ for each $i\in V$. 
The objective $\kappa_f$ is called the \textsl{excess discrepancy}. $\kappa_f=0$  means $e_i\ge 0$ for each $i\in V$. These conditions might be violated, but we have to pay penalty $M_i$  per unit violation at $i$.

The sink version fits into this framework with $M_i=\infty$ for $i\neq t$ and $M_t=1$.
However, it can be shown that setting finite, polynomially bounded $M_i$ values, the symmetric version returns an optimal (or sufficiently close approximate) solution to the  sink version, both for linear and for concave gain functions.
While the symmetric formulation could seem more general than the sink version, it can indeed be reduced to it. For an instance of the symmetric version with graph $G=(V,E)$, let us add a new node $t$ with an arc from $t$ to every node $i\in V$ with gain factor $1/M_i$.
The reason for introducing the symmetric formulation is its pertinence to our algorithmic purposes.
}

\subsection{Complexity model}\label{sec:complexity}
\iftoggle{full}{The complexity setting will be different for generalized flows and concave generalized flows.
For generalized flows, we aim to find an optimal solution, while in the concave case, only an approximate
 one. For generalized flows, the gain functions are given explicitly as linear
functions, while in the concave case,}{From a complexity perspective,} the description of the functions might be infinite. To handle this difficulty, following the approach of Hochbaum and Shantikumar \cite{Hochbaum90}, we assume oracle access to the $\Gamma_{ij}$'s: our running time
estimation will give a bound on the number of necessary oracle calls. 
Two kinds of oracles are needed: {\em(i)} value oracle, returning $\Gamma_{ij}(\alpha)$ for
any $\alpha\in [\ell_{ij},u_{ij}]$; and {\em(ii)} inverse value oracle, returning a value $\beta$ with
$\alpha=\Gamma_{ij}(\beta)$ for any  $\alpha\in [\Gamma_{ij}(\ell_{ij}),\Gamma_{ij}(u_{ij})]$.

 We assume that both oracles return the exact (possibly irrational) solution, and any oracle query is done in $O(1)$ time.
Also, we assume any basic arithmetic operation is performed in $O(1)$ time, regardless to size and representation of the possibly irrational numbers. We expect that our results naturally extend to the setting with only approximate oracles and computational capacities in a straightforward manner. Notice that in an approximate sense, an inverse value oracle can be simulated by a value oracle.

By an {\sl $\varepsilon$-approximate solution} to the symmetric concave generalized flow problem we mean a feasible solution with the excess discrepancy larger than the optimum by at most $\varepsilon$. An {\sl $\varepsilon$-approximate solution} to the sink version means a solution with the objective value $e_t$  at most $\varepsilon$ less than the optimum, 
and the total violation of the inequalities $e_i\ge 0$ for $i\in V-t$ is also at most $\varepsilon$. 

\iftoggle{full}{In both cases, we}{Let us} assume that all $M_i$ values are positive integers, and let $M$ denote their maximum.

\iftoggle{full}{
For generalized flows, we assume all $\ell,u$ and $b$ are given as integers and $\gamma$ as rational numbers; let $B$ be the largest integer used in their descriptions. The running time bound will be $O(m^2(m\log B+\log M)\log n)$
for the symmetric formulation and $O(m^2(m+n\log n)\log B)$ for the sink formulation. This is the same as 
the complexity bound of the highest gain augmenting path algorithm by Goldfarb, Jin and Orlin \cite{Goldfarb97}. The best current running time bounds are
$O(m^{1.5}n^2\log B)$ using an interior point approach by Kapoor an Vaidya \cite{Kapoor96}, and   $\tilde O(m^2n\log B)$ by
Radzik \cite{Radzik04}, that is an enhanced version of \cite{Goldfarb97}.

\medskip

For the concave setting, we allow irrational capacities as well; in}{In} the complexity estimation, we will have $U$ as an upper bound on the absolute values on the  $b_i$'s, the capacities $\ell_{ij},u_{ij}$ and the $\Gamma_{ij}(\ell_{ij})$,
$\Gamma_{ij}(u_{ij})$ values. 
For each arc $ij$, let us define $r_{ij}=|\Gamma_{ij}(\ell_{ij})|$ whenever
 $\Gamma_{ij}(\ell_{ij})>-\infty$ and $r_{ij}=0$ otherwise.
Let 
\[
U=\max\left\{\max\{|b_i|: i\in V\}, \max\{|\ell_{ij}|,|u_{ij}|,|\Gamma_{ij}(u_{ij})|,r_{ij}: ij\in E\}\right\}.
\]

For the sink version, we need to introduce one further complexity parameter $U^*$ due to  difficulties arising if   $\Gamma_{ij}(\ell_{ij})=-\infty$ for certain arcs.
Let $U^*$ satisfy $U\le U^*$, and that  $e_t\le U^*$ for any pseudoflow  (it is easy to see that $U^*=d_tU$ always satisfies this property).
We also require that whenever there exists a feasible solution to the problem (that is, $e_i\ge 0$  for each $i\in V-t$ and $e_t>-\infty$), there exists one with $e_t\ge -U^*$.
If $\Gamma_{jt}(\ell_{jt})>-\infty$ for each arc $jt\in E$, then
$U^*=d_tU$ also satisfies this property. However, we allow $-\infty$ values as in the market applications we also have logarithmic gain functions.
A bound on $U^*$ \iftoggle{full}{ can be given as in Section~\ref{sec:market}. }{ is given in the full version for such cases.}

\medskip

The main result  is  as follows:
\begin{theorem}
For the symmetric formulation of the concave generalized flow problem, there exists a combinatorial algorithm that finds an $\varepsilon$-approximate solution in  $O(m(m+n\log n)\log(MUm/\varepsilon))$. 
For the sink formulation, there exists a combinatorial algorithm that finds an $\varepsilon$-approximate solution in $O(m(m+n\log n)\log(U^*m/\varepsilon))$. In both cases, the running time bound is on the number of arithmetic operations and oracle queries.
\end{theorem}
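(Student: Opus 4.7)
The plan is to design a geometric scaling algorithm in the spirit of the Fat-Path algorithm of Goldberg, Plotkin and Tardos \cite{Goldberg91}, but driven purely by capacity-scaling rather than by cycle cancellations. I would focus first on the symmetric formulation, and then deduce the sink bound via the reduction described after the definition (adding a new sink with arcs of gain $1/M_i$), so that the same algorithm and analysis specialize, after verifying the analog of the penalty parameter, to the $U^*$-dependent bound for the sink version.

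At every moment the algorithm maintains a pseudoflow $f$, node potentials $\mu\in\R^V_{>0}$, and a scaling parameter $\Delta>0$. The potentials relabel the \emph{local} linearized gains $\hat\gamma_{ij}=\mu_j\Gamma'_{ij}(f_{ij})/\mu_i$ read off from the concave functions at the current flow values, and play the usual role of dual variables. Outside the oracle calls, the algorithm never touches the $\Gamma_{ij}$'s in a nonlinear way; it uses only their derivative at a single point (for Dijkstra), together with the inverse value oracle when pushing flow along an arc. The algorithm runs for $\Theta(\log(MUm/\varepsilon))$ scaling phases: $\Delta$ starts at $\Theta(mU)$ and is halved each phase until $\Delta=O(\varepsilon/(mM))$, at which point an invariant tying the excess discrepancy to $\Delta$ forces the current pseudoflow to be $\varepsilon$-optimal.

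Within a phase we repeatedly push flow along a \emph{fat augmenting path}: a residual directed path from a node of negative excess to some other node along which at least $\Omega(\Delta)$ units of flow can be moved without hitting a capacity. The path is selected as the highest relabeled-gain path by a single Dijkstra computation using Fibonacci heaps, costing $O(m+n\log n)$. After the augmentation we reset the potentials to the Dijkstra distance labels, so that relabeled gains on arcs of the shortest-path tree stay at exactly $1$; this keeps subsequent Dijkstras well-defined and avoids any need for cycle cancellations. Every augmentation either saturates an arc, exhausts the excess at some node down to a $\Delta$-slack, or turns a deficit node feasible, so $O(m)$ augmentations per phase suffice. This gives $O(m(m+n\log n))$ work per phase and, multiplied by the number of phases, the advertised overall bound.

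The main obstacle is extending the fat-path step from linear to concave gains: the amount of flow leaving an arc is no longer a linear function of the amount entering it, so the quantity one can push along a path is the solution of a chain of nonlinear equations. This is exactly where the inverse value oracles come in: given a trial augmentation amount at the last node of the path, one inverse query per arc traces it back to the required amount at the first node, adding $O(n)$ oracle calls per augmentation, comfortably absorbed into the Dijkstra term. The complementary subtlety is progress measurement: concavity of the $\Gamma_{ij}$ ensures that the true gain along a path never exceeds its linearized gain, so the relabeled-gain optimality condition at the end of a phase gives a valid \emph{upper} bound on the current excess discrepancy, while a fat augmentation along a highest-gain path improves the potential by $\Omega(\Delta/M)$. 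Telescoping over the scaling phases produces the $\log(MUm/\varepsilon)$ factor and closes the analysis for the symmetric case; plugging this into the sink reduction, and checking that the added arcs of gain $1/M_i$ do not affect the per-phase complexity, yields the sink bound with $MU$ replaced by $U^*$.
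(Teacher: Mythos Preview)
Your high-level outline matches the paper's scheme, but two of its load-bearing ideas are missing, and without them the argument does not close.

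First, you linearize with the derivative $\hat\gamma_{ij}=\Gamma'_{ij}(f_{ij})\mu_i/\mu_j$ (your formula has the ratio inverted, but set that aside). The derivative condition $\hat\gamma_{ij}\le 1$ on all residual arcs is precisely the \emph{optimality} condition; it cannot be maintained as an invariant by a scaling algorithm that starts far from optimal and moves flow in $\Delta$-sized chunks. The paper instead uses the $\Delta$-dependent secant
\[
\theta^\mu_\Delta(ij)=\frac{\Delta\mu_i}{\Gamma_{ij}^{-1}\!\bigl(\Gamma_{ij}(f_{ij})+\Delta\mu_j\bigr)-f_{ij}},
\]
which is the unique linearization with the property that pushing exactly $\Delta$ relabeled units along an arc with $\theta^\mu_\Delta=1$ preserves $\theta^\mu_\Delta\le 1$ on that arc and on its reverse. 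This is what makes the Dijkstra-then-augment loop self-consistent. It is also what makes the between-phase repair (``\textsc{Adjust}'') work: when you halve $\Delta$, each arc with $\theta^\mu_{\Delta/2}(ij)>1$ can be fixed by moving $f_{ij}$ to $\Gamma_{ij}^{-1}(\Gamma_{ij}(f_{ij})+\tfrac{\Delta}{2}\mu_j)$, and a short calculation bounds the resulting change of relabeled flow on each endpoint by $\Delta/2$. No such uniform bound is available if your invariant is stated in terms of derivatives.

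Second, even with the right linearization, the step ``reset potentials to Dijkstra labels and move on'' creates a problem you do not address: \textsc{Adjust} may decrease $e_i$ at nodes whose label has already been raised above $1/M_i$, producing new nodes with $e_i<0$ but $\mu_i\neq 1/M_i$, which breaks $\Delta$-conservativity. The paper prevents this with a ``security reserve'': a node is declared $\Delta$-positive only if $e_i^\mu>d_i\Delta$, flow is sent only from nodes with $e_i^\mu>(d_i+1)\Delta$ to $\Delta$-nonpositive nodes, and labels are raised only on $\Delta$-positive nodes. Since \textsc{Adjust} changes each incident arc by at most $\Delta/2$ in relabeled units, the reserve absorbs all future repairs and no relabeled node ever drops below zero excess. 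This mechanism is also what gives the clean $O(m)$ augmentations per phase, via the potential $\Psi=\sum_i\lfloor\max\{e_i^\mu-(d_i+1)\Delta,0\}/\Delta\rfloor$; your ``saturates an arc or exhausts an excess'' counting does not rule out repeated saturation/desaturation. Finally, your sink reduction is in the wrong direction: to solve the sink problem one reduces it \emph{to} the symmetric problem by setting $M_t=1$, $M_i=\Theta(U^*/\varepsilon)$ for $i\neq t$, and $b_t=U^*+1$, not by adding a new sink.
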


\nottoggle{full}{
For linear generalized flows, we are interested in finding exact solutions and therefore we use a different complexity model from the one described above. We assume all $\ell,u$ and $b$ are given as integers and $\gamma$ as rational numbers; let $B$ be the largest integer used in their descriptions. We obtain a running time bound $O(m^2(m\log B+\log M)\log n)$
for the symmetric and $O(m^2(m+n\log n)\log B)$ for the sink formulation (see the full version). This is the same as 
the complexity bound of the highest gain augmenting path algorithm \cite{Goldfarb97}. The best current running time bounds are
$O(m^{1.5}n^2\log B)$ using an interior point approach  \cite{Kapoor96}, and   $\tilde O(m^2n\log B)$ \cite{Radzik04}, an enhanced version of \cite{Goldfarb97}.
}{}

\medskip

The starting point of our investigation is the \textsc{Fat-Path} algorithm \cite{Goldberg91}.
The first main contribution is the introduction of the symmetric formulation. This is a more flexible framework, and thus we will be able to entirely avoid cycle cancellation and use  excess transportation phases only. Our (linear) generalized flow algorithm \nottoggle{full}{(presented in the full version) }{}is the first generalized flow algorithm that uses a pure scaling technique, without any cycle cancellation. The key new idea here is the way `$\Delta$-positive' and `$\Delta$-negative' nodes are defined, maintaining a `security reserve' in each node that compensates for adjustments when moving from the $\Delta$-scaling phase to the $\Delta/2$-phase.

We extend the linear algorithm to the concave setting using a local linear approximation of the gain functions, following Shigeno \cite{Shigeno06}. This approximation  is motivated by  the technique of Minoux \cite{Minoux86} and Hochbaum and Shantikumar \cite{Hochbaum90} for minimum cost flows with separable convex objectives.

\section{Applications to market equilibrium \iftoggle{full}{and Nash-bargaining }{} problems}\label{sec:market-app}
Intensive research has been pursued in the last decade to develop polynomial-time combinatorial algorithms
for certain market equilibrium problems. The starting point is the algorithm for computing market clearing prices in Fisher's model with linear utilities by Devanur et al. \cite{Devanur08}, followed by a study of several variations and extensions of this model. For a survey, see
\cite[Chapter 5]{Nisan07} or \cite{Vazirani11}.

In the {\sl linear  Fisher market model}, we are given a set $B$ of buyers and a set $G$ of goods. Buyer $i$ has a budget $m_i$, and there is one divisible unit of each good to be sold.
 For each buyer $i\in B$ and good $j\in G$, $U_{ij}\ge 0$ is the utility accrued by buyer $i$ for one unit of good $j$.
Let $n=|B|+|G|$ and $m$ be the number of pairs $ij$ with $U_{ij}>0$. Let $U_{\max}=\max\{U_{ij}: i\in B, j\in G\}$ and
$R=\max\{m_i: i\in B\}$. 
An equilibrium solution consist of prices $p_i$ on the goods and an allocation $x_{ij}$, so that 
{\em (i)} all goods are sold, {\em (ii)} all money of the buyers is spent, and {\em (iii)} each buyers $i$ buys a best bundle of goods, that is, goods $j$ maximizing $U_{ij}/p_j$. 

The equilibrium solutions for linear  Fisher markets were described via a convex program by Eisenberg and Gale \cite{Eisenberg59} in 1959; the combinatorial algorithms for this problem and other models rely on the KKT-conditions for the corresponding convex programs. Exact optimal solutions can be found, since these problems admit rational optimal solutions. 
\begin{align}
\max& \sum_{i\in B}m_i\log z_i\notag\\
z_i&\le \sum_{j\in G}U_{ij}x_{ij} \quad\forall i\in B\tag{EG}\label{eq:EG}\\
\sum_{i\in B}x_{ij}&\le 1 \quad\forall j\in G \notag\\
z,x&\ge 0\notag 
\end{align}
We show that the Eisenberg-Gale convex program, along with all extensions studied so far, falls into the broader class of convex generalized flows. Moreover, in all these extension we may replace linear or piecewise linear concave functions by arbitrary concave ones, still solvable approximately by our algorithm.

For the Eisenberg-Gale program, let us define the graph $(V,E)$ with $V=B\cup G\cup \{t\}$.
Let $ji\in E$ whenever $j\in G$, $i\in B$, $U_{ij}>0$, and set $\Gamma_{ji}(\alpha)=U_{ij}\alpha$ as a linear gain function. Also, let $it\in E$ for every $i\in B$ with $\Gamma_{it}(\alpha)=m_i\log \alpha$. Finally, set $b_j=-1$ 
for $j\in G$, and $b_i=0$ for $i\in B$.  The above program describes exactly the sink version of this concave generalized flow instance with $f_{ji}=x_{ij}$ for $i\in B$, $j\in G$ and $f_{it}=z_i$. (To formally fit into the model, we may add upper capacities $u_{ji}=1$ and $u_{it}=\sum_{j\in G}U_{ji}$ without changing the set of feasible solutions.)
Hence our general algorithm gives an $\varepsilon$-approximation for this problem. 
\iftoggle{full}{In Section~\ref{sec:market},}{In the full version,}{} we show that for sufficiently small $\varepsilon$ we can transform it to an exact optimal solution.

The flexibility of the concave generalized flow model enables various extensions. For example, we can replace each linear function $U_{ji}\alpha$ by an arbitrary concave increasing function, obtaining the perfect price discrimination model of Goel and Vazirani \cite{Goel10}. They studied  piecewise linear utility functions; our model enables arbitrary functions (although a rational optimal solution does not necessarily exist anymore).

In the \textsl{Arrow-Debreu Nash bargaining (ADNB)} defined by Vazirani \cite{Vazirani11}, traders arrive to the market with initial endowments of goods, giving utility $c_i$ for player $i$. They want to  redistribute the goods to obtain higher utilities using Nash bargaining.
The  disagreement point is when everyone keeps the initial endowment, guaranteeing her $c_i\ge 0$ utility. In an optimal Nash bargaining solution
we maximize $\sum_{i\in B}\log (z_i-c_i)$ over the constraint set in (\ref{eq:EG}). 
Unlike for the linear Fisher model, equilibrium prices may not exist, corresponding to a disagreement solution.
A sophisticated two phase algorithm is given in \cite{Vazirani11}, first for deciding feasibility, then for finding the equilibrium solution. 

The convex program for nonsymmetric ADNB can be obtained from the Eisenberg-Gale program by 
modifying the first set of inequalities to $z_i\le \sum_{j\in G}U_{ij}x_{ij}-c_i$. In the formulation as a concave generalized flow, this corresponds to modifying the $b_i=0$ values for $i\in B$ to $b_i=c_i$. Hence this problem also fits into our framework. From this general perspective, it does not  seem more difficult than the linear Fisher model.

Nonsymmetric Nash-bargaining was defined by Kalai \cite{Kalai77}. For ADNB, it corresponds to maximizing
 $\sum_{i\in B}m_i\log (z_i-c_i)$ over the constraint set in (\ref{eq:EG}), for some positive coefficients $m_i$.
 The algorithm in \cite{Vazirani11} heavily relies on the assumption $m_i=1$, and does not extend to this more general setting, called \textsl{nonsymmetric ADNB}. Finding a combinatorial algorithm for this latter problem was left open in \cite{Vazirani11}.
Another open question in \cite{Vazirani11} is to devise a combinatorial algorithm for (nonsymmetric) ADNB with piecewise linear, concave utility functions.  Our result generalizes even further, for arbitrary concave utility functions, since the linear functions $U_{ij}\alpha$ can be replaced by arbitrary concave functions.

Let $C=\max c_i$.
In \iftoggle{full}{Section~\ref{sec:market}}{the full version}, we show how our algorithm can be used to find an exact solution to the nonsymmetric ADNB problem in time $O(m(m+n\log n)(n\log (nU_{\max}R)+\log C))$.
The running time bound in \cite{Vazirani11} for symmetric ADNB  $(R=1)$
is $O(n^8\log U_{\max}+n^4\log C)$. 

\medskip

Let us also remark that an alternative convex program for the linear Fisher market, given by Shmyrev \cite{Shmyrev09}, shows that it also fits into the framework of minimum-cost circulations with a separable convex cost function, and thus can be solved by the algorithms of Hochbaum and Shantikumar \cite{Hochbaum90} or
Karzanov and McCormick \cite{Karzanov97}. 
Recently, \cite{Vegh11b} gave a strongly polynomial algorithm for a class of these problems, which includes Fisher's market with linear and spending constraint utilities. 
However, this does not seem to capture perfect price discrimination or ADNB, where no alternative formulations analogous to \cite{Shmyrev09} are known.

As further applications of the concave generalized flow model, we can take single-source multiple-sink markets by Jain and Vazirani \cite{Jain10}, or  concave cost matchings studied by Jain \cite{Jain11}.

\iftoggle{full}{
\medskip 
A distinct characteristic of the Eisenberg-Gale program and its extensions is that they are rational convex programs. We may loose this property when changing to general concave spending constraint utilities. However, for the case when the existence of a rational solution is guaranteed, one would prefer finding an exact optimal solution.
Section~\ref{sec:market} addresses the question of rationality. Theorem~\ref{thm:conv} shows that under certain technical conditions, our approximation algorithm can be turned into a polynomial time algorithm for finding an exact optimal solution. We shall verify these conditions for nonsymmetric ADNB.}
%

\iftoggle{full}{
\section{Background and overview}\label{sec:background}
The minimum-cost circulation problem\footnote{We shall use the term `circulation' to distinguish from other flow problems in the paper.} is fundamental to all problems and algorithms discussed in the paper. 
We give an overview  in Section~\ref{sec:circulation}. We present the two main algorithmic paradigms, cycle cancelling and successive shortest paths along with their efficient variants.
As already revealed by early studies of the problem (e.g. \cite{Onaga67,Truemper77}), there is a deep connection between generalized flows and classical minimum-cost circulations: the dual structures are quite similar, and the generalized flow algorithms stem from the classical algorithms for minimum-cost circulations.
 In Section~\ref{sec:overview-genflow}, we continue with an overview of generalized flow algorithms, exhibiting some important  ideas and their relation to minimum-cost circulations. 
We also exhibit here the main ideas of our algorithm for the linear case.
Section~\ref{sec:overview-convex} considers a different convex extension of minimum-cost circulations, when the linear cost function is replaced by a separable convex one. We show how the two main paradigms extend to this case, using different approximation strategies of the nonlinear functions. Finally in Section~\ref{sec:overview-concavegen} we consider the concave generalized flow problem, discuss the algorithm by Shigeno \cite{Shigeno06} and its relation to algorithmic ideas of the previous problems. We emphasize some difficulties and outline the ideas of our solution.

\subsection{Minimum-cost flows: cycle cancelling and successive shortest paths}\label{sec:circulation}
In the  minimum-cost circulation problem, given is  a directed graph $G=(V,E)$ with lower and upper arc capacities $\ell,u:E\rightarrow \R\cup\{\infty\}$, costs 
$c:E\rightarrow \R$ on the arcs and node demands $b:V\rightarrow \R$ with $\sum_{i\in V} b_i=0$.
Let 
\[
e_i=\sum_{j:ji\in E}f_{ji}-\sum_{j:ij\in E}f_{ij}-b_i.
\]
$f:E\rightarrow \R$ with $\ell\le f\le u$ is called a feasible circulation, if $e_i=0$ for all $i\in V$.
The objective is to minimize $c^Tf$ for feasible circulations.

Linear programming duality  provides the following characterization of  optimality.
For a feasible circulation $f$, let us define the residual graph $G_f=(V,E_f)$ with $ij\in E_f$ if
$ij\in E$ and $f_{ij}<u_{ij}$, or if $ji\in E$ and $\ell_{ji}<f_{ji}$. The first type of arcs are called {\sl forward arcs} and
are assigned the original cost $c_{ij}$, while the latter arcs are {\sl backward arcs} assigned cost $-c_{ji}$.
For notational convenience, we will use $f_{ij}=-f_{ji}$ on backward arcs.
Then $f$ is optimal if and only if $E_f$ contains no negative cost cycles. This is further equivalent to the existence of  a feasible potential $\pi:V\rightarrow \R$ with
$\pi_j-\pi_i\le c_{ij}$ for all arcs $ij\in E_f$.

Two main frameworks for minimum-cost flow algorithms are as follows. In the \textsl{cycle cancelling framework} (see e.g. \cite[Chapter 9.6]{amo}), we maintain a feasible circulation in each phase, with strictly increasing objective values. If the current solution is not optimal, the above conditions guarantee a negative cost cycle in the residual graph; such a cycle can be found efficiently. Sending some flow around this cycle decreases the objective and maintains feasibility, providing the next solution.

In the \textsl{successive shortest path framework} (see e.g. \cite[Chapter 9.7]{amo}), we waive feasibility by allowing $e_i>0$ or $e_i<0$; we call such nodes {\sl positive} and {\sl negative}, respectively. However,  we maintain dual optimality in the sense that the residual graph of the current pseudoflow contains no negative cost cycles in any iteration (or equivalently, admits a feasible potential). If there exists some positive and negative nodes, we send some flow from a positive node to a negative one using a minimum-cost path in the residual graph. This maintains dual optimality, and decreases the total $e_i$ value of positive nodes.

For rational input data, both these algorithms are finite, but may take an exponential number of steps (and might not even terminate for irrational input data). Nevertheless, using (explicit or implicit) scaling techniques, both can be implemented to run in polynomial time, and even in strongly polynomial time. 

A strongly polynomial version of the cycle cancellation algorithm is due to Goldberg and Tarjan \cite{Goldberg89}. In each step, a minimum mean cycle is chosen. In dual terms, we relax primal-dual optimality conditions  to $\pi_j-\pi_i\le c_{ij}+\varepsilon$ for  $ij\in E_f$, with $\varepsilon$ being equal to the negative of the minimum mean cycle value, decreasing exponentially over time.

Polynomial implementations of the successive shortest path algorithm can be obtained by capacity scaling; the most efficient, strongly polynomial such algorithm is due to Orlin \cite{Orlin93}.
We describe here a basic capacity scaling framework by Edmonds and Karp \cite{Edmonds72}. Instead of the residual graph $E_f$, we consider the $\Delta$-residual graph $E_f^\Delta$ consisting of arcs with residual capacity at least $\Delta$ (the residual capacity is  $u_{ij}-f_{ij}$ on a forward arc $ij$ and  $f_{ji}-\ell_{ji}$ on a backward arc). The algorithm consists of $\Delta$-scaling phases, with $\Delta$ decreasing by a factor of 2 between two phases.
In a $\Delta$-phase, we iteratively send $\Delta$ units of flow from a positive node $s$ with $e_s\ge \Delta$ to a negative node $t$ with
$e_t\le -\Delta$ on a minimum-cost path in $E_f^\Delta$. 
The $\Delta$-phase finishes when this is no longer possible, which means the total positive excess is at most $n\Delta$.

In the $\Delta$-phase, $\pi_j-\pi_i\le c_{ij}$ is maintained on arcs of the $\Delta$-residual graph.
When moving to the $\Delta/2$ phase, this might not hold anymore, since 
the $\Delta/2$-residual graph contains more arcs,  namely the ones with residual capacity between $\Delta/2$ and $\Delta$. At the beginning of the next phase, we 
saturate all these arcs, thereby increasing the positive excess to at most $(2n+m)\Delta/2$. This guarantees that the next phase will consist of at most $(2n+m)$ path
augmentations.

\subsection{Linear generalized flows -- cycle cancelling and excess transportation}\label{sec:overview-genflow}
In what follows, we consider the sink version of the generalized flow problem, with sink $t\in V$.
For a pseudoflow $f:E\rightarrow \R$, let us define the residual network $G_f=(V,E_f)$ as for circulations, with gain factor $\gamma_{ij}=1/\gamma_{ji}$ on backward arcs.
Consider a cycle $C$ in $E_f$. We can modify $f$ by sending some flow around $C$ from some $i\in V(C)$.
This leaves $e_{j}$ unchanged if $j\neq i$, and increases $e_{i}$ by $(\gamma(C)-1)\alpha$, where $\gamma(C)=\Pi_{e\in C}\gamma_e$. If $\gamma(C)>1$ then we call $C$ a \textsl{ flow-generating cycle}, while
for $\gamma(C)<1$, a \textsl{ flow-absorbing cycle}, since we can generate or eliminate excess at an arbitrary node $i\in C$, respectively. The amount of flow that can be generated is of course bounded by the capacity constraints.

To augment the excess of the sink $t$, we have to send the excess generated at a flow-generating cycle $C$ to  $t$. Hence we call a pair $(C,P)$ a \textsl{ generalized augmenting path (GAP)}, if {\em(a)}  $C$ is a flow-generating cycle, $i\in V(C)$, and $P$ is a path in $E_f$ from $i$ to $t$; or {\em (b)} $C=\emptyset$, and $P$ is a path in $E_f$ from some node $i$ with $e_{i}>0$ to $t$. 
Clearly, an optimal solution $f$ may admit no GAPs.
This is indeed an equivalence: $f$ is optimal if and only if no GAP exists.

The gain factors $\gamma_e$ play a role analogous to the costs $c_e$ for minimum-cost circulations. Indeed,
$C$ is a flow generating cycle if and only if it is a negative cost cycle for the cost function $c_e=-\log \gamma_e$.
The dual structure for generalized flows is also analogous to potentials. 
Let us call $\mu:V\rightarrow \R_{>0}\cup\{\infty\}$ with with $\mu_t=1$  a label function. 
\textsl{Relabeling} the pseudoflow $f$ by $\mu$ means dividing the flow on each arc $ij$ going out from $i$ by $\mu_i$. We get a problem equivalent to the original by replacing each arc gain by $\gamma^\mu_{ij}=\gamma_{ij} \mu_i/\mu_j$.
The labeling is called \textsl{conservative} if $\gamma_{ij}^\mu\le 1$ for all $ij\in E_f$, that is, no arc may increase the relabeled flow.
%
%

Assume we have a conservative labeling $\mu$ so that $e_i=0$ whenever $i\in V-t$, $\mu_i<\infty$.
 Let $V'\subseteq V$ denote the set of nodes from which there exists a directed path to $t$. It follows that {\em(i)} $\mu_i<\infty$ for all $i\in V'$, and {\em(ii)} $V'$ contains no flow-generating cycles. Consequently, given a conservative labeling, no GAP can exist, and the converse can also be shown to hold.
Note that on $V'$, $\pi_i=-\log \mu_i$ is a feasible potential for $c_e=-\log \gamma_e$ if and only if $\mu$ is conservative.

Based on this correspondence, minimum-cost circulation algorithms can be directly applied for generalized flows as a subroutine for eliminating all flow-generating cycles. This can be indeed implemented in strongly polynomial time, see \cite{Radzik93,Shigeno07}. The novel difficulty for generalized flows is how  to transport the 
generated excess from various nodes of the graph to the sink $t$. In the algorithm of Onaga \cite{Onaga67},
flow is transported iteratively on highest gain augmenting paths, that is, from $i\in V$ with $e_i>0$ on an $i-t$ path $P$ that maximizes $\gamma(P)=\Pi_{e\in P}\gamma_e$. It can be shown that using such paths does not
create any new flow generating cycles. Thus after having eliminated all type {\em(a)} GAPs, we only have to take care of type {\em(b)}. Unfortunately, this algorithm may run in exponential time (or may not even terminate for irrational inputs). This is due to the analogy between Onaga's algorithm and the successive shortest path algorithm -- observe that a highest gain path is a minimum-cost path for $-\log \gamma_e$. 

The first algorithms to overcome this difficulty and thus establish polynomial running time bounds were the two given by  Goldberg, Plotkin and Tardos \cite{Goldberg91}. One of them,   \textsc{Fat-Path}, uses a method analogous to capacity scaling. A path $P$ in $E_f$ from a node $i$ to $t$ is called $\Delta$-fat,
if assuming unlimited excess at $i$, it is possible to send enough flow along $P$ from $i$ to $t$ so that $e_t$ increases by $\Delta$. 

The algorithm consists of $\Delta$-phases, with $\Delta$ decreasing by a factor of 2 for the next phase. In the $\Delta$-phase, we first cancel all flow generating cycles. Then, from nodes $i$ with $e_i>0$, we transport flow on highest gain ones among the $\Delta$-fat paths. This might create new flow-generating cycles to be cancelled in the next phase. Nevertheless, it can be shown that at the beginning of a $\Delta$-phase, $e^*_t-e_t\le 2(n+m)\Delta$ for the optimum value $e^*_t$ and thus the number of path augmentations in each $\Delta$-phase can be bounded by $2(n+m)$.
Arriving at a sufficiently small value of $\Delta$, it is possible to obtain an optimal solution by a single maximum flow computation.

The basic framework of \cite{Onaga67} and of \textsc{Fat-Path}, namely using different subroutines for eliminating flow-generating cycles and for transporting excess to the sink has been adopted by most subsequent algorithms,
e.g. \cite{Goldfarb96,Goldfarb97,Tardos98,Fleischer02,Radzik04}. Among them, \cite{Goldfarb97} is an almost purely scaling polynomial time algorithm, 
but it still needs the an initial cycle-cancelling as in \cite{Onaga67}.

\medskip

In contrast, our algorithm does not need any cycle-cancelling, and adapts \textsc{Fat-Path} to a pure successive shortest paths framework.
The successive shortest paths algorithms for minimum-cost circulations start with an infeasible pseudoflow, having both positive and negative nodes. To use an analogous method for generalized flows, we have to give up the standard framework of algorithms where $e_i\ge 0$ is always maintained for all $i\in V-t$. This is the reason why we use the more flexible symmetric model: we start with possibly several nodes having $e_i<0$, and our aim is to eliminate them. An important property of the algorithm is that we always have to maintain $\mu_i=1/M_i$ for $e_i<0$; for this reason we shall avoid creating new negative nodes.

Similarly to \textsc{Fat-Path}, we use a scaling algorithm.
In the $\Delta$-phase, we consider the residual graph restricted to $\Delta$-fat arcs, arcs that may participate in a highest gain $\Delta$-fat-path, and maintain a conservative labeling $\mu$ with $\gamma_{ij}^\mu\le 1$ on the $\Delta$-fat arcs. When moving to the $\Delta/2$-phase, this condition may get violated due to $\Delta/2$-fat arcs that were not $\Delta$-fat.
Analogously to the Edmonds-Karp algorithm, we  modify the flow by saturating each violated arc and thereby restitute dual feasibility. However, these changes may create new negative nodes and thus violate the condition $\mu_i=1/M_i$ for $e_i<0$ we must  maintain.


We resolve this difficulty by maintaing a `security reserve' of $d_i\Delta\mu_i$ in each node $i$ ($d_i$ is the number of incident arcs). This gives an upper bound on the total change caused by restoring feasibility of incident arcs in all subsequent phases. 
We call a node $\Delta$-positive  if $e_i>d_i\Delta\mu_i$, $\Delta$-negative if $e_i<d_i\Delta\mu_i$ and $\Delta$-neutral if $e_i=d_i\Delta\mu_i$.
$\Delta$-negative nodes may become negative ($e_i<0$) at a later phase, and therefore we maintain the stronger condition
$\mu_i=1/M_i$ for them. We send flow from $\Delta$-positive nodes to $\Delta$-negative and $\Delta$-neutral ones. Thereby we treat some nodes
with $e_i>0$ as sinks and increase their excess further; however, as $\Delta$ decreases, such nodes may gradually become sources.

For the sink version,   described in Section~\ref{sec:sink}, we perform this algorithm with $M_i=B^n+1$ if $i\neq t$ and $M_t=1$. We shall show that this returns an optimal solution. We remark that the highest gain path algorithm \cite{Goldfarb97} can also be modified to a purely scaling algorithm using the symmetric formulation,
that enables to start from an arbitrary non-feasible solution and thereby eliminate the initial cycle-cancelling phase.

\subsection{Minimum-cost circulations with separable convex costs}\label{sec:overview-convex}
A natural and well-studied nonlinear extension of minimum-cost circulations is replacing each arc cost $c_e$ by
a convex function $C_e$. 
We are given a directed graph $G=(V,E)$ with  lower and upper arc capacities $\ell,u:E\rightarrow \R$, convex cost functions
$C_e:[\ell_e,u_e]\rightarrow \R$ on the arcs, and node demands $b:V\rightarrow \R$ with $\sum_{i\in V} b_i=0$.
Our aim is to minimize $\sum_{e\in E} C_e(f_e)$ for feasible circulations $f$. This is a widely applicable  framework, see \cite[Chapter 14]{amo}.

This is a convex optimization problem, and optimality can be described by the KKT conditions.
Let $C_{e}^+(\alpha)$ denote the left derivative of $C_e$. As before, for a feasible circulation $f$ define the auxiliary graph $G_f=(V,E_f)$.
Let $C_{ij}$ denote the original function if $ij$ is  a forward arc and let $C_{ji}(\alpha)=C_{ij}(-\alpha)$ on backward arcs. 
%
%
 $f$ is optimal if and only if there exists no cycle $C$ in $E_f$ with $\sum_{e\in C} C_e^+(f_e)<0$.
In dual terms, $f$ is optimal if and only if there exists a potential $\pi:V\rightarrow \R$ with $\pi_j-\pi_i\le C^+_{ij}(f_{ij})$ for all $ij\in E_f$.

Both the minimum mean cycle cancellation and the capacity scaling algorithms can be naturally extended to this problem with polynomial (but not strongly polynomial) running time bounds. However, these two approaches relax the optimality conditions in fundamentally different ways. The results \cite{Karzanov97} and \cite{Hochbaum90} address much more general problems: minimizing convex objectives over polyhedra given by matrices with bounded subdeterminants. 

Cycle cancellation was adapted by Karzanov and McCormick \cite{Karzanov97}. The algorithm subsequently cancels cycles in $E_f$ with minimum mean value respect to the $C^+_{e}(f_e)$ values. The only difference is that the flow augmentation around such a cycle might be less than what residual capacities would enable, in order to maintain
\begin{equation}\label{eq:KMc}
\pi_j-\pi_i\le C^+_{ij}(f_{ij})+\varepsilon \quad \forall ij\in E_f
\end{equation}
for the current potential $\pi$ and scaling parameter $\varepsilon$.

For capacity scaling, Hochbaum and Shanthikumar \cite{Hochbaum90} developed the following framework based on previous work of Minoux \cite{Minoux86} (see also \cite[Chapter 14.5]{amo}).
The algorithm consists of $\Delta$-phases. In the $\Delta$-phase, each $C_e$ is linearized with granularity $\Delta$. 

Let $E_f^\Delta$ denote the $\Delta$-residual network.
We will maintain $\Delta$-optimality, that is, there exists a potential $\pi$ such that
\begin{equation}\label{eq:Minoux}
\pi_j-\pi_i\le \frac{C_{ij}(f_{ij}+\Delta)-C_{ij}(f_{ij})}\Delta  \quad \forall ij\in E_f^\Delta.
%
\end{equation}
Let $\theta_\Delta({ij})$ denote the quantity on the right hand side.
If we increase flow on some arc $ij$ by $\Delta$ on some $ij$ for which equality holds, the resulting
pseudoflow will remain $\Delta$-optimal. We will always send $\Delta$ units of flows from a node $s$ with $e_s>0$ to a node $t$ with $e_t<0$ on a minimum-cost path in $E_f^\Delta$ with respect to  $\theta_\Delta({ij})$.
By the above observation, this maintains $\Delta$-optimality.

When moving to the next scaling phase replacing $\Delta$ by $\Delta/2$, we change to a better linear approximation of the $C_e$'s. Therefore, (\ref{eq:Minoux}) may get violated not only because $E_f^{\Delta/2}$ contains more arcs than $E_f^\Delta$ does, but also
on arcs already included in $E_f^\Delta$.
Yet it turns out  that modifying each $f_{ij}$ value by at most $\Delta/2$, (\ref{eq:Minoux}) can be re-established. This creates new (positive and negative) excesses of total at most $m\Delta$.

Recently, \cite{Vegh11b}  gave a strongly polynomial capacity scaling algorithm for a class of objective functions, including convex quadratic objectives, and Fisher's market with linear and with spending constraint utilities (based on Shmyrev's formulation).
Let us also remark that the results \cite{Karzanov97} and \cite{Hochbaum90} actually address much more general problems: minimizing convex objectives over polyhedra given by matrices with bounded subdeterminants. The framework of \cite{Hochbaum90} needs weaker assumptions on the objective function and on the oracle.

\subsection{Concave generalized flows}\label{sec:overview-concavegen}
As we have seen, both the cycle cancelling and capacity scaling approaches for minimum-cost circulations naturally extend to separable concave cost functions.
Similarly, our algorithm in Section~\ref{sec:concave} for concave gain functions is a natural
extension of the generalized flow algorithm in Section~\ref{sec:gen-alg}.

Nevertheless, we were not able to extend any previous generalized flow algorithm for concave gains.
Shigeno's \cite{Shigeno06} approach was to extend the \textsc{Fat-Path} algorithm of \cite{Goldberg91}. 
However, \cite{Shigeno06}  obtains 
polynomial running time bounds only for restricted classes of gain functions. The algorithm consists of two procedures applied alternately similar to \textsc{Fat-Path}: a cycle cancellation phase to generate excess on cycles with positive gains, and a path augmentation phase to transport new excess to the sink in chunks of $\Delta$. For both phases, previous methods naturally extend:
cycle cancelling is performed analogously to \cite{Karzanov97}, whereas 
path augmentation to \cite{Hochbaum90}. Unfortunately, fitting the two different methods together is problematic and does not yield polynomial running time.

The main reason is that the two approaches rely on fundamentally different kinds of approximation of the nonlinear gain functions. While for generalized flows, a cycle cancelling phase completely eliminates flow generating cycles, here we can only get an approximate solution allowing some small positive gain on cycles at termination.
In terms of residual arcs, we terminate with a condition analogous to (\ref{eq:KMc}) for concave cost flows. However, the path augmentation phase needs a linearization of the gain functions analogous to (\ref{eq:Minoux}). Notice that for $\varepsilon=0$, 
(\ref{eq:KMc}) implies (\ref{eq:Minoux}) for arbitrary $\Delta$. Yet if some small error $\varepsilon>0$ is allowed, then no general guarantee can be given so that (\ref{eq:Minoux}) hold for a certain value  $\Delta$. 

For this reason, our goal was to avoid using the two different frameworks simultaneously.
%
It turns out that our scaling based linear generalized flow algorithm (outlined in Section~\ref{sec:overview-genflow}) smoothly extends to this general setting. We use the local linearization 
 $\theta_{\Delta}^\mu(ij)$ of $\Gamma_{ij}$ used by Shigeno, an analogue of (\ref{eq:Minoux}). 
 In the $\Delta$-phase, we consider the graph of $\Delta$-fat arcs, and maintain $\theta_{\Delta}^\mu(ij)\le 1$ on them.

When moving from a $\Delta$-phase to a $\Delta/2$-phase in the linear algorithm, the only reason for infeasibility is due to  $\Delta/2$-fat arcs that were not $\Delta$-fat. In contrast, feasibility can be violated on $\Delta$-fat arcs as well, as  $\theta_{\Delta}^\mu(ij)\le 1<\theta_{\Delta/2}^\mu(ij)$ may happen
due to the finer linear approximation of the gain functions in the $\Delta/2$-phase. Fortunately, feasibility can be restored in this case as well, by changing the flow on each arc by a small amount.

\section{Linear generalized flow algorithm}\label{sec:gen-alg}
In this section, we investigate the symmetric formulation of the generalized flow problem.
In describing the optimality conditions, we also allow infinite $M_i$ values to 
incorporate the sink version.
However, in the algorithmic parts, we restrict ourselves to finite $M_i$ values.

We describe optimality conditions in Section~\ref{sec:genflow-opt}.
 Notion and results here are well-known in the generalized flow literature, thus we do not include references. Section~\ref{sec:cons-label} introduces $\Delta$-fat arcs and $\Delta$-conservative labelings, the feasibility framework in the $\Delta$-phase.
Section~\ref{sec:delta-canon} 
 describes the subroutine \textsc{Tighten-label}, an adaptation of  Dijkstra's algorithm that finds highest gain $\Delta$-fat paths. The main algorithm is exhibited in Section~\ref{sec:genalg-descr}.  Analysis and running time bounds are given in Section~\ref{sec:genflow-analysis}.
The final step of the algorithm is deferred to Section~\ref{sec:move-to-opt}, where  we show that when the total relabeled excess is sufficiently small, an optimal solution can be found by a single maximum flow computation.
\subsection{Optimality conditions}\label{sec:genflow-opt}
Let $G=(V,E)$ be a network with lower and upper capacities $\ell,u$ and node demands $b$.
In the sequel, we assume all lower capacities are 0. Every problem instance of symmetric generalized flows can be simply transformed to an equivalent one in this form in the following way. For each arc $ij\in E$,
increase the node demand $b_i$ by $\ell_{ij}$ and decrease $b_j$ by $-\gamma_{ij}\ell_{ij}$. Modify the lower capacity of $ij$ to 0 and the upper to 
$u_{ij}-\ell_{ij}$.

For a pseudoflow $f$, we define the residual network  $G_f=(V,E_f)$ as follows.
Let $ij\in E_f$, if $ij\in E$ and $f_{ij}<u_{ij}$ or if $ji\in E$ and $f_{ji}>0$.
The first type of arcs are called \textsl{forward arcs}, while the second type are the 
\textsl{backward arcs}.  
For a forward arc $ij$, let $\gamma_{ij}$ be the same as in the original graph. For
a backward arc $ji$, let $\gamma_{ji}=1/\gamma_{ij}$.
Also, we define $f_{ji}=-\gamma_{ij}f_{ij}$ for every backward arc $ji\in E_f$.
For backward arcs, the  capacities are $\ell_{ji}=-\gamma_{ji}u_{ji}$ and $u_{ji}=0$.
By increasing (decreasing) $f_{ji}$ by $\alpha$, we mean decreasing (increasing) $f_{ij}$ by $\alpha/\gamma_{ij}$.

Let $P=i_0\ldots i_k$ be a walk in the auxiliary graph $E_f$. By sending $\alpha$ units of flow along $P$, we mean  increasing each $f_{i_hi_{h+1}}$ by $\alpha\Pi_{0\le t<h} \gamma_{i_ti_{t+1}}$.
We assume $\alpha$ is chosen small enough so that no capacity gets violated. 
 Note that this decreases $e_{i_0}$ by $\alpha$, increases $e_{i_k}$ by 
$\alpha\Pi_{0\le t<k} \gamma_{i_ti_{t+1}}$, and leaves the other $e_i$ values unchanged. 

Let $C=i_0\ldots i_{k-1}i_0$ be a cycle in $E_f$.
Sending $\alpha$ units of flow on $C$ from $i_0$ modifies only $e_{i_0}$,
increasing by $(\gamma(C)-1)\alpha$  for $\gamma(C)=\Pi_{e\in C} \gamma_{e}$.
$C$  is called a \textsl{flow generating cycle} if $\gamma(C)>1$. On such a cycle
for any choice of $i_0\in V(C)$, we can create an excess of  $(\gamma(C)-1)\alpha$  by sending $\alpha$ units around  (assuming that $\alpha$ is sufficiently small so that no capacity constraints are violated).

The pair $(C,P)$ is called a \textsl{generalized augmenting path (GAP)} in the following cases:
\begin{enumerate}[(a)]
\item $C$ is a flow generating cycle, $i_0\in V(C)$, $t\in V$ is a node with $e_t<0$, and
 $P$ is a path in $E_f$ from $i_0$ to $t$ ($i_0=t$, $P=\emptyset$ is possible);
\item  
$C=\emptyset$, and
$P$ is a path between two nodes $s$ and $t$ with $e_s>0$, $e_t<0$;
\item $C=\emptyset$, and
$P$ is a path between  $s$ and $t$ with $e_s\le 0$, $e_t<0$ 
and $\gamma(P)=\Pi_{e\in C} \gamma_{e}>M_s/M_t$.
\end{enumerate}

\begin{lemma}\label{lem:opt-no-GAP}
If $f$ is an optimal solution, then no GAP exists.
\end{lemma}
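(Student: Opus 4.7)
The plan is to prove the contrapositive: if a GAP exists in $E_f$, then $f$ is not optimal, by exhibiting for each of the three GAP types a small perturbation of $f$ that strictly decreases the excess discrepancy $\kappa_f = \sum_{i\in V} M_i\kappa_i$, where $\kappa_i = \max(0,-e_i)$ at an optimum. The key smoothness observation is that for sufficiently small $\alpha>0$, if a node $j$ has $e_j<0$ (resp.\ $e_j>0$) and we change $e_j$ by $O(\alpha)$, then $e_j$ stays negative (resp.\ positive), so $\kappa_j$ changes linearly in $\alpha$; meanwhile nodes outside the support of the augmentation are untouched.

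For type (a), take a flow-generating cycle $C$ with $i_0\in V(C)$ and a path $P$ from $i_0$ to $t$ with $e_t<0$. Pick $\alpha>0$ small enough that sending $\alpha$ around $C$ and then $(\gamma(C)-1)\alpha$ along $P$ from $i_0$ violates no capacity bound on $C\cup P$ and keeps $e_t + (\gamma(C)-1)\gamma(P)\alpha < 0$. All $e_j$ for $j\ne t$ are unchanged (the cycle augmentation restores $e_{i_0}$ when the generated excess is then forwarded on $P$), while $\kappa_t$ decreases by $M_t(\gamma(C)-1)\gamma(P)\alpha > 0$. Hence $\kappa_f$ strictly decreases.

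For type (b), pick $s,t$ with $e_s>0$, $e_t<0$ connected by $P$. Sending $\alpha$ along $P$ for $\alpha$ small enough that both $e_s-\alpha>0$ and $e_t+\gamma(P)\alpha<0$ leaves $\kappa_s=0$ untouched while $\kappa_t$ drops by $\gamma(P)\alpha$, so $\kappa_f$ decreases by $M_t\gamma(P)\alpha>0$. For type (c), $e_s\le 0$, $e_t<0$, and $\gamma(P)>M_s/M_t$. Choose $\alpha>0$ small enough that $e_t+\gamma(P)\alpha<0$ (and capacities on $P$ are respected). Then $\kappa_s$ increases by exactly $\alpha$ (since $e_s-\alpha<0$) and $\kappa_t$ decreases by $\gamma(P)\alpha$, so the net change in $\kappa_f$ is
\[
\alpha\bigl(M_s - M_t\gamma(P)\bigr) < 0
\]
by the hypothesis $M_t\gamma(P)>M_s$. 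In all three cases we have constructed a feasible pseudoflow $f'$ with $\kappa_{f'}<\kappa_f$, contradicting optimality.

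The only delicate point is verifying that the modifications stay within the pseudoflow range $\ell\le f\le u$ and that the targeted $e_j$'s keep their signs; both hold by choosing $\alpha$ strictly smaller than the residual capacities along $C\cup P$ (which are positive since every arc lies in $E_f$) and than $|e_t|/\gamma(P)$ in cases (b) and (c), and similarly than $|e_t|/((\gamma(C)-1)\gamma(P))$ in case (a) as well as $e_s/1$ in case (b). Since $M_i>0$ for all $i$ (finite or infinite; the $M_i=\infty$ situation only forbids $e_i<0$, which is exactly the sink setup handled uniformly by the same argument), the strict decrease in $\kappa_f$ rules out optimality. This completes the proof.
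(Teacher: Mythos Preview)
Your proof is correct and follows essentially the same approach as the paper's own proof: for each GAP type you perturb the flow by a small $\alpha$ along the cycle and/or path and compute the resulting strict decrease in $\kappa_f$. Your version is more explicit about how small $\alpha$ must be and about which $e_j$'s change, but the underlying argument is identical.
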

\begin{proof}
In case {\em(a)}, we can send some $\alpha>0$ units of flow around $C$ from $i_0$, and then send the generated $(\gamma(C)-1)\alpha$ excess from $i_0$ to $t$ along $P$. For sufficiently small positive value of $\alpha$,
this is possible without violating the capacity constraints and it decreases the excess discrepancy.
In case {\em(b)}, we can decrease the excess discrepancy at $t$ while only decreasing a positive excess at $s$.
In case {\em(c)}, although   $M_s\kappa_{s}$ increases, $M_t\kappa_t$ decreases by a larger amount.
\end{proof}

\medskip
The dual description of an optimal solution is in terms of \textsl{relabelings} with a
label function $\mu: V\rightarrow \R_{>0}\cup\{\infty\}$.
For each node $i\in V$, let us  rescale the flow on all arcs $ij\in E$ by $\mu_i$: let $f^\mu_{ij}=f_{ij}/\mu_i$. We
get a problem equivalent to the original one with relabeled gains
 $\gamma_{ij}^\mu=\gamma_{ij}\mu_i/\mu_j$.
Accordingly, the relabeled demands, excesses, and capacities are
$b^\mu_i=b_i/\mu_i$, $e^\mu_i=e_i/\mu_i$, and $u_{ij}^\mu=u_{ij}/\mu_i$.
A relabeling is \textsl{conservative}, if for any residual arc $ij\in E_f$,
 $\gamma^{\mu}_{ij}\le 1$, that is, no arc may increase the relabeled flow.
Furthermore, for each $i\in V$, $\mu_i\ge 1/M_i$ is required and equality must hold whenever $e_i<0$. 

We use the conventions $\infty\cdot 0=0$ and $\infty/\infty=0$. Accordingly, if $\mu_i=\infty$, we define $b^\mu_i=e^\mu_i=0$, and 
$\gamma^{\mu}_{ji}=0$ for all arcs $ji\in E_f$. If $ij\in E_f$, $\mu_i=\infty$ and $\mu_j<\infty$,
then $\gamma^{\mu}_{ij}=\infty$.
Consequently, if $\mu$ is conservative, then $\mu_i<\infty$ for any $i\in V$ for which  there exists a path in $E_f$ from $i$ to any node $t\in V$ with $e_t<0$.
Also, if $\mu$ is conservative, there exists no flow generating cycle on the node set $\{i: \mu_i<\infty\}$. This is since for a cycle $C$,
$\gamma(C)=\Pi_{ij\in C} \gamma_{ij}=\Pi_{ij\in C}\gamma^\mu_{ij}$.

\begin{theorem}\label{thm:genflow-opt}
For a pseudoflow $f$, the following are equivalent.
\begin{enumerate}[(i)]
\item $f$ is an optimal solution to the symmetric generalized flow problem.
\item $E_f$ contains no generalized augmenting paths.
\item There exists a conservative relabeling $\mu$ with  $e_i=0$ whenever $1/M_i<\mu_i<\infty$.
\end{enumerate}
\end{theorem}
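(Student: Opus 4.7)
The plan is to prove the cyclic chain of implications $(i) \Rightarrow (ii) \Rightarrow (iii) \Rightarrow (i)$. The first implication is Lemma~\ref{lem:opt-no-GAP}, already established.

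For $(ii) \Rightarrow (iii)$, I will construct an explicit canonical labeling from residual highest-gain paths to ``negative'' nodes. Let $T := \{i \in V : e_i < 0\}$ and define
\[
\mu_i := \inf\bigl\{(M_t\,\gamma(P))^{-1} : t \in T,\ P \text{ a path in } E_f \text{ from } i \text{ to } t\bigr\},
\]
with $\mu_i := \infty$ when no such path exists. Several verifications close this step. Positivity $\mu_i > 0$: if the infimum approached zero, residual paths to $T$ would have unboundedly large $\gamma(P)$, which forces a flow-generating cycle reachable to $T$ and hence a type-(a) GAP. The bound $\mu_i \ge 1/M_i$: when $e_i > 0$, absence of type-(b) GAPs yields no residual path to $T$ at all, so $\mu_i = \infty$; when $e_i \le 0$, absence of type-(c) GAPs gives $\gamma(P) \le M_i/M_t$ on every counted path, hence $(M_t\gamma(P))^{-1} \ge 1/M_i$. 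For $i \in T$, the empty path forces $\mu_i = 1/M_i$ exactly, which is the equality demanded by conservativeness. Conservativeness $\gamma_{ij}\mu_i \le \mu_j$ on every $ij \in E_f$ is immediate by prepending $ij$ to a witnessing $j$-to-$T$ path. Finally, the condition $e_i = 0$ when $1/M_i < \mu_i < \infty$ follows because $e_i < 0$ would force $\mu_i = 1/M_i$ (empty path) and $e_i > 0$ would force $\mu_i = \infty$ (no path), contradicting the strict inequalities.

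For $(iii) \Rightarrow (i)$, set $y_i := 1/\mu_i \in [0, M_i]$ and argue by weak duality. A case analysis on each original arc $ij \in E$ according to whether $f_{ij}$ sits at $0$, at $u_{ij}$, or strictly between (invoking conservativeness on whichever residual arcs exist) yields the sign identity
\[
(\gamma_{ij}y_j - y_i)(f_{ij} - f'_{ij}) \ge 0
\]
for every pseudoflow $f'$; summing this and reorganizing the sum gives the master inequality $\sum_i y_i(e_i - e'_i) \ge 0$. For any feasible $(f',\kappa')$ the bounds $y_i \le M_i$ and $\kappa'_i \ge \max(0,-e'_i)$ then chain as $\sum M_i\kappa'_i \ge \sum y_i\kappa'_i \ge -\sum y_i e'_i \ge -\sum y_i e_i$. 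To close the cycle I show $-\sum y_i e_i = \sum M_i\kappa_i$ termwise: when $e_i<0$, $y_i = M_i$ and $\kappa_i = -e_i$; when $e_i=0$, both terms vanish; when $e_i>0$, we have $\kappa_i = 0$ and $y_i = 0$.

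The main obstacle lies in this last termwise tightness at $e_i > 0$ nodes: condition $(iii)$ directly only restricts $\mu_i$ to $\{1/M_i, \infty\}$ there, whereas the duality argument needs $\mu_i = \infty$. I handle this by replacing an arbitrary witness $\mu$ of $(iii)$ with a refinement $\mu^*$ obtained by setting $\mu^*_i := \infty$ on every node with no residual path to $T$---a set that, under $(iii)$, contains all nodes with $e_i > 0$ (since $e_i > 0$ combined with a path to $T$ and conservativeness would back-propagate a finite $\mu$ along the path and contradict $(iii)$). This refinement preserves conservativeness and the vanishing condition, making the duality tight and completing the cycle.
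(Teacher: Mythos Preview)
Your construction for $(ii)\Rightarrow(iii)$ is exactly the paper's: define $\mu_i$ as the infimum of $(M_t\gamma(P))^{-1}$ over residual paths to negative nodes, set $\mu_i=\infty$ when no such path exists, and verify the required properties. The weak-duality calculation you give for $(iii)\Rightarrow(i)$ is also correct as far as it goes, and you rightly isolate the one obstruction: when $e_i>0$, condition $(iii)$ as written only forces $\mu_i\in\{1/M_i,\infty\}$, whereas your termwise-tightness step needs $\mu_i=\infty$.

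The refinement step, however, does not close this gap. You assert that under $(iii)$ every node with $e_i>0$ has no residual path to $T$, because such a path would ``back-propagate a finite $\mu$ along the path and contradict $(iii)$.'' But back-propagation along a path to $T$ only forces $\mu_i<\infty$; it does not force $\mu_i>1/M_i$, so no contradiction with $(iii)$ arises. Concretely, take two nodes $s,t$, a single arc $st$ with $\gamma_{st}=1$, $u_{st}=1$, $b_s=-1$, $b_t=1$, $M_s=M_t=1$, and $f_{st}=0$. Then $e_s=1$, $e_t=-1$, the unique conservative labeling is $\mu_s=\mu_t=1$, the hypothesis ``$1/M_i<\mu_i<\infty$'' is vacuous, so $(iii)$ holds---yet $f$ is not optimal, and $s$ does have a residual path to $t\in T$. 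Your refinement $\mu^*$ would leave $\mu^*_s=1$, and the duality argument still fails.

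In fact this example shows that $(iii)$ as literally stated does not imply $(i)$; the paper's one-line appeal to LP duality for $(i)\Leftrightarrow(iii)$ glosses over the same point. Complementary slackness for the constraint $e_i+\kappa_i\ge 0$ actually yields the additional clause $\mu_i<\infty\Rightarrow e_i\le 0$ (equivalently $e_i>0\Rightarrow\mu_i=\infty$), which is precisely what your duality argument needs and what your own $(ii)\Rightarrow(iii)$ construction already delivers. With that clause added to $(iii)$, your proof of $(iii)\Rightarrow(i)$ goes through cleanly with no refinement required.
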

\begin{proof}
The equivalence of {\em(i)} and {\em(iii)} is by linear programming duality, with $\mu_i$ being the reciprocal
of the dual variable corresponding to the inequality $e_i+\kappa_i\ge 0$. 
{\em(i)} implies {\em(ii)} by Lemma~\ref{lem:opt-no-GAP}.

It is left to show that {\em(ii)} implies {\em(iii)}. If the excess discrepancy is 0 (that is, $e_i\ge 0$ for all $i\in V$), then
$\mu\equiv \infty$ is conservative. Otherwise, let $N=\{t: e_t<0\}$. If $E_f$ contains no directed path from $i\in V$ to $N$, then let $\mu_i=\infty$. For the other nodes $i\in V$, let $\mu_i$ be the smallest possible value of $1/(\gamma(P)M_t)$ for $\gamma(P)=\Pi_{e\in P}\gamma_e$, where $P$ is  a walk in $E_f$ starting from $i$ and ending in a node  $t\in N$. By {\em(ii)}, this is well-defined, since all cycles can be removed from a walk $P$ without decreasing $\gamma(P)$.

$\mu$ clearly satisfies $\gamma_{ij}\mu_i/\mu_j\le 1$. We shall prove $\mu_i\ge 1/M_i$ for each $i\in V$ and $\mu_t=1/M_t$ for each $t\in N$. 
If $e_i>0$, then no such path $P$ may exists as it would give a type {\em(b)} GAP. Consequently, $e_i=\infty$. If $e_i\le 0$, then $\mu_i\ge 1/M_i$ as otherwise the optimal $P$ defining $\mu_i$ would be a type {\em(c)} GAP. Finally, if $t\in N$, then $\mu_t\le 1/M_t$ as the gain of the path $P=\emptyset$ is defined as 1.
%
\end{proof}


\subsection{$\Delta$-conservative labels}\label{sec:cons-label}

The residual capacity of arc $ij\in E_f$ is $u_{ij}-f_{ij}$ (for a backward arc $ij$, this is $\gamma_{ji}f_{ji}$).
In contrast, we define the {\em \st} of $ij\in E_f$ by $s_f(ij)=\gamma_{ij}(u_{ij}-f_{ij})$ (on backward arcs, $s_f(ij)=f_{ji}$).
 The \st expresses the maximum possible flow increase in $j$ if we saturate $ij$.
This notion enables us to identify arcs that can participite in fat paths during the algorithm.
In accordance with the other variables, the relabeled \st is defined as $s_f^\mu(ij)=s_f(ij)/\mu_j$.

For the scaling parameter $\Delta>0$, we define the following relaxation of  conservativity.
Let $\mu:V\rightarrow \R_{>0}\cup\{\infty\}$ be a label function.
Recall that $d_i$ is the total number of arcs incident to $i$.
 A node $i\in V$ is called \textsl{$\Delta$-negative} if $e^\mu_i<d_i\Delta$, \textsl{$\Delta$-neutral} if $e^\mu_i=d_i\Delta$ and
\textsl{$\Delta$-positive} if $e^\mu_i>d_i\Delta$.

The {\em $\Delta$-fat graph} $E^\mu_f(\Delta)$ is the set of residual  arcs of relabeled \st at least $\Delta$: 
\[
E^\mu_f(\Delta)=\{ij: ij\in E_f: s_f^\mu(ij)\ge \Delta\}.
\]
Arcs in $E^\mu_f(\Delta)$ will be called {\em $\Delta$-fat arcs}. The labeling $\mu$ is called
 {\em $\Delta$-conservative}, if $\gamma_{ij}^\mu\le 1$ holds for every $ij\in E^\mu_f(\Delta)$, and 
$\mu_i\ge 1/M_i$ for all $i\in V$. Further, for every $\Delta$-negative node $i$, we require $\mu_i=1/M_i$.

Observe that no nodes with $\mu_i=\infty$ may be present in a $\Delta$-conservative labeling.
If $\Delta\le \Delta'$, then a $\Delta$-conservative labeling is also $\Delta'$-conservative, and $0$-conservativity is identical
to conservativity.
Let $Ex^\mu(f)=\sum_{i\in V} \max\{e_i^\mu,0\}$  and $Ex_\Delta^\mu(f)=\sum_{i\in V} \max\{e_i^\mu-d_i\Delta,0\}$ denote the total relabeled excess and total modified relabeled excess for $\Delta$, respecitively. Note that $Ex^\mu(f)\le Ex_\Delta^\mu(f)+2m\Delta$.

\begin{lemma}\label{lem:modify-Delta}
Let $f$ be a pseudoflow with a $\Delta$-conservative labeling $\mu$.
Let $0\le \Delta'<\Delta$. 
Then there exists a 
flow $\bar f$
such that $\bar f$ and $\mu$ are $\Delta'$-conservative
and $Ex_{\Delta'}^\mu(\bar f)\le Ex_{\Delta}^\mu(f)+3m(\Delta-\Delta')$
\end{lemma}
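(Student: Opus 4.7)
The plan is to produce $\bar f$ from $f$ by modifying flow on a carefully chosen small set of arcs, in direct analogy with the ``saturate the newly-revealed arcs'' step of the Edmonds--Karp capacity scaling algorithm. Call an arc $ij \in E^\mu_f(\Delta')$ \emph{bad} if $\gamma^\mu_{ij} > 1$. By the $\Delta$-conservativity of $f$, no bad arc lies in $E^\mu_f(\Delta)$, so every bad arc satisfies $\Delta' \le s^\mu_f(ij) < \Delta$; there are at most $m$ of them. For each bad arc I increase $f^\mu_{ij}$ by the smallest $\alpha^\mu_{ij}$ so that $s^\mu_{\bar f}(ij) < \Delta'$, i.e. $\gamma^\mu_{ij}\alpha^\mu_{ij}$ is just above $s^\mu_f(ij) - \Delta' \le \Delta - \Delta'$. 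Since $\gamma^\mu_{ij} > 1$ on bad arcs, this push satisfies $\alpha^\mu_{ij} < \Delta-\Delta'$ (decrease at $i$) and $\gamma^\mu_{ij}\alpha^\mu_{ij} \le \Delta-\Delta'$ (increase at $j$). The pushes respect capacities because the push amount is at most the residual capacity of each arc.

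Next I verify $\Delta'$-conservativity of $\mu$ for $\bar f$. For the gain condition on $E^\mu_{\bar f}(\Delta')$: an arc there is either (i) already in $E^\mu_f(\Delta)$, hence has $\gamma^\mu \le 1$ by hypothesis and was not modified; (ii) in $E^\mu_f(\Delta') \setminus E^\mu_f(\Delta)$ with $\gamma^\mu \le 1$ (fine, and not modified); (iii) the reverse $ji$ of some bad arc $ij$, for which $\gamma^\mu_{ji}=1/\gamma^\mu_{ij}<1$; or (iv) a bad arc itself, but these were explicitly pushed out of $E^\mu_{\bar f}(\Delta')$. Since $\mu$ is unchanged, $\mu_i \ge 1/M_i$ remains. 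The remaining condition, that every $\Delta'$-negative node $i$ of $\bar f$ has $\mu_i=1/M_i$, is the main potential obstacle. It follows from a ``security-reserve'' calculation: a node $i$ with $e^\mu_i(f) \ge d_i\Delta$ suffers a total decrease of at most $\sum_{ij \text{ bad}} \alpha^\mu_{ij} \le d_i(\Delta-\Delta')$, so $e^\mu_i(\bar f) \ge d_i\Delta'$ and $i$ is not $\Delta'$-negative. Thus every $\Delta'$-negative node of $\bar f$ was already $\Delta$-negative for $f$ and inherits $\mu_i = 1/M_i$ from the given $\Delta$-conservativity.

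It remains to bound $Ex^\mu_{\Delta'}(\bar f)$. First, changing the threshold alone gives
\[
Ex^\mu_{\Delta'}(f) \le Ex^\mu_{\Delta}(f) + \sum_{i\in V} d_i(\Delta-\Delta') = Ex^\mu_{\Delta}(f) + 2m(\Delta-\Delta').
\]
Each of the at most $m$ arc pushes can increase $\sum_i \max\{e^\mu_i - d_i\Delta',0\}$ only via the endpoint $j$ whose relabeled excess rises, and the rise is $\gamma^\mu_{ij}\alpha^\mu_{ij} \le \Delta-\Delta'$; the endpoint $i$ only decreases, which cannot increase the sum. Summing over bad arcs adds at most $m(\Delta-\Delta')$, giving
\[
Ex^\mu_{\Delta'}(\bar f) \le Ex^\mu_{\Delta'}(f) + m(\Delta-\Delta') \le Ex^\mu_{\Delta}(f) + 3m(\Delta-\Delta'),
\]
as required. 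The only delicate point in the whole argument is the security-reserve calculation in condition (iii) above; everything else is a direct bookkeeping argument once the set of bad arcs is identified.
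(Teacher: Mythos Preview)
Your proof is correct and follows essentially the same route as the paper's: identify the arcs in $E^\mu_f(\Delta')\setminus E^\mu_f(\Delta)$ with $\gamma^\mu_{ij}>1$, push just enough flow on each to drop it below $\Delta'$-fatness, then verify $\Delta'$-conservativity via the security-reserve argument and bound the excess exactly as you do. Two cosmetic points: (a) ``the smallest $\alpha^\mu_{ij}$ so that $s^\mu_{\bar f}(ij)<\Delta'$'' is not well-defined since the inequality is strict; the paper simply pushes $\min\{u_{ij}-f_{ij},(\Delta-\Delta')\mu_j/\gamma_{ij}\}$, which satisfies your bounds $\alpha^\mu_{ij}\le\Delta-\Delta'$ and $\gamma^\mu_{ij}\alpha^\mu_{ij}\le\Delta-\Delta'$; (b) in your cases (i)--(ii), an arc $ij$ can have its flow modified if its reverse $ji$ is bad, but this is harmless since $\gamma^\mu_{ij}$ does not depend on $f$ and is then $<1$ anyway, so your conclusion stands.
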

\begin{proof}
We shall construct $\bar f$ by modifying $f$ on each arc independently. For $\Delta$-fat arcs, $\Delta$-conservativity guarantees
$\gamma_{ij}^\mu\le 1$. Consider an arc $ij\in E_f^\mu(\Delta')-E_f^\mu(\Delta)$, that is, 
\[
\Delta'\le \gamma^\mu_{ij}(u^\mu_{ij}-f^\mu_{ij})<\Delta
\]
and assume  $\gamma_{ij}^\mu>1$.
Let us set $\bar f_{ij}=\min\{u_{ij},f_{ij}+\frac{(\Delta-\Delta')\mu_j}{\gamma_{ij}}\}$. Then $ij$ cannot be $\Delta'$-fat, since either $ij\notin E_f$, or $\gamma^\mu_{ij}(u^\mu_{ij}-\bar f^\mu_{ij})<\Delta'$.

$f^\mu_{ji}-\bar f^\mu_{ji}= \gamma_{ij}^\mu(\bar f^\mu_{ij}-f^\mu_{ij})\le \Delta-\Delta'$. Also,
$\bar f^\mu_{ij}-f^\mu_{ij}\le \Delta-\Delta'$ holds because of $\gamma_{ij}^\mu>1$.
We also have to consider the possibility that  $ji$ is $\Delta'$-fat for $\bar f$. In this case, conservativity is guaranteed since
   $\gamma_{ji}^\mu=1/\gamma_{ij}^\mu<1$.

To complete the proof of $\Delta'$-conservativity, we show that $\bar f$ has no $\Delta'$-negative nodes with $\mu_i>1/M_i$. 
As we have seen, both $f_{ij}$ and $\gamma_{ij}f_{ij}$ change by at most $\Delta-\Delta'$. Consequently for every $i\in V$, the total possible change of the relabeled flow on arcs incident to $i$ is $d_i(\Delta-\Delta')$. A node is nonnegative for  $\Delta$ if $e_i^\mu\ge d_i\Delta$ and for $\Delta'$ if  $e_i^\mu\ge d_i\Delta'$. Therefore, a nonnegative node cannot become $\Delta'$-negative, proving the claim. 

Further, $Ex_{\Delta'}^\mu(f)\le Ex_{\Delta}^\mu(f)+\sum_{i\in V}d_i(\Delta-\Delta')$, and on each arc, at most $\Delta-\Delta'$ units of new excess is created.
This gives $Ex_{\Delta'}^\mu(\bar f)\le Ex_{\Delta}^\mu(f)+3m(\Delta-\Delta')$.
\end{proof}
The proof also gives a straightforward algorithm for finding such an $\bar f$. Let \textsc{Adjust}($\Delta,\Delta'$) denote this subroutine.
In particular, \textsc{Adjust}($\Delta,0$) finds an $\bar f$ for which $\mu$ is a conservative labeling. Further, if there are no $\Delta$-negative nodes for $f$ and $\mu$, then $\bar f$ is a 0-discrepancy optimal solution. 

\subsection{ $\Delta$-canonical labels}\label{sec:delta-canon}
An edge $ij$ is called \textsl{tight} if $\gamma_{ij}^\mu=1$, an a directed path is \textsl{tight} if it consists of tight arcs.
 Given a pseudoflow $f$, a conservative labeling $\mu$ is called \textsl{canonical}, if for each $i\in V$ with $\mu_i<\infty$, there exists a tight path in $E_f$ from $i$ to a negative node.
Analogously, for $\Delta>0$, a labeling is called \textsl{$\Delta$-canonical}, if it is $\Delta$-conservative, and for each $i\in V$ there exists a tight path in $E^\mu_f(\Delta)$ from $i$ to some $\Delta$-negative or $\Delta$-neutral node.
Such a path is a highest gain $\Delta$-fat path as in \cite{Goldberg91}.
 (Note that $\mu_i<\infty$ for every $i\in V$ for $\Delta$-canonical labelings, and also that paths are allowed to end in $\Delta$-neutral nodes, in contrast to canonical labelings.) By $0$-conservative labeling we mean a conservative one.

 Given a $\Delta$-conservative relabeling $\mu$ which is not canonical,  \textsc{Tighten-Label}$(f,\mu,\Delta)$ replaces $\mu$ by a $\Delta$-canonical labeling $\mu'$ with $\mu'_i\ge \mu_i$ for each $i\in V$. 

We first describe \textsc{Tighten-Label}$(f,\mu,0)$, when it is essentially a multiplicative interpretation if Dijkstra's algorithm.
Let $V'\subseteq V$ be the set of nodes $i$ with a directed path in $E_f$ from $i$ to a negative node.
For nodes in $V\setminus V'$, let us set $\mu_i=\infty$.
Let $S\subseteq V'$ be the set of nodes $i$ for which there exists a (possibly empty) tight path for the current $\mu$ to a negative node. In each step of the algorithm, $S$ will be extended by at least one element, and we terminate if $S=V'$, when the current relabeling is canonical.

If $V'\setminus S\neq \emptyset$, let us multiply $\mu_i$ for each $i\in V'\setminus S$ by $\alpha$ defined as
\[
\alpha=\min\left\{\frac{1}{\gamma^\mu_{ij}}: 
ij\in E_f, i\in V'\setminus S, j\in S\right\}.
\]
By the definition of $S$, $\alpha>1$, and after multiplying by $\alpha$, at least one arc $ij\in E_f$
with $i\in V'\setminus S$, $j\in S$ will become tight. Tight arcs inside $S$ also remain tight, hence $S$ is extended by at most one node. Also, the choice of $\alpha$ guarantees that $\mu$ remains conservative.

\medskip

Let us now describe \textsc{Tighten-Label}$(f,\mu,\Delta)$  for $\Delta>0$.  The main difference is that increasing $\mu_i$ may turn a $\Delta$-positive node into $\Delta$-neutral. We have to stop increasing $\mu_i$ at this point and add $i$ to $S$. (This is in accordance with the goal that we allow tight paths to $\Delta$-neutral nodes as well.)
In each phase of the algorithm, let $S\subseteq V$ denote the subset of nodes that have a (possibly empty) tight path in $E_f^\mu(\Delta)$ to a $\Delta$-negative or $\Delta$-neutral node. $S$ is initialized as the set of $\Delta$-negative and $\Delta$-neutral nodes, and is extended by at least one element per phase. The algorithm terminates once $S=V$.

In every phase, we multiply  $\mu_i$ for every $i\in V\setminus S$ by the same factor $\alpha>1$.
Consider an arc $ij\in E_f(\Delta)$ with $i\in V-S$, $j\in S$. This must satisfy $\gamma_{ij}^\mu<1$.
 Increasing $\mu_i$ increases $\gamma_{ij}^\mu$. Note that the \st $s_f^\mu(ij)=s_f(ij)/\mu_j$ is not changed as it is not dependent of $\mu_i$.
By definition, all nodes in $V-S$ are $\Delta$-positive.
When increasing $\mu_i$, $e_i^\mu$ decreases and therefore $i$ may become $\Delta$-neutral. At this point, we have to stop to avoid creating new $\Delta$-negative nodes.
Let us define
\begin{align}
\alpha=\min\left\{
\min 
\left\{ \frac{1}{\gamma^\mu_{ij}}:  ij\in E_f^\mu(\Delta)\right\},
\min\left\{\frac{e^\mu_i}{d_i\Delta}: i\in V-S\right\}
\right\}.\label{eq:alpha-def}
\end{align}
Clearly, $\alpha>1$, and after multiplying each $\mu_i$ by $\alpha$ for $i\in S$, either we obtain a new $\Delta$-neutral node in $V-S$, or 
at least one arc $ij\in E_f$
with $i\in V\setminus S$, $j\in S$ will become tight. Tight arcs inside $S$ also remain tight, and their capacities is unchanged, hence $S$ is extended by at most one node. Also, the choice of $\alpha$ guarantees that $\mu$ remains $\Delta$-conservative. Note that arcs with $j\in V\setminus S$ may disappear from $E_F^\mu(\Delta)$ as their \st decreases.

\subsection{Description of the algorithm}\label{sec:genalg-descr}

\begin{figure}[htb]
\begin{center}
\fbox{\parbox{\textwidth}{
\begin{tabbing}
xxxxx \= xxx \= xxx \= xxx \= xxxxxxxxxxx \= \kill
\> \textbf{Algorithm} \textsc{Symmetric Fat-Path}\\
\> \textbf{for} $i\in V$ \textbf{do} $\mu_i\leftarrow \frac{1}{M_i}$;\\
\>   \textbf{for} $ij\in E$ \textbf{do} $f_{ij}\leftarrow 0$; \\
\>  $\Delta\leftarrow MB^2+1$;\\
\> \textbf{while }$(\konstterm)\Delta\ge 1/B^{\konstexp}$  \textbf{do}\\
\> \> \textbf{do}\\
\> \>  \> \textsc{Tighten-Label}$(f,\mu,\Delta)$;\\
\> \>  \> $D\leftarrow \{i\in V:  e_i>(d_i+1)\Delta\}$;\\
\> \> \> $N_0\leftarrow \{i\in V: e_i\le d_i\Delta\}$;\\
\> \> \> \textbf{pick} $s\in D$, $t\in N$ connected by a tight path $P$;\\
\> \> \> send $\Delta$ units of flow along $P$;\\
\> \>  \textbf{while} $D\neq \emptyset$;\\
\> \> \textsc{Adjust}$(\Delta,\frac{\Delta}2)$;\\
\> \>  $\Delta\leftarrow \frac{\Delta}2$;\\
\>  \textsc{Adjust}$(\Delta,0)$;\\
\> \textsc{Tighten-Label}$(f,\mu,0)$;\\
\> \textbf{compute} a maximum flow from nodes $\{s\in V: e_s>0\}$\\
\> \> \> \> to nodes  $\{t\in V: e_t<0\}$ using tight arcs for $\mu$;\\
\> \textbf{return} optimal primal solution $f$ and optimal dual solution $\mu$ \\
\end{tabbing}
}}
\caption{The algorithm for symmetric linear generalized flows}\label{code:genflow}
\end{center}
\end{figure}

The algorithm is shown on  Figure~\ref{code:genflow}.
We start with $\mu_i=1/M_i$ for every node $i\in V$, $f\equiv 0$ and $\Delta=MB^2+1$.
Once $(\konstterm)\Delta< 1/B^{\konstexp}$, then an optimal solution can be found by a single maximum flow computation, as  shown in Section~\ref{sec:move-to-opt}. In this case we terminate.

The algorithm consists of $\Delta$-scaling phases. 
During the $\Delta$-phase, we  maintain a pseudoflow $f$ along with a $\Delta$-conservative labeling $\mu$. The $\mu_i$ values can only increase. Let $N_0$ denote the set of $\Delta$-negative and $\Delta$-neutral nodes, and
 $D$ the set of nodes  $i$ with ${e}_i^\mu\ge  (d_i+1) \Delta$. The $\Delta$-phase consists of a sequence of iterations until
$D$ becomes empty.
In every iteration of the algorithm, we update $\mu$ to  a $\Delta$-canonical labeling by calling \textsc{Tighten-Label}$(f,\mu,\Delta)$.
If $D\neq\emptyset$ still holds, we  pick an arbitrary $s\in D$, and send $\Delta$ units of flow from $s$  to some $\Delta$-negative or $\Delta$-neutral $t\in N_0$ on a tight  path $P$.
At the end of the $\Delta$-phase, we modify $f$ by \textsc{Adjust}$(\Delta,\Delta/2)$, and proceed to the $\Delta/2$-phase.

\subsection{Analysis}\label{sec:genflow-analysis}
\begin{claim}\label{cl:initial}
The initial $\mu$ is $\Delta$-conservative, and $\Delta$-conservativity is maintained during the entire $\Delta$-phase.
\end{claim}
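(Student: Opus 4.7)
The plan is to split the claim into two parts: first verifying that the initial $\mu$ is $\Delta$-conservative, and then showing that each operation inside a $\Delta$-phase preserves the invariant.

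For the initial state we have $f \equiv 0$, $\mu_i = 1/M_i$ for every $i \in V$, and $\Delta = MB^2 + 1$. The conditions $\mu_i \ge 1/M_i$ and $\mu_i = 1/M_i$ at every $\Delta$-negative node hold trivially, since $\mu_i = 1/M_i$ everywhere. For the remaining requirement that $\gamma_{ij}^\mu \le 1$ on all $\Delta$-fat arcs, I plan to show that $E_f^\mu(\Delta)$ is in fact empty. With all lower capacities equal to $0$ and $f \equiv 0$, the residual graph $E_f$ consists only of forward arcs, and on such an arc $s_f^\mu(ij) = \gamma_{ij} u_{ij} M_j$. Using $\gamma_{ij} \le B$ (as a ratio of positive integers each at most $B$), $u_{ij} \le B$, and $M_j \le M$, this gives $s_f^\mu(ij) \le MB^2 < \Delta$, so no arc is $\Delta$-fat and the condition holds vacuously.

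Each iteration of the $\Delta$-phase consists of a call to \textsc{Tighten-Label}$(f,\mu,\Delta)$ followed by sending $\Delta$ units of flow along a tight path $P$ from some $s \in D$ to some $t \in N_0$. For the subroutine, the scaling factor $\alpha > 1$ in (\ref{eq:alpha-def}) is defined precisely so that no arc in the $\Delta$-fat graph is pushed above $\gamma^\mu = 1$ and no $\Delta$-positive node is lowered past the $\Delta$-neutral threshold, so $\Delta$-conservativity is preserved by construction (and in fact no $\Delta$-negative node is ever created inside \textsc{Tighten-Label}).

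The flow augmentation is the main step to analyze. Since $\mu$ is unchanged and $e_i^\mu$ shifts only at $s$ (by $-\Delta$) and $t$ (by $+\Delta$), the constraints on $\mu_i$ can fail only at $s$ or $t$. At $s$ we have $e_s^\mu > (d_s+1)\Delta$ before the augmentation, hence $e_s^\mu > d_s\Delta$ after, so $s$ does not become $\Delta$-negative. At $t$ the relabeled excess only grows, so if $t$ is still $\Delta$-negative afterward then $\mu_t = 1/M_t$ continues to hold unchanged. The subtlest part is condition~1: \st values shift only on arcs of $P$ and their reverses, so any newly $\Delta$-fat arc must be either an arc $ij \in P$ (already known to be $\Delta$-fat) or the reverse $ji$ of some $ij \in P$. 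The key observation that closes the argument is that $P$ is tight, so $\gamma_{ij}^\mu = 1$ for every $ij \in P$, whence $\gamma_{ji}^\mu = 1/\gamma_{ij}^\mu = 1$ on every reverse arc as well. Verifying that no side-effect of the augmentation can sneak a new arc with $\gamma^\mu > 1$ into the $\Delta$-fat graph is the place I expect to be the main obstacle, and the tightness of $P$ is the ingredient that resolves it.
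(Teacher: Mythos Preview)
Your proof is correct and follows essentially the same approach as the paper's own proof. You give more detail on the flow-augmentation step (where the paper simply says ``we only use tight arcs''), but the key observations---that $E_f^\mu(\Delta)=\emptyset$ initially, that \textsc{Tighten-Label} preserves $\Delta$-conservativity by design, and that tightness forces $\gamma_{ji}^\mu=1$ on reverse arcs---are the same. One small addition the paper makes is to note that \textsc{Adjust}$(\Delta,\Delta/2)$ at the end of the phase yields $\Delta/2$-conservativity (via Lemma~\ref{lem:modify-Delta}), which is what lets the inductive invariant carry over to the next phase; you may want to add a sentence to that effect.
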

\begin{proof}
At the beginning $s_f^\mu(ij)=\gamma_{ij} u_{ij}/\mu_j\le MB^2<\Delta$ for every $ij\in E$, and $E_f$ contains no backward arcs. Consequently,
$E_f^\mu(\Delta)=\emptyset$ and 
$\Delta$-conservativity trivially holds. Also, $\mu_i=1/M_i$ holds for every node $i$.
 \textsc{Tighten-Label}$(f,\mu,\Delta)$ clearly maintains
$\Delta$-conservativity. 
This is also maintained when sending flow, as we only use tight arcs.
At the end of the $\Delta$-phase, \textsc{Adjust}$(\Delta,\Delta/2)$ transforms $f$ to a $\Delta/2$-conservative pseudoflow.
\end{proof}

\begin{claim}\label{cl:ex-delta}
At the beginning of every $\Delta$-phase, $Ex_\Delta^\mu(f)\le (\konstlin)\Delta$.
\end{claim}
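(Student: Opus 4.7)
The plan is to prove this by induction on the scaling phases.

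For the base case, note that the initial $\Delta = MB^2+1$ is set together with $f\equiv 0$ and $\mu_i = 1/M_i$ for all $i$. A direct calculation gives $e_i = -b_i$, hence $e_i^\mu = -b_iM_i$. Since $|b_i|\le B$ and $M_i\le M$, we get $e_i^\mu \le MB < \Delta \le d_i\Delta$, so every term $\max\{e_i^\mu - d_i\Delta,0\}$ vanishes and $Ex_\Delta^\mu(f) = 0$, which trivially satisfies the desired bound.

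For the inductive step, I would look at the state at the moment the do-while loop of the $\Delta$-phase exits. Exit happens precisely when $D=\emptyset$, i.e.\ $e_i^\mu \le (d_i+1)\Delta$ for every $i\in V$. This immediately gives
\[
Ex_\Delta^\mu(f) = \sum_{i\in V}\max\{e_i^\mu - d_i\Delta, 0\} \le n\Delta.
\]
The algorithm then calls \textsc{Adjust}$(\Delta,\Delta/2)$. By Claim~\ref{cl:initial}, $(f,\mu)$ is still $\Delta$-conservative at this point, so Lemma~\ref{lem:modify-Delta} with $\Delta'=\Delta/2$ returns a pseudoflow $\bar f$ such that $(\bar f,\mu)$ is $\Delta/2$-conservative and
\[
Ex_{\Delta/2}^\mu(\bar f) \le Ex_\Delta^\mu(f) + 3m(\Delta - \Delta/2) \le n\Delta + \tfrac{3m}{2}\Delta = (2n+3m)\frac{\Delta}{2}.
\]
Since $(\bar f,\mu)$ is exactly the state at the beginning of the subsequent $(\Delta/2)$-phase, the inductive claim is established.

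I do not expect a real obstacle; the only subtle point is verifying that the loop-exit bound $e_i^\mu\le (d_i+1)\Delta$ actually holds for every node at termination. This needs the observation that a single augmentation along a tight path sends $\Delta$ units from an $s\in D$ into a $\Delta$-neutral/negative target $t\in N_0$, so $e_s^\mu$ decreases by $\Delta$, internal nodes of the path are unchanged, and $e_t^\mu$ increases by exactly $\Delta$; hence $t$ finishes with $e_t^\mu\le (d_t+1)\Delta$ and does not enter $D$. Thus no node can be pushed above $(d_i+1)\Delta$ during the phase, and the loop indeed exits in the claimed state.
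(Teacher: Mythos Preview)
Your proof is correct and follows essentially the same inductive structure as the paper's; your base case is in fact a bit cleaner, since you obtain $Ex_\Delta^\mu(f)=0$ outright rather than the looser bound the paper gives. The final paragraph is unnecessary: the inner loop exits precisely when $D=\emptyset$, so $e_i^\mu\le (d_i+1)\Delta$ holds for every node by definition at that moment, and termination of the loop itself is supplied by Lemma~\ref{lem:push-bound} in the joint induction on phases.
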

\begin{proof}
This holds by definition in the first phase. 
In the first phase, $Ex_\Delta^\mu(f)\le M(\sum_{i\in V}b_i)=MB<(\konstlin)\Delta$.
Once we finish all iterations in the $\Delta$-phase, $D=\emptyset$ implies
$Ex_\Delta^\mu(f)\le n\Delta$. In \textsc{Adjust}$(\Delta,\Delta/2)$, we increase the excess by at most 
$3m\Delta/2$ (Lemma~\ref{lem:modify-Delta}.)
 Hence at the beginning of the $\Delta/2$-phase, $Ex_{\Delta/2}^\mu(f)\le (\konstlin)\Delta/2$, proving the claim.
\end{proof}

\begin{lemma}\label{lem:push-bound}
A $\Delta$-phase consists of at most $\konstlin$ iterations.
\end{lemma}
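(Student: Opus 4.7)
The plan is to identify a nonnegative potential $\Psi$ that starts at most $2n+3m$, never increases during the phase, and strictly decreases by at least one on each augmentation; the iteration bound then follows immediately. Claim~\ref{cl:ex-delta} suggests trying $Ex_\Delta^\mu(f)/\Delta$, but a single augmentation along a tight path from $s\in D$ to a $\Delta$-neutral target $t$ (with $e_t^\mu = d_t\Delta$) would leave $Ex_\Delta^\mu$ unchanged, since $s$ loses $\Delta$ of above-threshold excess while $t$ gains exactly $\Delta$. So the raw excess will not suffice on its own.

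I would instead use the $D$-restricted variant
\[
\Psi \;=\; \sum_{i\in D}\frac{e_i^\mu - d_i\Delta}{\Delta},
\]
in which every summand strictly exceeds $1$ by the definition of $D$. Then $\Psi \le Ex_\Delta^\mu(f)/\Delta \le 2n+3m$ at the start of the $\Delta$-phase by Claim~\ref{cl:ex-delta}, and $\Psi \ge 0$ trivially at all times.

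The two facts to verify are: (i) \textsc{Tighten-Label} cannot increase $\Psi$; and (ii) each augmentation decreases $\Psi$ by at least $1$. For (i), the subroutine only multiplies $\mu_v$ by factors $\alpha>1$ for $\Delta$-positive nodes $v$, all of which have $e_v>0$; each such update strictly decreases $e_v^\mu$, so individual summands shrink and no new node can enter $D$. For (ii), at the source $s$ the summand either drops by exactly $1$ (if $s$ stays in $D$, equivalently $e_s^\mu > (d_s+2)\Delta$) or its full value $\in(1,2]$ vanishes entirely (if $s$ exits $D$); the crucial observation at the target is that $e_t^\mu + \Delta \le (d_t+1)\Delta$ keeps $t$ out of $D$, so its contribution to $\Psi$ stays at $0$; intermediate nodes of the tight path have unchanged excess.

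The main subtlety is the choice of potential rather than any individual calculation: the naive $Ex_\Delta^\mu/\Delta$ stalls precisely on $\Delta$-neutral targets, and restricting the sum to $D$ side-steps this because the $D$-threshold sits exactly one $\Delta$-unit above the $N_0$ ceiling, so a single $\Delta$-augmentation can never carry $t$ from $N_0$ into $D$.
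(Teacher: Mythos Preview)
Your proof is correct and follows essentially the same potential-function strategy as the paper: both track a scaled excess quantity that is bounded by $2n+3m$ at the start of the phase, cannot increase under \textsc{Tighten-Label}, and drops by at least one per augmentation. The paper uses $\Psi=\sum_{i\in V}\lfloor \max\{e^\mu_i-(d_i+1)\Delta,0\}/\Delta\rfloor$ rather than your $D$-restricted continuous sum, but the two potentials encode the same idea, and your variant handles the $\Delta$-neutral target case just as cleanly.
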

\begin{proof}
Consider the potential function $\Psi=\sum_{i\in V}\lfloor \max\{{e}^\mu_i-(d_i+1)\Delta,0\}/\Delta\rfloor$. 
By Claim~\ref{cl:ex-delta}, $\Psi\le\konstlin$ holds at the beginning.  
In the relabeling steps, $\Psi$ may only decrease, and in every path augmentation, it decreases by exactly 1.
\end{proof}

\begin{theorem}
The algorithm runs in $O(m(m+n\log n)(m\log B+\log M))$ time.
\end{theorem}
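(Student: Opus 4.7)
The plan is to multiply three bounds: the number of $\Delta$-phases, the number of iterations inside each $\Delta$-phase, and the cost of a single iteration.

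First I would bound the number of $\Delta$-phases. The scaling parameter starts at $\Delta_0 = MB^2+1$ and is halved at the end of every phase. The outer loop terminates once $(2n+6m)\Delta < 1/B^m$. Hence the number of phases is $O(\log(\Delta_0 \cdot (2n+6m) \cdot B^m)) = O(m\log B + \log M + \log n)$; since $\log n = O(\log M)$ (as $M\ge 1$ is an integer and can always be taken at least $n$, or absorbed into $m\log B$), this collapses to $O(m\log B + \log M)$ phases.

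Second, Lemma~\ref{lem:push-bound} already states that each $\Delta$-phase performs at most $2n+3m = O(m)$ inner iterations (path augmentations plus relabelings between them). Each such iteration consists of one call to \textsc{Tighten-Label}$(f,\mu,\Delta)$, followed by identifying $D$ and $N_0$, picking endpoints of a tight path and sending $\Delta$ units along it. Identifying $D$ and $N_0$ takes $O(n)$ time, and pushing flow along a tight path is $O(n)$ as well. The dominant cost is \textsc{Tighten-Label}, which I would implement as a multiplicative variant of Dijkstra's algorithm: maintain, for each $i \notin S$, the maximum key $\alpha_i \le \min\{1/\gamma^\mu_{ij} : ij \in E_f^\mu(\Delta),\, j\in S\}\cup\{e_i^\mu/(d_i\Delta)\}$, implement the priority queue with a Fibonacci heap, process each node once and each arc at most twice (once as a forward arc, once as a backward arc), yielding $O(m + n\log n)$ per invocation. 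The \textsc{Adjust}$(\Delta,\Delta/2)$ call at the end of each phase runs in $O(m)$ time (linear in the arc set by Lemma~\ref{lem:modify-Delta}), so it is absorbed.

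Multiplying gives $O((m\log B + \log M) \cdot m \cdot (m + n\log n)) = O(m(m+n\log n)(m\log B + \log M))$ for the scaling loop. The post-processing consists of one more call to \textsc{Adjust}$(\Delta,0)$, a final \textsc{Tighten-Label}$(f,\mu,0)$, and a single maximum flow computation on the tight subgraph as described in Section~\ref{sec:move-to-opt}; these are dominated by the scaling loop bound (assuming a max-flow routine with the standard $O(nm\log(n^2/m))$ or similar bound, which fits inside a single factor $(m+n\log n)(m\log B+\log M)$).

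The main obstacle is to verify that the Dijkstra-style implementation of \textsc{Tighten-Label} really runs in $O(m+n\log n)$ despite the extra ``$\Delta$-neutral'' stopping condition. The key observation is that in~\eqref{eq:alpha-def} the inner minima are exactly shortest-path-style updates: the arc term depends only on $\gamma_{ij}^\mu$ (which is unchanged by relabeling $j\in S$) and the node term $e_i^\mu/(d_i\Delta)$ is a static quantity at the start of the subroutine, since $e_i^\mu$ is not changed by relabelings while $i\notin S$. Hence the per-node keys can be maintained by standard decrease-key operations in a Fibonacci heap exactly as in Dijkstra, giving the desired bound.
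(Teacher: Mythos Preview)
Your proof is correct and follows essentially the same three-factor decomposition as the paper: $O(m\log B+\log M)$ phases, $O(m)$ iterations per phase via Lemma~\ref{lem:push-bound}, and $O(m+n\log n)$ per iteration by implementing \textsc{Tighten-Label} as Dijkstra with Fibonacci heaps. One minor slip: during \textsc{Tighten-Label} the relabeled excess $e_i^\mu=e_i/\mu_i$ \emph{does} change for $i\notin S$ (since $\mu_i$ is scaled); what is true is that the unrelabeled $e_i$ stays fixed, so the cumulative multiplier at which $i$ becomes $\Delta$-neutral is exactly the initial value $e_i^\mu/(d_i\Delta)$, and hence this key is set once and needs no decrease-key---your conclusion stands, only the justification needs this small correction.
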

\begin{proof}
$\Delta$ always decreases by a factor of 2, its initial value is $MB^2$ and we terminate if $Ex_\Delta^\mu(f)<1/B^{\konstexp}$. Hence the number of phases can be bounded by
$O(m\log B+\log  M)$.
The number of iterations is  $O(m)$ by Lemma~\ref{lem:push-bound}.
The running time of an iteration is dominated by the \textsc{Tighten-Label} step, that can be done in $O(m+n\log n)$ time following  Fredman and Tarjan's \cite{Fredman87} implementation of Dijkstra's algorithm.
\end{proof}

\subsection{Moving to an optimal solution}\label{sec:move-to-opt}
If $(\konstterm)\Delta<1/B^{\konstexp}$ at a certain iteration of the algorithm, then by Claim~\ref{cl:ex-delta}, 
$Ex^\mu(f)+3m\Delta<1/B^{\konstexp}$ holds.
Next, we  transform $f$ to  $\bar f$ by \textsc{Adjust}($\Delta,0$), so that $\mu$ is a conservative labeling for $\bar f$. By Lemma~\ref{lem:modify-Delta}, $Ex^\mu(\bar f)<1/B^{\konstexp}$ follows.
Then \textsc{Tighten-Label}$(\bar f,\mu,\Delta)$ transforms $\mu$  into a canonical labeling.
By the following lemma, a single maximum flow computation yields an optimal solution. This 
is the standard technique how most algorithms in the literature terminate.

\begin{lemma}\label{lem:move-to-opt}
Let $\mu$ be a canonical labeling for $f$, and 
let $\tilde G_f=(V,\tilde E_f)$ be the subgraph of $G_f$ consisting of tight arcs in $E_f$.
If $Ex^\mu(f)<1/B^{\konstexp}$, then a single maximum flow computation on $\tilde G$  from source set $P=\{s\in V: e_s>0, \mu_s<\infty\}$ to sink set $N=\{t\in V: e_t<0\}$ terminates with an optimal solution.
\end{lemma}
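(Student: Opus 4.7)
My plan is to prove optimality of the pseudoflow $f'$ returned by the max flow by exhibiting a near-tight primal--dual pair and then closing the small residual gap via LP integrality.

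First I verify that conservativity is preserved. All augmentations happen on tight arcs ($\gamma_{ij}^\mu=1$), and reversing a tight arc keeps it tight; non-tight arcs are untouched. Since the super-source capacity at each $s\in P$ is exactly $e_s^\mu$ and the super-sink capacity at each $t\in N$ is $-e_t^\mu$, the max flow never pushes a source's excess below zero or a sink's deficit above zero. Hence no new negative node appears, and $\mu$ remains conservative for $f'$ with $\mu_i=1/M_i$ whenever $e_{f'}(i)<0$.

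Next I derive the duality gap identity $\kappa(f')-D(\mu)=Ex^\mu(f')$, where $D(\mu)=\sum_i b_i/\mu_i-\sum_{ij}u_{ij}z_{ij}$ is the dual objective at $y_i=1/\mu_i$ and $z_{ij}=(\gamma_{ij}y_j-y_i)_+$. Starting from the algebraic identity $\sum_i y_i e_i=\sum_{ij}(\gamma_{ij}y_j-y_i)f_{ij}-\sum_i b_iy_i$, conservativity forces $\sum_{ij}(\gamma_{ij}y_j-y_i)f'_{ij}=\sum_{ij}u_{ij}z_{ij}$, since $z_{ij}$ can be positive only when $f'_{ij}=u_{ij}$. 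Combining and using that $y_i=M_i$ on negative nodes makes $M_i\kappa_i+y_ie_i=0$ there, while $\kappa_i=0$ on positive nodes makes $M_i\kappa_i+y_ie_i=e_i/\mu_i$, yielding $\kappa(f')-D(\mu)=\sum_{i:\,e_{f'}(i)>0}e_{f'}(i)/\mu_i=Ex^\mu(f')$. Since max flow decreases $Ex^\mu$ by exactly its value, $Ex^\mu(f')\le Ex^\mu(f)<1/B^m$, and by weak duality $D(\mu)\le OPT\le\kappa(f')$,
\[
0\le \kappa(f')-OPT\le Ex^\mu(f')<1/B^m.
\]

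Finally I invoke LP integrality. The LP has integer data with magnitudes bounded by $B$, so standard determinantal bounds imply that $OPT$ and every LP-vertex objective is a rational with common denominator at most $B^m$. Arguing that the max flow returns $f'$ whose augmenting-path decomposition makes $\kappa(f')$ lie in the same quantized set, the only value in $[0,1/B^m)$ of that form is $0$, so $\kappa(f')=OPT$ and $f'$ is primal optimal. The main obstacle is this last integrality step: one must track how the denominators of $f$ accumulate through the scaling phases and verify that the terminal max flow output $f'$ has the requisite structure to make $\kappa(f')$ quantized at resolution $1/B^m$, precisely calibrating the exponent in the hypothesis $Ex^\mu(f)<1/B^m$ to the LP's vertex-denominator bound.
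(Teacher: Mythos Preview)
Your duality-gap identity $\kappa(f')-D(\mu)=Ex^\mu(f')$ is correct, and the bound $Ex^\mu(f')\le Ex^\mu(f)<1/B^m$ is fine. The genuine gap is exactly the one you flag yourself: you need $\kappa(f')$ (or equivalently $Ex^\mu(f')$) to be an integer multiple of $1/B^m$, and you propose to get this by tracking the denominators of the flow values through all the scaling phases. That will not work. The intermediate pseudoflows $f$ produced by the scaling algorithm have no usable rational structure; the $\Delta$-phase augmentations involve $\Delta=(MB^2+1)/2^k$ and labels $\mu$ that themselves change, so the arc values $f'_{ij}$ can have arbitrarily large denominators and there is no way to infer quantization of $\kappa(f')$ from the algorithm history.

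The paper's argument avoids this trap entirely. After the max flow, it takes the set $S$ of nodes still reachable from residual positive-excess nodes via tight arcs and expands
\[
Ex^\mu(f')=\sum_{i\in S}e_i^\mu(f')
\]
arc by arc. Inside $S$, every arc carrying positive flow below capacity is tight ($\gamma_{ij}^\mu=1$), so its two endpoint contributions cancel; every other arc in the boundary or inside $S$ has $f'_{ij}\in\{0,u_{ij}\}$. Consequently the expansion contains \emph{no} flow values at all, only the integer capacities $u_{ij}$, integer demands $b_i$, and the numbers $1/\mu_i$ and $\gamma_{ij}/\mu_j$. Those last quantities are integer multiples of $1/B^*$ (with $B^*\le B^m$) because the labeling is canonical: each $1/\mu_i$ equals $M_t$ times a product of gain factors along a tight path. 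Hence $Ex^\mu(f')$ itself is an integer multiple of $1/B^*$ and, being in $[0,1/B^m)$, must be zero. The key idea you are missing is this min-cut/telescoping step that eliminates the dependence on the flow values; without it the quantization cannot be closed.
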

\begin{proof}
Consider the flow $f'$ resulting after the maximum flow computation. 
Since flow was sent only on tight arcs, $\mu$ is also conservative for $f'$.
If there are no more nodes $s$ with $e^\mu_s>0$, then  by Theorem~\ref{thm:genflow-opt}, $f'$ is optimal. Assume now $P'$, the set of such nodes for $f'$ is nonempty.

Let $S\subseteq V$ be the set of nodes reachable from $P'$ using tight residual arcs in $E_{f'}$. By optimality, $S$ contains no node with negative excess.  
If an arc $ij\in E$ leaves $S$ then either $ij$ is saturated, that is, $f'_{ij}=u_{ij}$, or $\gamma_{ij}<1$ and $f'_{ij}=0$. Similarly, if $ij\in  E$ enters $S$, then $f'_{ij}=0$ must hold.
Also, on all arcs $ij$ with $i,j\in S$, $f'_{ij}>0$ either $\gamma^\mu_{ij}=1$ or $f'_{ij}=u_{ij}$. Let $F_1$ denote the set of such arcs with $f'_{ij}<u_{ij}$ (and
$\gamma_{ij}^\mu=1$), and $F_2$ the set of those with $f'_{ij}=u_{ij}$. Therefore, 
\begin{align}
0&\le Ex^\mu(f')=\sum_{i\in S} e^\mu_i(f')=\sum_{i\in S} \left(\sum_{j:ji\in E}\gamma_{ji}^\mu {f'_{ji}}^\mu-\sum_{j:ij\in E} {f'_{ij}}^{\mu}-b_i^\mu\right)\notag\\
&=\sum_{ij\in F_1}({\gamma_{ji}^\mu f'_{ji}}^\mu- {f'_{ji}}^\mu)+\sum_{ij\in F_2}u_{ij}^\mu(\gamma_{ij}^\mu-1)-\sum_{ij\in E_f\cap \delta(S)}u_{ij}^\mu-\sum_{i\in S}b_i^\mu\notag\\
&=\sum_{ij\in F_2}u_{ij}\left(\frac{\gamma_{ij}}{\mu_j}-\frac{1}{\mu_i}\right)-\sum_{ij\in E_f\cap \delta(S)}\frac{u_{ij}}{\mu_i}-\sum_{i\in S}\frac{b_i}{\mu_i}\label{eq:optimal}
\end{align}
Let $B^*\le B^m$ denote the common denominator of all $\gamma_{ij}$'s. We claim that every term in the above expression is an integer multiple of $1/B^*$.
Indeed, using that $\mu$ is a canonical labeling for $f$, there exists a
tight path from each node $i$ to a negative node $t$.
$1/\mu_i$ is then the product of the integer $M_t$ and the gain factors on such a tight path and is hence an integer multiple of $1/B^*$.
Similarly, every $\gamma_{ij}/\mu_j$ is also an integer multiple of $1/B^*$. Since the $u_{ij}$'s and $b_i$'s are integers, this verifies the claim.
Consequently, $0\le Ex^\mu(f')\le Ex^\mu(f)\le 1/B^m\le 1/B^*$. This implies $Ex^\mu(f')=0$, completing the proof.
\end{proof}

}{%
\section{Previous algorithms for flow problems}\label{sec:background}
We refer the reader to \cite{amo} for a background on flow problems; the full version also gives a more detailed overview. The fundamental problem for our investigations is the {\bf minimum-cost circulation} problem.\footnote{We shall use the term `circulation' to distinguish form other flow problems in the paper.} Algorithms are built on two main algorithmic paradigms: {\sl cycle cancelling} (see e.g. \cite[Chapter 9.6]{amo}) 
and  \textsl{successive shortest paths} (see e.g. \cite[Chapter 9.7]{amo}).
Neither of these basic algorithms are polynomial, but both can be modified to run in strongly polynomial time (e.g. \cite{Tardos85strong,Goldberg89,Orlin93}). For the successive shortest path framework, the first (weakly) polynomial running time was obtained by the scaling method of Edmonds and Karp \cite{Edmonds72}. This serves as a starting point for a significant part of algorithms for various flow models, including our concave generalized flow algorithm.

\medskip
\noindent
{\bf Generalized flow algorithms} are based on methods for minimum-cost circulations. A key notion is that of a {\sl flow generating cycle}, where the product of the gain factors $\gamma_e$ is greater than 1. This corresponds to negative cycles with respect to 
the cost  $c_e=-\log \gamma_e$. A solution is optimal, if there exists no flow generating cycle in the residual graph, connected to the sink by a path.
We may cancel all flow generating cycles by directly adapting algorithms for minimum-cost circulations.
Onaga \cite{Onaga67} showed that if after cancelling all flow generating cycles, we only use highest gain augmenting paths for excess transportation, no new flow generating cycle is created. This is analogous to the successive shortest paths algorithm and is also not polynomial.
The \textsc{Fat-Path} algorithm by  Goldberg, Plotkin and Tardos \cite{Goldberg91} uses a method analogous to the Edmonds-Karp capacity scaling. 
In the $\Delta$-phase, instead of using highest gain paths, $\Delta$-fat paths are used, that are able to transport $\Delta$-units of excess. This may create new flow generating cycles, which should be canceled in the next phase.

The basic framework of \cite{Onaga67} and of \textsc{Fat-Path}, namely using the different paradigms for eliminating flow-generating cycles and for transporting excess to the sink has been adopted by most subsequent algorithms,
e.g. \cite{Goldfarb96,Goldfarb97,Tardos98,Fleischer02,Radzik04}.

\medskip

\noindent
\textbf{Minimum-cost circulations with separable convex costs}.
A natural and well-studied nonlinear extension of minimum-cost circulations is replacing each arc cost $c_e$ by
a convex function $C_e$. This is a widely applicable  framework, see \cite[Chapter 14]{amo}.
Both the minimum mean cycle cancellation and the capacity scaling algorithms can be naturally extended to this problem with polynomial running time bounds: cycle cancellation was adapted by Karzanov and McCormick \cite{Karzanov97},
while capacity scaling by Minoux \cite{Minoux86} and by Hochbaum and Shanthikumar \cite{Hochbaum90}.
 The two frameworks are based on fundamentally different relaxations of the KKT-conditions. \cite{Karzanov97} directly uses the (right) derivative values of the $C_e$'s, while \cite{Minoux86} and \cite{Hochbaum90} use a gradually refined linear approximation.

\medskip

\noindent
{\bf Concave generalized flows.}
Shigeno's \cite{Shigeno06} approach was to extend the \textsc{Fat-Path} algorithm of \cite{Goldberg91} to the concave setting.
 However, \cite{Shigeno06}  obtains polynomial running time bounds only for restricted classes of gain functions. The algorithm consists of two procedures applied alternately, similarly to \textsc{Fat-Path}: a cycle cancellation phase to generate excess on cycles with positive gains, and a path augmentation phase to transport new excess to the sink in chunks of $\Delta$. For both phases, previous methods naturally extend:
cycle cancelling is performed analogously to \cite{Karzanov97}, whereas 
path augmentation to \cite{Hochbaum90}. Unfortunately, this yields polynomial running time only under certain
restrictions. The main reason for this is that the different relaxations cannot be fit smoothly into a unified framework.
}{}

\section{Concave generalized flows algorithm}\label{sec:concave}
\nottoggle{full}{
Our algorithm for the case of linear gains does excess transportation similarly to  \textsc{Fat-Path}, however, the cycle-cancelling steps are completely eliminated and we use a purely scaling framework. 
The successive shortest paths algorithms for minimum cost circulations start with an infeasible pseudoflow, having both positive and negative nodes. To use an analogous method for generalized flows, we have to give up the standard framework of algorithms where $e_i\ge 0$ is always maintained for all $i\in V-t$. This is the reason why we use the more flexible symmetric model: we start with possibly several nodes having $e_i<0$, and our aim is to eliminate them. An important property of the algorithm is that we always have to maintain $\mu_i=1/M_i$ for $e_i<0$; for this reason we shall avoid creating new negative nodes.

Similarly to \textsc{Fat-Path}, we use a scaling algorithm.
In the $\Delta$-phase, we consider the residual graph restricted to $\Delta$-fat arcs, arcs that may participate in a highest gain $\Delta$-fat-path, and maintain a conservative labeling $\mu$ with $\gamma_{ij}^\mu\le 1$ on the $\Delta$-fat arcs. When moving to the $\Delta/2$-phase, this condition may get violated due to $\Delta/2$-fat arcs that were not $\Delta$-fat. Analogously to the Edmonds-Karp algorithm, we  modify the flow by saturating each violated arc and thereby restitute dual feasibility. However, these changes may create new negative nodes and thus violate the condition $\mu_i=1/M_i$ for $e_i<0$ we must  maintain.

We resolve this difficulty by maintaing a `security reserve' of $d_i\Delta\mu_i$ in each node $i$ ($d_i$ is the number of incident arcs). This gives an upper bound on the total change caused by restoring feasibility of incident arcs in all subsequent phases. 
We call a node $\Delta$-positive  if $e_i>d_i\Delta\mu_i$, $\Delta$-negative if $e_i<d_i\Delta\mu_i$ and $\Delta$-neutral if $e_i=d_i\Delta\mu_i$.
$\Delta$-negative nodes may become negative ($e_i<0$) at a later phase, and therefore we maintain the stronger condition
$\mu_i=1/M_i$ for them. We send flow from $\Delta$-positive nodes to $\Delta$-negative and $\Delta$-neutral ones. Thereby we treat some nodes
with $e_i>0$ as sinks and increase their excess further; however, as $\Delta$ decreases, such nodes may gradually become sources.

This linear generalized flow algorithm smoothly extends to concave generalized flows. We use the local linearization 
 $\theta_{\Delta}^\mu(ij)$ of $\Gamma_{ij}$ used by Shigeno, analogously to \cite{Hochbaum90}.
 In the $\Delta$-phase, we consider the graph of $\Delta$-fat arcs, and maintain $\theta_{\Delta}^\mu(ij)\le 1$ on them.

When moving from a $\Delta$-phase to a $\Delta/2$-phase in the linear algorithm, the only reason for infeasibility is due to  $\Delta/2$-fat arcs that were not $\Delta$-fat. In contrast, feasibility can be violated on $\Delta$-fat arcs as well, as  $\theta_{\Delta}^\mu(ij)\le 1<\theta_{\Delta/2}^\mu(ij)$ may happen
due to the finer linear approximation of the gain functions in the $\Delta/2$-phase. Fortunately, feasibility can be restored in this case as well, by changing the flow on each arc by a small amount.
}{%


\iftoggle{full}{
We describe the algorithm in the same structure as for the linear case: Section~\ref{sec:concave-opt} presents the optimality conditions; $\Delta$-conservative and $\Delta$-canonical labels are discussed in Sections~\ref{sec:concave-conserv} and \ref{sec:concave-canon}, respectively. 
Section~\ref{sec:concave-alg} presents the algorithm, and Section~\ref{sec:concave-anal} its analysis.
}{}
\subsection{Optimality conditions}\label{sec:concave-opt}
The characterization of optimality was given in \cite{Shigeno06}; we have to modify the results slightly as we use the symmetric formulation.
Let us call an arc $ij\in E$ {\em immense}, if $\Gamma_{ij}(\ell_{ij})=-\infty$, and other arcs {\em regular}.
First, let us transform the problem to an equivalent instance with {\em(i)} 
$\ell=0$ for every arc  and $\Gamma_{ij}(0)=0$ for every regular; and {\em(ii)} every gain function $\Gamma_{ij}$ is strictly monotone increasing  on $[0,u_{ij}]$. 

For {\em(i)},
on each arc $ij\in E$, let us replace $u_{ij}$ by $u_{ij}-\ell_{ij}$ and $\ell_{ij}$ by 0.
If $ij$ is a regular arc, we modify the gain function to $\Gamma_{ij}(\alpha+\ell_{ij})-\Gamma_{ij}(\ell_{ij})$, and if $ij$ is an immense arc, to
 $\Gamma_{ij}(\alpha+\ell_{ij})$. Accordingly for every $i\in V$, let us increase $b_i$ by $\sum_{j:ij\in E}\ell_{ij}$, and  
decrease it by the sum of  $\Gamma_{ji}(\ell_{ji})$'s on regular arcs.

For {\em(ii)}, let us define 
$\tilde{u}_{ij}=\inf\{p: 0\le p\le u_{ij}, \Gamma_{ij}(p)=\Gamma_{ij}(u_{ij})\}$. 
By concavity, $\Gamma(\tilde{u}_{ij})=\Gamma_{ij}(u_{ij})$, and $\Gamma_{ij}(u_{ij})$ is strictly monotone
increasing on the interval $[0,\tilde{u}_{ij}]$. Let us replace $u_{ij}$ by $\tilde {u}_{ij}\le u_{ij}$.

For a pseudoflow $f:E\rightarrow \R$, we define the residual network
 $G_f=(V,E_f)$ identical as for the generalized flow setting:
$ij\in E_f$ if $ij\in E$ and $f_{ij}<u_{ij}$ or $ji\in E$ and $f_{ji}>0$.
For notational convenience, we define $f_{ji}=-\Gamma_{ij}(f_{ij})$ on backward arcs.
We also define the function  $\Gamma_{ji}(\alpha): [-\Gamma_{ij}({u}_{ij}),-\Gamma_{ij}(0)]\rightarrow [-u_{ij},0]$ by
\[
\Gamma_{ji}(\alpha)=-\Gamma^{-1}_{ij}(-\alpha).
\]
Hence $\Gamma_{ji}(f_{ji})=-f_{ij}$.

\medskip

The concavity of $\Gamma_{ij}$  implies that for each $0\le \alpha<u_{ij}$, there exists the right
derivative, denoted by $\Gamma_{ij}^+(\alpha)$, and for $0< \alpha\le u_{ij}$, there exists the left
derivative $\Gamma_{ij}^-(\alpha)$. If $0<\Delta<\Delta'$, then 
\begin{subequations}
\begin{align}
\frac{\Gamma_{ij}(\alpha+\Delta')-\Gamma_{ij}(\alpha)}{\Delta'}\le 
\frac{\Gamma_{ij}(\alpha+\Delta)-\Gamma_{ij}(\alpha)}{\Delta}\le \Gamma_{ij}^+(\alpha),\label{eq:delta-plusz}\\
\frac{\Gamma_{ij}(\alpha)-\Gamma_{ij}(\alpha-\Delta')}{\Delta'}\ge
 \frac{\Gamma_{ij}(\alpha)-\Gamma_{ij}(\alpha-\Delta)}{\Delta}\ge \Gamma_{ij}^-(\alpha)\label{eq:delta-minusz}
\end{align}\label{eq:delta-pluszminusz}
\end{subequations}
\noindent for $0\le \alpha\le u_{ij}$ and for $\Delta'\le \alpha\le u_{ij}$, respectively.
Furthermore, if $0<\alpha<\alpha'< u_{ij}$, then
$\Gamma_{ij}^+(\alpha')\le \Gamma_{ij}^-(\alpha')\le \Gamma_{ij}^+(\alpha)\le \Gamma_{ij}^-(\alpha)$.
The following claim is  easy verify.
\begin{claim}
For any $ij\in E$ with $0<f_{ij}< u_{ij}$, 
$\Gamma_{ij}^+(f_{ij})=1/\Gamma_{ji}^-(f_{ji})$, $\Gamma_{ij}^-(f_{ij})=1/\Gamma_{ji}^+(f_{ji})$. \trivi
\end{claim}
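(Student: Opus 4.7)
My plan is to prove the claim by a direct one-sided chain rule computation, breaking the identity $\Gamma_{ji}(\alpha)=-\Gamma_{ij}^{-1}(-\alpha)$ into three elementary steps: inversion, inner negation, and outer negation. The preprocessing step (ii) above guarantees that $\Gamma_{ij}$ is strictly increasing and continuous on $[0,u_{ij}]$, so $\Gamma_{ij}^{-1}$ is a well-defined strictly increasing continuous function on $[\Gamma_{ij}(0),\Gamma_{ij}(u_{ij})]$; furthermore, since $\Gamma_{ij}$ is concave, $\Gamma_{ij}^{-1}$ is convex, which ensures existence of one-sided derivatives in the interior. The hypothesis $0<f_{ij}<u_{ij}$ places the point $-f_{ji}=\Gamma_{ij}(f_{ij})$ in the interior of the domain of $\Gamma_{ij}^{-1}$, so all one-sided derivatives appearing below exist and are finite positive numbers (strict monotonicity of $\Gamma_{ij}$ rules out zero derivatives on its inverse and infinite derivatives on $\Gamma_{ij}$ in the interior).

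The core observation, which I would record as a sublemma, is the one-sided inverse function rule: if $\phi$ is strictly increasing on an open neighbourhood of $x$ and $g=\phi^{-1}$, then
\[
g^+(\phi(x))=\frac{1}{\phi^+(x)},\qquad g^-(\phi(x))=\frac{1}{\phi^-(x)}.
\]
This is straightforward from the identity $(g(\phi(x)+\delta)-x)/\delta=1/[(\phi(x')-\phi(x))/(x'-x)]$ with $x'=g(\phi(x)+\delta)$, letting $\delta\downarrow 0$ (resp.\ $\delta\uparrow 0$) and using continuity of $g$. Applying this at $y=\Gamma_{ij}(f_{ij})=-f_{ji}$ with $\phi=\Gamma_{ij}$ and $g=\Gamma_{ij}^{-1}$, I obtain
\[
(\Gamma_{ij}^{-1})^+(-f_{ji})=\frac{1}{\Gamma_{ij}^-(f_{ij})},\qquad (\Gamma_{ij}^{-1})^-(-f_{ji})=\frac{1}{\Gamma_{ij}^+(f_{ij})}.
\]

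Next, I plan to track the effect of the two negations in $\Gamma_{ji}(\alpha)=-\Gamma_{ij}^{-1}(-\alpha)$ on one-sided derivatives. A short computation from first principles shows that for any function $G$, the map $\alpha\mapsto -G(-\alpha)$ swaps left and right derivatives while preserving their signs: namely, $\frac{d^+}{d\alpha}[-G(-\alpha)]=G^-(-\alpha)$ and $\frac{d^-}{d\alpha}[-G(-\alpha)]=G^+(-\alpha)$ (the inner negation contributes one swap and one sign flip, the outer negation contributes a second sign flip, and the two signs cancel). Specialising to $G=\Gamma_{ij}^{-1}$ and $\alpha=f_{ji}$ gives
\[
\Gamma_{ji}^+(f_{ji})=(\Gamma_{ij}^{-1})^-(-f_{ji})=\frac{1}{\Gamma_{ij}^+(f_{ij})},\qquad \Gamma_{ji}^-(f_{ji})=(\Gamma_{ij}^{-1})^+(-f_{ji})=\frac{1}{\Gamma_{ij}^-(f_{ij})},
\]
which, after rearranging the first into $\Gamma_{ij}^+(f_{ij})=1/\Gamma_{ji}^-(f_{ji})$ and the second into $\Gamma_{ij}^-(f_{ij})=1/\Gamma_{ji}^+(f_{ji})$, yields exactly the two identities claimed. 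There is no real obstacle: the only thing to watch is the sign bookkeeping in the two-fold negation, so I would write that step out explicitly rather than invoking a general chain rule.
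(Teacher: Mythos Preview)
Your overall strategy is exactly the natural direct verification the paper has in mind (the paper itself offers no proof beyond ``easy to verify''), and your two ingredients---the one-sided inverse function rule and the double-negation swap---are the right ones. However, there is a bookkeeping slip that propagates through your displays.

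Your sublemma is stated correctly: $g^+(\phi(x))=1/\phi^+(x)$ and $g^-(\phi(x))=1/\phi^-(x)$. But when you apply it you swap the superscripts, writing $(\Gamma_{ij}^{-1})^+(-f_{ji})=1/\Gamma_{ij}^-(f_{ij})$; the sublemma in fact gives $(\Gamma_{ij}^{-1})^+(-f_{ji})=1/\Gamma_{ij}^+(f_{ij})$. Carrying this error forward, your final display reads $\Gamma_{ji}^+(f_{ji})=1/\Gamma_{ij}^+(f_{ij})$ and $\Gamma_{ji}^-(f_{ji})=1/\Gamma_{ij}^-(f_{ij})$, which are false at any point of nondifferentiability (test with $\Gamma_{ij}(x)=\min(2x,x+1)$ at $f_{ij}=1$: there $\Gamma_{ji}^+(f_{ji})=1/2$ while $1/\Gamma_{ij}^+(f_{ij})=1$). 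You then ``rearrange the first into $\Gamma_{ij}^+(f_{ij})=1/\Gamma_{ji}^-(f_{ji})$'', but that does not follow from the line you wrote; it is a second swap that accidentally cancels the first. With the correct application of your own sublemma, the chain becomes
\[
\Gamma_{ji}^+(f_{ji})=(\Gamma_{ij}^{-1})^-(-f_{ji})=\frac{1}{\Gamma_{ij}^-(f_{ij})},\qquad
\Gamma_{ji}^-(f_{ji})=(\Gamma_{ij}^{-1})^+(-f_{ji})=\frac{1}{\Gamma_{ij}^+(f_{ij})},
\]
which then rearranges honestly to the claim. So the fix is local: correct the application of the inverse rule and the rest lines up.
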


Let $P=i_0\ldots i_k$ be a walk in the auxiliary graph $E_f$. By sending $\alpha$ units of flow along $P$, we mean the following. First we increase $f_{i_0i_1}$ by $\alpha$ and set 
$\beta=\Gamma_{i_0i_1}(f_{i_0i_1}+\alpha)-\Gamma_{i_0i_1}(f_{i_0i_1})$ to be the flow arriving at $i_1$.
In step $h=1,\ldots,k-1$, we increase the flow on $i_hi_{h+1}$ by $\beta$ and set the new value of $\beta$
as $\Gamma_{i_hi_{h+1}}(f_{i_hi_{h+1}}+\beta)-\Gamma_{i_hi_{h+1}}(f_{i_hi_{h+1}})$. We assume $\alpha$ is chosen small enough so that no capacity gets violated. Let $f^{\alpha,P}$ denote the modified flow.

If $C=i_0\ldots i_{k-1}$ is a cycle if $E_f$, then by sending $\alpha$ units of flow around $C$ from $i_0$ 
we mean sending $\alpha$ units on the walk $i_0\ldots i_{k-1} i_0$. This modifies $e_i$ only in node $i_0$:
if the flow increase from $i_{k-1}i_0$ is bigger than $\alpha$, then $e_{i_0}$ increases, and if it is smaller
then it decreases. The next lemma characterizes when $e_{i_0}$ can increase.
For an arbitrary walk $P$ in $E_f$, let $\Gamma^+_f(P)=\Pi_{e\in P} \Gamma^+_{e}(f_e)$.
\begin{lemma}\label{lem:flow-gen}
Let $C$ be a cycle in $E_f$ with $i\in V(C)$. If $\Gamma^+_f(C)>1$ then $e_{i}$ can be increased by sending some flow around $C$. If $\Gamma^+_f(C)\le 1$, then it is not possible to increase $e_{i}$ by sending  any amount of flow around $C$. \nottoggle{full}{\trivi}{}
\end{lemma}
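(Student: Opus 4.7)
My plan is to track the cascade of flow increments around the cycle. Write $C = i_0 i_1 \ldots i_{k-1} i_0$ with $i_0 = i$, and for each $\alpha > 0$ small enough to stay within capacities define recursively $\Delta_0(\alpha) = \alpha$ and
\[
\Delta_{h+1}(\alpha) = \Gamma_{i_h i_{h+1}}\bigl(f_{i_h i_{h+1}} + \Delta_h(\alpha)\bigr) - \Gamma_{i_h i_{h+1}}(f_{i_h i_{h+1}}),
\]
so that $\Delta_k(\alpha)$ is exactly the amount of flow arriving back at $i_0$. The change in $e_{i_0}$ after sending $\alpha$ units along $C$ is precisely $\Delta_k(\alpha) - \alpha$, so the question reduces to determining the sign of $\Delta_k(\alpha) - \alpha$.

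For the first part, assume $\Gamma_f^+(C) > 1$. The definition of the right derivative gives, for each arc $i_h i_{h+1}$,
\[
\lim_{\Delta \to 0^+} \frac{\Gamma_{i_h i_{h+1}}(f_{i_h i_{h+1}} + \Delta) - \Gamma_{i_h i_{h+1}}(f_{i_h i_{h+1}})}{\Delta} = \Gamma_{i_h i_{h+1}}^+(f_{i_h i_{h+1}}).
\]
Since $\Delta_h(\alpha) \to 0$ as $\alpha \to 0^+$ (each step multiplies by a bounded factor), the ratios $\Delta_{h+1}(\alpha)/\Delta_h(\alpha)$ each converge to $\Gamma_{i_h i_{h+1}}^+(f_{i_h i_{h+1}})$, and telescoping yields $\Delta_k(\alpha)/\alpha \to \Gamma_f^+(C) > 1$. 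Hence for all sufficiently small $\alpha > 0$ we have $\Delta_k(\alpha) > \alpha$, so $e_{i_0}$ strictly increases.

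For the second part, assume $\Gamma_f^+(C) \le 1$, and consider any $\alpha > 0$ for which the flow augmentation is valid (no capacity exceeded). The concavity inequality (\ref{eq:delta-plusz}) applied at each arc gives
\[
\Delta_{h+1}(\alpha) \le \Delta_h(\alpha) \cdot \Gamma_{i_h i_{h+1}}^+(f_{i_h i_{h+1}}),
\]
and iterating over $h = 0, 1, \ldots, k-1$ yields $\Delta_k(\alpha) \le \alpha \cdot \Gamma_f^+(C) \le \alpha$. Therefore the excess at $i_0$ cannot strictly increase no matter how $\alpha$ is chosen. The only subtlety I expect is handling backward arcs uniformly with forward ones: one must verify that the transformed functions $\Gamma_{ji}(\alpha) = -\Gamma_{ij}^{-1}(-\alpha)$ are themselves concave increasing, so that (\ref{eq:delta-plusz}) applies and the identity $\Gamma_{ij}^+ \cdot \Gamma_{ji}^- = 1$ from the preceding claim ensures the product $\Gamma_f^+(C)$ is well defined and has the intended meaning along the residual cycle. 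Once that is in hand, both directions follow from the two-line computations above.
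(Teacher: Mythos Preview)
Your proof is correct and follows essentially the same approach as the paper: the paper factors the argument through a claim about arbitrary walks (Claim~\ref{cl:flow-path-inc}), establishing the upper bound $\Delta_k(\alpha)\le\Gamma_f^+(P)\alpha$ by concavity and the asymptotic lower bound by an explicit $\varepsilon$--$\delta$ induction along subpaths, which is just your telescoping limit argument written out step by step. The only cosmetic difference is that you take the limit directly while the paper unwinds it into an inductive $\varepsilon$--$\delta$ choice.
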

Since this property is independent from the choice of $i$, we simply say that $C$ is a \textsl{flow generating cycle} if $\Gamma^+_f(C)>1$. The lemma is an immediate consequence of the following claim.
\begin{claim}\label{cl:flow-path-inc}
Let $P=i_0i_1\ldots i_k$ be a walk in $E_f$. 
For any value of $\alpha>0$, the flow increase in $i_k$
for $f^{\alpha,P}$ is at most  $\Gamma^+_f(P)\alpha$.
On the other hand, for  any $\varepsilon>0$ there exists a $\delta>0$ so that for any $0< \alpha\le \delta$, $f^{\alpha,P}$ increases $e_{i_k}$ by at least $(\Gamma^+_f(P)-\varepsilon)\alpha$.
\end{claim}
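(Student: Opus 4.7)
The plan is to prove the claim by induction on the length $k$ of the walk $P$, with the inequalities in (\ref{eq:delta-pluszminusz}) as the main analytic tool. For the base case $k=1$, the flow arriving at $i_1$ is exactly $\Gamma_{i_0i_1}(f_{i_0i_1}+\alpha)-\Gamma_{i_0i_1}(f_{i_0i_1})$. The right inequality in (\ref{eq:delta-plusz}) gives the upper bound $\Gamma^+_{i_0i_1}(f_{i_0i_1})\,\alpha=\Gamma^+_f(P)\,\alpha$ directly. For the lower bound, the left inequality of (\ref{eq:delta-plusz}) shows that the difference quotient $(\Gamma_{i_0i_1}(f_{i_0i_1}+\alpha)-\Gamma_{i_0i_1}(f_{i_0i_1}))/\alpha$ is nondecreasing as $\alpha\to 0^+$ and hence converges (from below) to $\Gamma^+_{i_0i_1}(f_{i_0i_1})$ by the very definition of the right derivative. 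Thus for any $\varepsilon>0$ there is $\delta>0$ making the quotient at least $\Gamma^+_f(P)-\varepsilon$ on $(0,\delta]$.

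For the inductive step, let $P'=i_1\ldots i_k$ and denote by $\beta=\Gamma_{i_0i_1}(f_{i_0i_1}+\alpha)-\Gamma_{i_0i_1}(f_{i_0i_1})$ the amount of flow that reaches $i_1$. The upper bound propagates cleanly: by the base case, $\beta\le \Gamma^+_{i_0i_1}(f_{i_0i_1})\,\alpha$, and applying the inductive hypothesis to $P'$ with input $\beta$ yields a flow increase at $i_k$ of at most $\Gamma^+_f(P')\,\beta\le \Gamma^+_f(P)\,\alpha$. For the lower bound, write $L=\Gamma^+_{i_0i_1}(f_{i_0i_1})$ and $L'=\Gamma^+_f(P')$, so $\Gamma^+_f(P)=LL'$. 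Note $L>0$: strict monotonicity of $\Gamma_{i_0i_1}$ (ensured by the normalization step~(ii)) together with concavity forces the right derivative to be positive. Given $\varepsilon>0$, pick $\varepsilon_1,\varepsilon_2>0$ with $(L-\varepsilon_1)(L'-\varepsilon_2)\ge LL'-\varepsilon$. The base case provides $\delta_1>0$ with $\beta\ge (L-\varepsilon_1)\alpha$ whenever $\alpha\le\delta_1$, and the inductive hypothesis provides $\delta'>0$ with the increase at $i_k$ at least $(L'-\varepsilon_2)\beta$ whenever $\beta\le \delta'$. Setting $\delta=\min\{\delta_1,\delta'/L\}$ and using $\beta\le L\alpha$ to ensure $\beta\le\delta'$, we get that the increase at $i_k$ is at least $(L'-\varepsilon_2)(L-\varepsilon_1)\alpha\ge(\Gamma^+_f(P)-\varepsilon)\alpha$, as required.

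The main (minor) obstacle is that the inductive hypothesis for the lower bound must be applied with an input $\beta$ that itself depends on $\alpha$, so one must pick $\delta$ small enough that the flow entering $i_1$ falls in the range where the inductive hypothesis is effective on the remaining walk; this forces the rescaling by $L$ in the definition of $\delta$, and the positivity $L>0$ is exactly what makes the induction go through. A final sanity check is that the argument applies uniformly to forward and backward arcs: by construction $\Gamma_{ji}(\alpha)=-\Gamma_{ij}^{-1}(-\alpha)$ is again concave and strictly monotone increasing on its domain, so (\ref{eq:delta-pluszminusz}) remains available on residual arcs of either orientation.
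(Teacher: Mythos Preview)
Your proof is correct and follows essentially the same inductive approach as the paper. The only cosmetic difference is the direction of the decomposition: the paper inducts on prefixes $P_h=i_0\ldots i_h$ and peels off the last arc at each step, whereas you peel off the first arc and apply the hypothesis to the suffix $P'=i_1\ldots i_k$; the analytic content (choosing a product of two slightly reduced factors to beat $\Gamma^+_f(P)-\varepsilon$, then shrinking $\delta$ so the intermediate flow falls in the range where the hypothesis applies) is identical.
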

\begin{proof}
The first part is trivial by concavity. 
We prove the second part by induction on the subpaths $P_h=i_0\ldots i_h$ for $h=0,\ldots, k$. 
There is nothing to prove for $h=0$; assume we have already proved it for $P_{h-1}$.
We want to find a $\delta>0$ for some an $\varepsilon>0$ satisfying the claim for $P_h$.
First, it is possible to pick a
small enough $\varepsilon^*>0$ such that
\begin{equation}
\Gamma^+_f(P_h)-\varepsilon<(\Gamma^+_{i_{h-1}i_h}(f_{i_{h-1}i_h})-\varepsilon^*) (\Gamma^+_f(P_{h-1})-\varepsilon^*).\label{eq:beta2}
\end{equation}
By the definition of $\Gamma^+_{i_{h-1}i_h}$,  there
exists a $\delta^*>0$ such that for any $0<\beta\le \delta^*$, 
\begin{equation}
(\Gamma^+_{i_{h-1}i_h}(f_{i_{h-1}i_h})-\varepsilon^*)\beta\le  \Gamma_{i_{h-1}i_h}(f_{i_{h-1}i_h}+\beta)-\Gamma_{i_{h-1}i_h}(f_{i_{h-1}i_h}).\label{eq:beta}
\end{equation}
By induction for $P_{h-1}$ and $\varepsilon^*$, we can choose a small enough $\delta>0$ with the following properties:
If $0<\alpha<\delta$, then $\Gamma^+_f(P_{h-1})\alpha\le \delta^*$, and
the increase of $e_{h-1}$ for $f^{\alpha,P_{h-1}}$ is at least
$\beta=(\Gamma^+_f(P_{h-1})-\varepsilon^*)\alpha$. Then (\ref{eq:beta2}) and (\ref{eq:beta}) show that 
 for $0< \alpha\le \delta$, $f^{\alpha,P}$ increases $e_{i_h}$ by at least $(\Gamma^+_f(P)-\varepsilon)\alpha$.
\end{proof}

The definition of GAPs is  analogous to the linear case, with the only difference that 
$\Gamma^+_f(P)>M_s/M_t$ instead of $\gamma(P)>M_s/M_t$ in case {\em(c)}.
The following lemma can be proved similarly as Lemma~\ref{lem:opt-no-GAP}.
\begin{lemma}\label{lem:conv-opt-no-GAP}
If $f$ is an optimal solution, then no GAP may exist. \trivi
\end{lemma}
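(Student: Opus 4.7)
The plan is to argue by contrapositive: assuming a GAP exists, I will exhibit a small perturbation of $f$ that strictly decreases the excess discrepancy $\kappa_f=\sum_i M_i\kappa_i$, contradicting optimality. The structure follows the proof of Lemma~\ref{lem:opt-no-GAP}, with the crucial difference that multiplicative gain factors $\gamma(\cdot)$ are replaced by $\Gamma^+_f(\cdot)$, and the ``exact'' linear identities are replaced by the one-sided approximations from Claim~\ref{cl:flow-path-inc}. The main obstacle is precisely this: sending $\alpha$ units along a walk does not deliver exactly $\Gamma^+_f(P)\alpha$ at the endpoint, only $(\Gamma^+_f(P)-\varepsilon)\alpha$ for small enough $\alpha$. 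So in each case I must leave a slack in the inequality characterizing the GAP so that this $\varepsilon$-error can be absorbed.

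\textbf{Case (b)} is the easiest and sets the pattern. Given a path $P$ from $s$ with $e_s>0$ to $t$ with $e_t<0$, apply Claim~\ref{cl:flow-path-inc} with $\varepsilon=\Gamma^+_f(P)/2$ to obtain a $\delta>0$ such that for all $0<\alpha\le \delta$, sending $\alpha$ units along $P$ increases $e_t$ by at least $\tfrac12 \Gamma^+_f(P)\alpha>0$ and decreases $e_s$ by exactly $\alpha$. Choosing $\alpha$ sufficiently small so that $s$ remains nonnegative and $t$ remains negative, $\kappa_s$ stays $0$ and $\kappa_t$ strictly decreases, so $\kappa_f$ strictly decreases.

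\textbf{Case (c)}, where $e_s\le 0$, $e_t<0$, and $\Gamma^+_f(P)>M_s/M_t$, proceeds identically, but now we must balance the cost increase at $s$ against the cost decrease at $t$. Pick $\varepsilon>0$ small enough that $\Gamma^+_f(P)-\varepsilon > M_s/M_t$ still holds, and apply Claim~\ref{cl:flow-path-inc} to obtain the corresponding $\delta$. For $0<\alpha\le \delta$ small enough that both endpoints remain in the regime $e_s\le 0$, $e_t<0$, the change in $\kappa_f$ is at most
\[
M_s\alpha - M_t(\Gamma^+_f(P)-\varepsilon)\alpha \;=\; \alpha\bigl(M_s - M_t(\Gamma^+_f(P)-\varepsilon)\bigr) \;<\; 0.
\]

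\textbf{Case (a)} combines both operations. Let $C$ be a flow-generating cycle with $i_0\in V(C)$ and $P$ an $i_0$--$t$ path with $e_t<0$. By Lemma~\ref{lem:flow-gen}, sending $\alpha$ units around $C$ increases $e_{i_0}$ by some amount $\beta(\alpha)$ with $\beta(\alpha)/\alpha \to \Gamma^+_f(C)-1 > 0$ as $\alpha\to 0^+$ (formally applied to the walk obtained by traversing $C$ once from $i_0$ back to $i_0$, then invoking Claim~\ref{cl:flow-path-inc}). Fixing any $0<\eta<\Gamma^+_f(P)(\Gamma^+_f(C)-1)$, for sufficiently small $\alpha$ the resulting pseudoflow has $e_{i_0}$ increased by a positive amount and all other excesses unchanged. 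Now apply the case~(b) argument to push this generated excess from $i_0$ to $t$ along $P$; by the same approximation, $e_t$ increases by at least $(\Gamma^+_f(P)\beta(\alpha)-\eta\alpha)$, which is positive for small $\alpha$, while no node other than $t$ turns negative. Hence $\kappa_t$ strictly decreases and $\kappa_f$ strictly decreases, contradicting optimality. This completes all three cases.
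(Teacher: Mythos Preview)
Your proof is correct and takes the same route as the paper, which simply refers back to the linear case (Lemma~\ref{lem:opt-no-GAP}) and relies on Claim~\ref{cl:flow-path-inc} for the needed approximations; you have spelled out those details appropriately.

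One small technical point in case~(a): after sending $\alpha$ around $C$ the underlying pseudoflow has changed to some $f^{(1)}$, so invoking Claim~\ref{cl:flow-path-inc} on $P$ with the original $\Gamma^+_f(P)$ is not literally justified (and if $e_{i_0}\le 0$ the reduction to case~(b) does not apply verbatim). The cleanest fix is to avoid the quantitative bound altogether: after the cycle step you have $\beta(\alpha)>0$ new excess at $i_0$, and pushing exactly $\beta(\alpha)$ along $P$ delivers a strictly positive amount to $t$ simply because every $\Gamma_e$ is strictly increasing, while $e_{i_0}$ returns to its original value. If instead $e_{i_0}<0$, the cycle step alone already strictly decreases $\kappa_f$ and the path is not needed. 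With this adjustment the argument is airtight.
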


\medskip
\textsl{Relabelings} are also defined analogously as for generalized flows.
Given $\mu: V\rightarrow \R_{>0}\cup\{\infty\}$, let us define $f_{ij}^\mu=f_{ij}/\mu_i$ for each arc $ij\in E$. We get problems equivalent to the original with relabeled functions
$\Gamma_{ij}^\mu(\alpha)=\Gamma_{ij}(\mu_i\alpha)/\mu_j$.
Accordingly, the relabeled demands, excesses, and capacities are
$b^\mu_i=b_i/\mu_i$, $e^\mu_i=e_i/\mu_i$, and
$u^\mu_{ij}=u_{ij}/\mu_i$.
A relabeling is \textsl{conservative}, if for any residual arc $ij\in E_f$,
 $\Gamma^{\mu+}_{ij}(f^\mu_{ij})\le 1$, that is, no edge may increase the relabeled flow.
Furthermore we require $\mu_i\ge 1/M_i$ for every $i\in V$ and equality whenever 
$e_i<0$.

We use the same convention for infinite $\mu_i$ values as for generalized flows.
If  $\mu_i=\infty$, we define $b^\mu_i=e^\mu_i=0$, $u_{ij}^\mu=0$ for $ij\in E$, and
furthermore  $\Gamma^{\mu+}_{ji}(f_{ji}^\mu)=0$ for all arcs $ji\in E_f$. Finally, for  $ij\in E_f$ with $\mu_i=\infty$, $\mu_j<\infty$, let $\Gamma^{\mu+}_{ij}(f_{ij}^\mu)=\infty$.

If $\mu$ is conservative, then if for a node $i\in V$  there exists a path from $i$ to a node $t\in V$
with $e_t<0$, then $\mu_i<\infty$. The following claim is also easy to verify.

\begin{claim}\label{cl:deriv-relabel}
$\Gamma^{\mu+}_{ij}(\alpha)=\frac{\mu_i}{\mu_j}\Gamma^+_{ij}(\alpha)$, and
$\Gamma^{\mu-}_{ij}(\alpha)=\frac{\mu_i}{\mu_j}\Gamma^-_{ij}(\alpha)$.\trivi
\end{claim}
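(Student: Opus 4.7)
The plan is to compute the right and left derivatives directly from the definition $\Gamma^\mu_{ij}(\alpha)=\Gamma_{ij}(\mu_i\alpha)/\mu_j$. The factor $1/\mu_j$ pulls out of the limit, and the affine substitution $\alpha\mapsto\mu_i\alpha$ inside rescales the increment by the positive constant $\mu_i$, which both preserves the sign of the limiting direction and produces a $\mu_i$ in the numerator.

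Concretely, for the right derivative I would write
\[
\Gamma^{\mu+}_{ij}(\alpha)=\lim_{\Delta\to 0^+}\frac{\Gamma^\mu_{ij}(\alpha+\Delta)-\Gamma^\mu_{ij}(\alpha)}{\Delta}=\frac{1}{\mu_j}\lim_{\Delta\to 0^+}\frac{\Gamma_{ij}(\mu_i\alpha+\mu_i\Delta)-\Gamma_{ij}(\mu_i\alpha)}{\Delta},
\]
then substitute $\Delta'=\mu_i\Delta$, which is legitimate because $\mu_i>0$ gives the equivalence $\Delta\to 0^+\Leftrightarrow\Delta'\to 0^+$. The result is
\[
\Gamma^{\mu+}_{ij}(\alpha)=\frac{\mu_i}{\mu_j}\lim_{\Delta'\to 0^+}\frac{\Gamma_{ij}(\mu_i\alpha+\Delta')-\Gamma_{ij}(\mu_i\alpha)}{\Delta'}=\frac{\mu_i}{\mu_j}\Gamma^+_{ij}(\mu_i\alpha),
\]
which is the claimed identity under the natural reading that on the right-hand side $\Gamma^+_{ij}$ is evaluated at the un-relabeled argument $\mu_i\alpha$. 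The left-derivative identity is obtained by the same manipulation with $\Delta\to 0^-$, using the corresponding definitions in (\ref{eq:delta-minusz}).

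The degenerate situations where $\mu_i=\infty$ or $\mu_j=\infty$ are covered by the conventions $b^\mu_i=e^\mu_i=u^\mu_{ij}=0$, $\Gamma^{\mu+}_{ji}(\cdot)=0$, and $\Gamma^{\mu+}_{ij}(\cdot)=\infty$ stated just above the claim, so no separate argument is required. There is no real obstacle here: the claim is essentially the chain rule for one-sided derivatives under affine reparameterization by a positive scalar, and its role is purely to record the bookkeeping that lets later arguments pass freely between $\Gamma^+_{ij}$ and $\Gamma^{\mu+}_{ij}$.
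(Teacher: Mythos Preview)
Your argument is correct and is exactly the one the paper has in mind: the claim is marked with a bare $\square$ and preceded by ``easy to verify,'' so there is no proof in the paper to compare against beyond the implicit chain-rule computation you spelled out. Your observation that the right-hand side should literally read $\Gamma^+_{ij}(\mu_i\alpha)$ is accurate; the paper's notation is slightly abusive, identifying the relabeled argument $\alpha$ on the left with the un-relabeled $\mu_i\alpha$ on the right, which is precisely how the claim is applied immediately afterward to get $\Gamma^{\mu+}_f(P)=\tfrac{\mu_s}{\mu_t}\Gamma^+_f(P)$.
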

This claim implies that for an $s-t$ walk $P$, $\Gamma^{\mu+}_f(P)=\frac{\mu_s}{\mu_t}\Gamma^+_f(P)$, and thus 
for a cycle $C$, $\Gamma^{\mu+}_f(C)=\Gamma^{+}_f(C)$.


\begin{theorem}[\cite{Shigeno06}]\label{thm:conc-opt}
For a pseudoflow $f$, the
following are equivalent.
\begin{enumerate}[(i)]
\item $f$ is an optimal solution to the symmetric version.
\item $E_f$ contains no generalized augmenting paths.
\item There exists a conservative labeling $\mu$ with  $e_i=0$ whenever $1/M_i<\mu_i<\infty$.
\end{enumerate}
%
\end{theorem}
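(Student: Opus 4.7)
The plan is to follow the template of Theorem~\ref{thm:genflow-opt} for the linear case and establish the cyclic chain $(i)\Rightarrow(ii)\Rightarrow(iii)\Rightarrow(i)$. The implication $(i)\Rightarrow(ii)$ is already Lemma~\ref{lem:conv-opt-no-GAP}. For $(iii)\Rightarrow(i)$ I would invoke convex programming duality: the problem is to minimize $\sum_i M_i\kappa_i$ subject to $e_i+\kappa_i\ge 0$, $\kappa_i\ge 0$, and $\ell\le f\le u$, which is convex because each $-\Gamma_{ji}$ is convex. With dual multiplier $\lambda_i\in[0,M_i]$ on the excess constraint, KKT stationarity in $f_{ij}$ reads $\lambda_j\Gamma_{ij}^+(f_{ij})\le\lambda_i$ on arcs with $f_{ij}<u_{ij}$ and $\lambda_j\Gamma_{ij}^-(f_{ij})\ge\lambda_i$ on arcs with $f_{ij}>0$. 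Setting $\mu_i=1/\lambda_i$ (with $\mu_i=\infty$ when $\lambda_i=0$), Claim~\ref{cl:deriv-relabel} turns both inequalities into $\Gamma_{ij}^{\mu+}(f_{ij}^\mu)\le 1$ on every $ij\in E_f$, while complementary slackness on $\kappa_i$ forces $\mu_i=1/M_i$ whenever $e_i<0$ and slackness on the excess constraint yields $e_i=0$ wherever $1/M_i<\mu_i<\infty$.

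The substantive work is in the constructive direction $(ii)\Rightarrow(iii)$, which mirrors the argument in Theorem~\ref{thm:genflow-opt}. Let $N=\{t:e_t<0\}$. If $N=\emptyset$ the all-$\infty$ labeling is conservative. Otherwise define
\[
\mu_i=\inf\left\{\frac{1}{\Gamma_f^+(P)M_t}\,:\,P\text{ an }i\text{-to-}t\text{ walk in }E_f,\ t\in N\right\},
\]
with $\mu_i=\infty$ when no such walk exists. Absence of type (a) GAPs guarantees that no cycle $C$ reachable to $N$ satisfies $\Gamma_f^+(C)>1$, so any cycle inside a candidate walk can be deleted without decreasing $\Gamma_f^+$; hence the infimum is attained on simple walks and is well defined. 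Appending a residual arc $ij\in E_f$ to an optimal walk from $j$ yields $\mu_i\le \mu_j/\Gamma_{ij}^+(f_{ij})$, which by Claim~\ref{cl:deriv-relabel} is precisely $\Gamma_{ij}^{\mu+}(f_{ij}^\mu)\le 1$. If $e_i>0$, no $i$-to-$N$ walk can exist (else a type (b) GAP), so $\mu_i=\infty$; if $e_i\le 0$ and $\mu_i<1/M_i$, the defining walk would be a type (c) GAP, so $\mu_i\ge 1/M_i$; and for $t\in N$ the empty walk with gain $1$ gives $\mu_t\le 1/M_t$, hence equality. The condition in (iii) is immediate, since any node with $1/M_i<\mu_i<\infty$ satisfies $e_i=0$ by this case analysis.

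The main obstacle I anticipate is the concave counterpart of the ``cycle removal'' step. In the linear case, deleting a cycle $C$ from a walk multiplies the gain by $1/\gamma(C)\ge 1$ purely for combinatorial reasons; in the concave setting the right derivatives depend on the current pseudoflow, so one must argue via Lemma~\ref{lem:flow-gen} (and Claim~\ref{cl:flow-path-inc}) that $\Gamma_f^+(C)\le 1$ on every cycle that does not generate flow, and then combine this with Claim~\ref{cl:deriv-relabel} to propagate the bound through relabelings. Once this is in place the remaining verifications reduce to the same type (a)/(b)/(c) GAP case analysis as in Lemma~\ref{lem:conv-opt-no-GAP}, and the concave proof tracks the linear one step for step.
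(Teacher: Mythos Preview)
Your proposal is correct and matches the paper's proof essentially step for step: the paper also invokes KKT for the equivalence of (i) and (iii), cites Lemma~\ref{lem:conv-opt-no-GAP} for $(i)\Rightarrow(ii)$, and states that $(ii)\Rightarrow(iii)$ is identical to the linear argument in Theorem~\ref{thm:genflow-opt} with $\gamma(P)$ replaced by $\Gamma_f^+(P)$. The obstacle you flag in your last paragraph is not a real difficulty --- once $f$ is fixed, $\Gamma_f^+(P)$ is simply a product of fixed positive numbers, so the cycle-removal step is purely multiplicative just as in the linear case.
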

\begin{proof}
The equivalence of {\em(i)} and {\em(iii)} follows by the Karush-Kuhn-Tucker conditions, with $\mu_i$ being the reciprocal of the Lagrange multiplier corresponding to $e_i+\kappa_i\ge 0$. {\em(i)} implies {\em(ii)} by Lemma~\ref{lem:conv-opt-no-GAP}.
The proof of {\em (ii)$\Rightarrow$(iii) } is the same as in Theorem~\ref{thm:genflow-opt},
with $\gamma(P)$ replaced by $\Gamma^+_f(P)$.
\end{proof}
}

\nottoggle{full}{
\subsection{Optimality conditions}\label{sec:concave-opt}
The characterization of optimality was given in \cite{Shigeno06}; we have to modify them slightly as we use the symmetric formulation.
The problem can be easily transformed to an equivalent instance with {\em(i)} 
$\ell=0$ for every arc  and $\Gamma_{ij}(0)=0$ for every arc with $\Gamma_{ij}(0)>-\infty$; and {\em(ii)} every gain function $\Gamma_{ij}$ is strictly monotone increasing on $[0,u_{ij}]$. We shall assume these properties in the sequel.

The concavity of $\Gamma_{ij}$  implies that for each $0\le \alpha$, there exists the right
derivative, denoted by $\Gamma_{ij}^+(\alpha)$, and for $0< \alpha$, there exists the left
derivative $\Gamma_{ij}^-(\alpha)$. If $0\le \alpha<\alpha'$, then
$\Gamma_{ij}^+(\alpha')\le \Gamma_{ij}^-(\alpha')\le \Gamma_{ij}^+(\alpha)\le \Gamma_{ij}^-(\alpha)$.

For a pseudoflow $f:E\rightarrow \R$, we define the residual network
by $ij\in E_f$ if $ij\in E$ or $ji\in E$ and $f_{ji}>0$.
For notational convenience, we define $f_{ji}=-\Gamma_{ij}(f_{ij})$ on backward arcs.
We also define the function  $\Gamma_{ji}(\alpha): [-\Gamma_{ij}({u}_{ij}),\Gamma_{ij}(0)]\rightarrow [- u_{ij},0]$ by
$\Gamma_{ji}(\alpha)=-\Gamma^{-1}_{ij}(-\alpha)$.
Hence $\Gamma_{ji}(f_{ji})=-f_{ij}$.

In the concave setting, we call a cycle $C$ in $E_f$ a {\sl flow generating cycle}, if $\Gamma^+(C)=\Pi_{ij\in C}\Gamma^+_{ij}(f_{ij})>1$.
For such a $C$, it can be shown that positive flow can be generated in any node of $C$ by sending flow around the cycle. 
The pair $(C,P)$ is called a \textsl{generalized augmenting path (GAP)} in the following cases:
\textbf{(a)}
$C$ is a flow generating cycle, $i\in V(C)$, $t\in V$ is a node with $e_t<0$, and
 $P$ is a path in $E_f$ from $i$ to $t$ ($i=t$, $P=\emptyset$ is possible);
\textbf{(b)}
$C=\emptyset$, and
$P$ is a path in $E_f$ between two nodes $s$ and $t$ with $e_s>0$, $e_t<0$;
\textbf{(c)}
$C=\emptyset$, and
$P$ is a path in $E_f$ between  $s$ and $t$ with $e_s\le 0$, $e_t<0$ 
and  $\Gamma^+_f(P)>M_s/M_t$.

\begin{lemma}\label{lem:conv-opt-no-GAP}
If $f$ is an optimal solution, then no GAP exists. \trivi
\end{lemma}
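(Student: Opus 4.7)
The plan is to mimic the proof of Lemma \ref{lem:opt-no-GAP} from the linear setting, handling the three GAP types in turn, but now using Claim \ref{cl:flow-path-inc} (whose nonlinear second half is the key tool) to show that for sufficiently small $\alpha>0$, an infinitesimal augmentation along the GAP produces a genuine decrease in the excess discrepancy $\kappa_f=\sum_{i}M_i\kappa_i$. In all three cases we will argue contrapositively: given a GAP, exhibit a perturbation of $f$ that strictly decreases $\kappa_f$, contradicting optimality. The main new subtlety compared to the linear case is that arc gains now depend on the current flow, so the actual flow transported along $P$ (or generated around $C$) is not the product $\Gamma^+_f(P)\alpha$ but only approximates it from below as $\alpha\to 0^+$; this is precisely what Claim \ref{cl:flow-path-inc} quantifies.

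For case \textbf{(a)}, I would first pick a small $\varepsilon>0$ with $\Gamma^+_f(C)-\varepsilon>1$, and apply Claim \ref{cl:flow-path-inc} to $C$ starting at $i$: sending $\alpha$ units around $C$ produces at least $(\Gamma^+_f(C)-\varepsilon)\alpha-\alpha>0$ new excess at $i$ once $\alpha$ is small enough (capacity constraints are automatically respected for such $\alpha$). Apply Claim \ref{cl:flow-path-inc} once more to push this new excess along $P$ to $t$; the amount arriving at $t$ is strictly positive, so $e_t$ strictly increases. Since $e_t<0$ initially and $\alpha$ is small, $e_t$ remains negative, so $\kappa_t$ strictly decreases while all other $\kappa_j$ stay unchanged, and $\kappa_f$ strictly decreases.

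Cases \textbf{(b)} and \textbf{(c)} are handled by sending $\alpha$ units along $P$ from $s$ to $t$ directly. In case \textbf{(b)}, $\kappa_s$ stays zero for $\alpha<e_s$ and $\kappa_t$ strictly decreases by an amount at least $(\Gamma^+_f(P)-\varepsilon)\alpha>0$ for small $\alpha$, again by Claim \ref{cl:flow-path-inc}. In case \textbf{(c)}, the perturbation raises $M_s\kappa_s$ by exactly $M_s\alpha$ (since $e_s\le 0$), while it lowers $M_t\kappa_t$ by at least $M_t(\Gamma^+_f(P)-\varepsilon)\alpha$ for small $\alpha$. Choosing $\varepsilon$ so that $M_t(\Gamma^+_f(P)-\varepsilon)>M_s$, which is possible by the strict inequality $\Gamma^+_f(P)>M_s/M_t$ in the definition of a type (c) GAP, yields the required strict decrease in $\kappa_f$.

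The only real obstacle is the careful handling of the limiting quantities: one must ensure the $\varepsilon$ used in Claim \ref{cl:flow-path-inc} is chosen before $\alpha$, and that $\alpha$ is small enough simultaneously (i) to validate the claim's lower bound along both $C$ and $P$ in case (a), (ii) to respect all residual capacities, and (iii) to keep $e_s$ and $e_t$ on the correct sides of zero so that $\kappa_s,\kappa_t$ behave as claimed. Once these are set up in the right order, each case is a short, direct verification and the lemma follows.
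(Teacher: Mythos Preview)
Your proof is correct and follows essentially the same approach as the paper, which merely states that the lemma ``can be proved similarly as Lemma~\ref{lem:opt-no-GAP}'' without giving details. You have made explicit the one nontrivial adaptation needed in the concave setting, namely invoking Claim~\ref{cl:flow-path-inc} (equivalently Lemma~\ref{lem:flow-gen} for case~(a)) to control the actual flow delivered for small $\alpha$, and your ordering of the choices of $\varepsilon$ and $\alpha$ is exactly right.
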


\textsl{Relabelings} are also defined analogously as for generalized flows.
Given $\mu: V\rightarrow \R_{>0}\cup\{\infty\}$, let us define $f_{ij}^\mu=f_{ij}/\mu_i$ for each arc $ij\in E$. We get problems equivalent to the original with relabeled functions
$\Gamma_{ij}^\mu(\alpha)=\Gamma_{ij}(\mu_i\alpha)/\mu_j$.
Accordingly, the relabeled demands, excesses, and capacities are
$b^\mu_i=b_i/\mu_i$, $e^\mu_i=e_i/\mu_i$, and
$u^\mu_{ij}=u_{ij}/\mu_i$.
A relabeling is \textsl{conservative}, if for any residual arc $ij\in E_f$,
 $\Gamma^{\mu+}_{ij}(f^\mu_{ij})\le 1$, that is, no edge may increase the relabeled flow.
Furthermore we require $\mu_i\ge 1/M_i$ for every $i\in V$ and equality whenever 
$e_i<0$.

If  $\mu_i=\infty$, we define $b^\mu_i=e^\mu_i=0$, $u_{ij}^\mu=0$ for $ij\in E$, and
furthermore  $\Gamma^{\mu+}_{ji}(f_{ji}^\mu)=0$ for all arcs $ji\in E_f$. Finally, if $ij\in E_f$ with $\mu_i=\infty$, $\mu_j<\infty$, then $\Gamma^{\mu+}_{ij}(f_{ij}^\mu)=\infty$.
The following theorem can be derived using the Karush-Kuhn-Tucker conditions.
\begin{theorem}[\cite{Shigeno06}]\label{thm:conc-opt}
Let $f\in \R^E$ satisfy $0\le f\le \tilde u$.
Then the following are equivalent.
\textbf{(i)} $f$ is an optimal solution to the symmetric version.
\textbf{(ii)} $E_f$ contains no generalized augmenting paths.
\textbf{(iii)} There exists a conservative labeling $\mu$ with  $e_i=0$ whenever $1/M_i<\mu_i<\infty$. \trivi
\end{theorem}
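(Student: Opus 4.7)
The plan is to follow the same three-way structure used for Theorem~\ref{thm:genflow-opt}: establish (i)$\Leftrightarrow$(iii) by convex programming duality, deduce (i)$\Rightarrow$(ii) directly from Lemma~\ref{lem:conv-opt-no-GAP}, and then close the cycle by a constructive argument (ii)$\Rightarrow$(iii) that builds the conservative labeling from the absence of GAPs.

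For (i)$\Leftrightarrow$(iii), I would write the problem as a convex program: minimize $\sum_{i\in V} M_i \kappa_i$ subject to the constraints $e_i(f)+\kappa_i\ge 0$ for $i\in V$, together with the box constraints $0\le f\le u$ and $\kappa\ge 0$. Since each $\Gamma_{ij}$ is concave, $e_i$ is a concave function of $f$, so these inequalities define a convex feasible region. Introduce a Lagrange multiplier $\lambda_i\ge 0$ for the excess inequality at node $i$ and let $\mu_i=1/\lambda_i$ (with $\mu_i=\infty$ interpreted as $\lambda_i=0$). Writing the stationarity conditions for $\kappa_i$ gives $\lambda_i\le M_i$, i.e. $\mu_i\ge 1/M_i$, together with complementarity $\lambda_i(e_i+\kappa_i)=0$, which forces $\mu_i=1/M_i$ whenever $e_i<0$. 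The stationarity condition for $f_{ij}$, combined with Claim~\ref{cl:deriv-relabel}, is exactly $\Gamma^{\mu+}_{ij}(f^\mu_{ij})\le 1$ on residual arcs, with equality required whenever $0<f_{ij}<u_{ij}$; thus the KKT conditions are equivalent to $\mu$ being a conservative labeling. Finally, complementary slackness $(M_i-\lambda_i)\kappa_i=0$ forces $\kappa_i=0$ and hence $e_i=0$ whenever $1/M_i<\mu_i<\infty$.

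The direction (i)$\Rightarrow$(ii) is immediate from Lemma~\ref{lem:conv-opt-no-GAP}.

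For (ii)$\Rightarrow$(iii) I would mimic the construction from Theorem~\ref{thm:genflow-opt}, replacing the multiplicative gain $\gamma(P)$ of a walk $P$ by the product of right derivatives $\Gamma^+_f(P)=\prod_{e\in P}\Gamma^+_e(f_e)$. If the excess discrepancy is $0$, set $\mu\equiv\infty$. Otherwise let $N=\{t\in V:e_t<0\}$, and for each $i\in V$ let
\[
\mu_i=\inf\Bigl\{\frac{1}{\Gamma^+_f(P)\,M_t}: P \text{ is a walk in } E_f \text{ from } i \text{ to some } t\in N\Bigr\},
\]
with $\mu_i=\infty$ if no such walk exists. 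The infimum is attained over simple paths because (ii) rules out flow-generating cycles on any walk reaching $N$: any cycle $C$ appearing in such a walk has $\Gamma^+_f(C)\le 1$ (else, combined with the tail of the walk, it would yield a type (a) GAP), so removing cycles cannot increase the denominator. By construction $\Gamma^{\mu+}_{ij}(f^\mu_{ij})=\tfrac{\mu_i}{\mu_j}\Gamma^+_{ij}(f_{ij})\le 1$ on every residual arc $ij$, giving conservativity on the gain side. The bound $\mu_i\ge 1/M_i$ holds for every $i$ with $e_i\le 0$, since otherwise the defining path would be a type (c) GAP; for nodes with $e_i>0$ no such path exists at all (that would be type (b)), so $\mu_i=\infty\ge 1/M_i$. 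Finally $\mu_t=1/M_t$ for $t\in N$ by taking the empty path. Hence $\mu$ is conservative, and complementarity $e_i=0$ for $1/M_i<\mu_i<\infty$ follows because such nodes have $e_i\ge 0$ (otherwise $\mu_i=1/M_i$) and $e_i\le 0$ (otherwise $\mu_i=\infty$).

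The main obstacle is the (ii)$\Rightarrow$(iii) step: one has to justify that the infimum defining $\mu_i$ is actually realized by a simple path, and to verify that small perturbations around flow-generating cycles, captured by the right-derivative product $\Gamma^+_f$ rather than a clean multiplicative gain, behave well enough to translate the linear argument. Claim~\ref{cl:flow-path-inc} and Lemma~\ref{lem:flow-gen}, which give the required lower-bound rate for flow propagation along a walk, are precisely what allows the cycle-removal and GAP-violation arguments to go through, so once these are in hand the linear proof transfers verbatim.
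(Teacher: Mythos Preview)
Your proposal is correct and follows essentially the same route as the paper: (i)$\Leftrightarrow$(iii) via the KKT conditions with $\mu_i$ the reciprocal of the multiplier on $e_i+\kappa_i\ge 0$, (i)$\Rightarrow$(ii) from Lemma~\ref{lem:conv-opt-no-GAP}, and (ii)$\Rightarrow$(iii) by repeating the construction of Theorem~\ref{thm:genflow-opt} with $\gamma(P)$ replaced by $\Gamma^+_f(P)$. You have in fact spelled out more detail than the paper does, including the cycle-removal justification and the complementarity argument for $e_i=0$ when $1/M_i<\mu_i<\infty$.
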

}{}

\iftoggle{full}{%
\subsection{$\Delta$-conservative labelings}\label{sec:concave-conserv}}{%
\subsection{$\Delta$-conservative and $\Delta$-canonical labelings}
}
\iftoggle{full}{We define the notion of $\Delta$-conservative labeling analogously as in Section~\ref{sec:delta-canon} for the linear case.}
Let us define the {\em \st} of $ij\in E_f$ by $s_f(ij)=\Gamma_{ij}(u_{ij})-\Gamma_{ij}(f_{ij})$ (if $ij$ is a backward arc, this is equivalent to 
$s_f(ij)=f_{ji}$.)
The \st expresses the maximum possible flow increase in $j$ if we saturate $ij$.
This notion enables us to identify arcs that can participite in fat paths during the algorithm.
In accordance with the other variables, the relabeled \st is defined as $s_f^\mu(ij)=s_f(ij)/\mu_j$.

Consider a scaling parameter $\Delta>0$.
\nottoggle{full}{As in the linear case, the {\em $\Delta$-fat graph} $E^\mu_f(\Delta)$ is the set of residual  arcs of relabeled \st at least $\Delta$.}{%
The {\em $\Delta$-fat graph} $E^\mu_f(\Delta)$ is the set of residual  arcs of relabeled \st at least $\Delta$: 
\[
E^\mu_f(\Delta)=\{ij: ij\in E_f: s_f^\mu(ij)\ge \Delta\}.
\]
Arcs in $E^\mu_f(\Delta)$ will be called {\em $\Delta$-fat arcs}.}
As in  \cite{Shigeno06}, we  use the following  linearization on $\Delta$-fat arcs in chunks of $\Delta$.
\begin{equation}
\theta^{\mu}_{\Delta}(ij):=
\frac {\Delta \mu_i}{\Gamma_{ij}^{-1}(\Gamma_{ij}(f_{ij})+\Delta \mu_j)-f_{ij}}\quad ij\in E_f^\mu(\Delta).\label{def:theta}
\end{equation}
\iftoggle{full}{This is well-defined since $\Gamma_{ij}(f_{ij})+\Delta\mu_{j}\le\Gamma_{ij}(u_{ij})$ for $\Delta$-fat arcs.
(\ref{def:theta}) can be written equivalently as
\begin{equation}
\theta^{\mu}_{\Delta}(ij)=
\frac\Delta{{\Gamma^\mu_{ij}}^{-1}(\Gamma^\mu_{ij}(f^\mu_{ij})+\Delta)-f_{ij}^\mu} \quad ij\in E_f^\mu(\Delta).\label{def:theta-2}
\end{equation}
Also, if $ji$ is a $\Delta$-fat arc, than using $\Gamma_{ji}(f_{ji})=-f_{ij}$ and $\Gamma_{ji}^{-1}=-\Gamma_{ij}(-\alpha)$, 
we get
\begin{equation}
\theta_{\Delta}^\mu(ji)=\frac {\Delta\mu_j}{\Gamma_{ij}(f_{ij})-\Gamma_{ij}\left(f_{ij}-\Delta\mu_i\right)}.\label{def:theta-reverse}
\end{equation}}{}
Consider a label function $\mu:V\rightarrow \R_{>0}\cup\{\infty\}$%
\nottoggle{full}{; recall that $d_i$ is the total number of arcs incident to $i$.}{.}
A node $i\in V$ is called \textsl{$\Delta$-negative} if $e^\mu_i<d_i\Delta$, \textsl{$\Delta$-neutral} if $e^\mu_i=d_i\Delta$ and
\textsl{$\Delta$-positive} if $e^\mu_i>d_i\Delta$.  
The labeling $\mu$ 
 {\em $\Delta$-conservative}, if $\theta_{\Delta}^\mu(ij)\le 1$ holds for every $ij\in E^\mu_f(\Delta)$. Furthermore, we require
$\mu_i\ge 1/M_i$ for all $i\in V$, with equality for every $\Delta$-negative node $i$.%

\nottoggle{full}{
Note that a $\Delta$-conservative labeling cannot  have any nodes with $\mu_i=\infty$.}{}
Using the convexity of $\Gamma^{-1}$, it can be shown that if $\mu$ is  a $\Delta$-conservative labeling then it is $\Delta'$-conservative
for all $\Delta'\ge \Delta$. 
\iftoggle{full}{Let $Ex^\mu(f)=\sum_{i\in V} \max\{e_i^\mu,0\}$  and $Ex_\Delta^\mu(f)=\sum_{i\in V} \max\{e_i^\mu-d_i\Delta,0\}$ denote the total relabeled
 excess of positive and $\Delta$-positive nodes, respecitively.}{%
Let
$Ex_\Delta^\mu(f)=\sum_{i\in V} \max\{e_i^\mu-d_i\Delta,0\}$ denote the total relabeled excess of $\Delta$-positive nodes.
}

The key importance of  $\Delta$-conservativity  is that it is maintained when sending $\Delta$ units of flow
on arcs with $\theta^\mu_\Delta(ij)=1$. This is  formulated in the next simple lemma.
\begin{lemma}\label{lem:conservative-preserved}
Assume $\mu$ is $\Delta$-conservative, and let $ij\in E_f^\mu(\Delta)$ be an arc with $\theta^\mu_\Delta(ij)=1$.  
If we increase $f^\mu_{ij}$ by $\Delta$, then $\Gamma(f^\mu_{ij})$ also increases by $\Delta$, and 
$\Delta$-conservativity is maintained.\nottoggle{full}{\trivi}{}
\end{lemma}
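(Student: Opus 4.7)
The plan is to unpack the hypothesis $\theta^\mu_\Delta(ij)=1$ via the relabeled form (\ref{def:theta-2}) and then verify that the only arcs whose $\theta^\mu_\Delta$-values could possibly change after the update---arc $ij$ itself and its reverse $ji$---still respect the bound. Writing the hypothesis as $\Gamma^\mu_{ij}(f^\mu_{ij}+\Delta)-\Gamma^\mu_{ij}(f^\mu_{ij})=\Delta$, the first claim (that $\Gamma^\mu_{ij}(f^\mu_{ij})$ increases by $\Delta$ whenever $f^\mu_{ij}$ does) is immediate. Let $f'$ denote the updated pseudoflow; only the flow on $ij$ has been touched, so any arc $kl\notin\{ij,ji\}$ keeps its slack and hence its $\theta^\mu_\Delta(kl)$ value.

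For arc $ij$, supposing it is still in $E^\mu_{f'}(\Delta)$, I would exploit the convexity of $\Gamma_{ij}^{-1}$ (a consequence of the concavity and strict monotonicity of $\Gamma_{ij}$) to show that the denominator in (\ref{def:theta}) cannot shrink when $f_{ij}$ is replaced by $f'_{ij}=f_{ij}+\Delta\mu_i$. Concretely, setting $y=\Gamma_{ij}(f_{ij})$, convexity applied to the evenly-spaced points $y$, $y+\Delta\mu_j$, $y+2\Delta\mu_j$ yields
\[
\Gamma_{ij}^{-1}(y+2\Delta\mu_j)-\Gamma_{ij}^{-1}(y+\Delta\mu_j) \;\ge\; \Gamma_{ij}^{-1}(y+\Delta\mu_j)-\Gamma_{ij}^{-1}(y) \;=\; \Delta\mu_i,
\]
so at the new flow $\theta^\mu_\Delta(ij)\le 1$.

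For the reverse arc $ji$ (which enters $E_{f'}$ because $f'_{ij}>0$), I would apply (\ref{def:theta-reverse}) directly:
\[
\theta^\mu_\Delta(ji) \;=\; \frac{\Delta\mu_j}{\Gamma_{ij}(f'_{ij})-\Gamma_{ij}(f'_{ij}-\Delta\mu_i)} \;=\; \frac{\Delta\mu_j}{\Gamma_{ij}(f'_{ij})-\Gamma_{ij}(f_{ij})} \;=\; 1,
\]
using $f'_{ij}-\Delta\mu_i=f_{ij}$ together with the identity from the first paragraph. Since $\mu$ itself is not modified, the bounds $\mu_i\ge 1/M_i$ remain in force. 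The one delicate step is the monotonicity calculation for arc $ij$; once convexity of $\Gamma_{ij}^{-1}$ is deployed there, both halves of the lemma are consequences of the single identity $\Gamma_{ij}(f_{ij}+\Delta\mu_i)=\Gamma_{ij}(f_{ij})+\Delta\mu_j$.
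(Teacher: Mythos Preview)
Your proof is correct and follows essentially the same approach as the paper's: both unpack $\theta^\mu_\Delta(ij)=1$ as $\Gamma^\mu_{ij}(f^\mu_{ij}+\Delta)=\Gamma^\mu_{ij}(f^\mu_{ij})+\Delta$, invoke convexity of $\Gamma_{ij}^{-1}$ for the forward arc, and use (\ref{def:theta-reverse}) to get $\theta^\mu_\Delta(ji)=1$ on the reverse arc. You simply spell out the convexity step explicitly where the paper writes ``easily follows from convexity''.
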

\iftoggle{full}{%
\begin{proof}
$\theta_\Delta^\mu(ij)=1$ is equivalent to
$\Gamma_{ij}^\mu(f^\mu_{ij})+\Delta)=\Gamma_{j}^\mu(f^\mu_{ij})+\Delta$, showing the first part.
Let $\bar f_{ij}=f_{ij}+\Delta\mu_i$ be the modified flow.
For the second part, $\theta_\Delta^\mu(ij)\le 1$ for $\bar f_{ij}$ easily follows  from convexity. 
Further, observe (\ref{def:theta-reverse}) shows that we get $\theta_\Delta^\mu(ji)=1$ for $\bar f_{ij}$.
This gives $\Delta$-conservativity for the modified flow as all other arcs are left unchanged. 
\end{proof}
}{}

\nottoggle{full}{%
The next lemma shows how a $\Delta$-conservative labeling can be transformed to a $\Delta/2$-conservative one.}{
In contrast to Lemma~\ref{lem:modify-Delta}, the following claim only enables to transform a $\Delta$-conservative labeling can be transformed to a $\Delta/2$-conservative one. The reason is that besides the set of $\Delta/2$-fat arcs being larger than the $\Delta$-fat arcs, we may have $\Delta$-fat arcs with $\theta_{\Delta}^\mu(ij)\le 1<\theta_{\Delta/2}^\mu(ij)$.}{}

\begin{lemma}\label{lem:modify-Delta-nonlin}
Let $f$ be a pseudoflow with a $\Delta$-conservative labeling $\mu$.
Then there exists a 
flow $\bar f$
such that $\mu$ is $\Delta/2$-conservative for $\bar f$ 
and $Ex_{\Delta/2}^\mu(\bar f)\le Ex_{\Delta}^\mu(f)+\frac 32m\Delta$.
\end{lemma}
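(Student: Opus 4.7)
I plan to build $\bar f$ arc by arc, mimicking the proof of Lemma~\ref{lem:modify-Delta} for the linear case. The new wrinkle in the concave setting is that $\theta^\mu_{\Delta/2}(ij)$ depends on $f_{ij}$, so it can exceed $1$ even on an arc $ij\in E_f^\mu(\Delta)$ where $\theta^\mu_\Delta(ij)\le 1$: the finer $\Delta/2$-linearization sees a larger local slope near $f_{ij}^\mu$. So there are two kinds of arcs to adjust: (i) arcs in $E_f^\mu(\Delta/2)\setminus E_f^\mu(\Delta)$, exactly as in the linear case, and (ii) arcs in $E_f^\mu(\Delta)$ with $\theta^\mu_{\Delta/2}(ij)>1$, which is new in the concave setting.

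For each such arc, I push flow forward on $ij$ and stop as soon as the relabeled output $\Gamma_{ij}^\mu(\bar f_{ij}^\mu)-\Gamma_{ij}^\mu(f_{ij}^\mu)$ reaches $\Delta/2$, the arc saturates, or $ij$ leaves the $\Delta/2$-fat graph. The driving concavity fact is that $\theta^\mu_\Delta(ij)\le 1$ translates into $\Gamma_{ij}^\mu(f_{ij}^\mu+\Delta)-\Gamma_{ij}^\mu(f_{ij}^\mu)\le\Delta$, and concavity then caps the output over every right-subinterval of length $\Delta/2$ inside $[f_{ij}^\mu,f_{ij}^\mu+\Delta]$ by $\Delta/2$. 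Consequently, pushing output $\Delta/2$ forces $\theta^\mu_{\Delta/2}(ij)\le 1$ at the new flow; and since $\theta^\mu_{\Delta/2}(ij)>1$ at $f$, the matching input change is strictly less than $\Delta/2$. For arcs of type (i), pushing output of at most $\Delta/2$ already drops the relabeled \st below $\Delta/2$ and so removes the arc from the $\Delta/2$-fat graph. In all cases, per arc, both the relabeled input change and the relabeled output change are at most $\Delta/2$.

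I expect the main obstacle to be verifying $\Delta/2$-conservativity on backward residual arcs at $\bar f$, since forward modifications can activate new backward residuals and couple forward/backward $\theta$-bounds. The plan is to exploit monotonicity of the slope of $\Gamma_{ij}^\mu$: for an arc pushed until $\theta^\mu_{\Delta/2}(ij)=1$ at $\bar f$, the average slope over $[\bar f_{ij}^\mu-\Delta/2,\bar f_{ij}^\mu]$ dominates that over $[\bar f_{ij}^\mu,\bar f_{ij}^\mu+\Delta/2]$, which equals $1$, giving $\theta^\mu_{\Delta/2}(ji)\le 1$; for an unmodified backward arc $ji$ that is $\Delta$-fat, the same monotonicity transforms the hypothesis $\theta^\mu_\Delta(ji)\le 1$ into $\theta^\mu_{\Delta/2}(ji)\le 1$; the remaining configurations reduce to one of these two.

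Once backward conservativity is in hand, the accounting is straightforward. The net relabeled change at each node $i$ is at most $d_i\Delta/2$, so every node that was not $\Delta$-negative stays not $\Delta/2$-negative, and the condition $\mu_i=1/M_i$ for previously $\Delta$-negative nodes is preserved. The at-most-$\Delta/2$ output on each of the $m$ arcs contributes at most $m\Delta/2$ to newly-created positive excess, and lowering the threshold from $d_i\Delta$ to $d_i\Delta/2$ adds a further $\sum_i d_i\Delta/2=m\Delta$, yielding $Ex_{\Delta/2}^\mu(\bar f)\le Ex_{\Delta}^\mu(f)+\tfrac{3}{2}m\Delta$.
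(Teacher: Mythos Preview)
Your approach is essentially the paper's: push relabeled output $\Delta/2$ on each $\Delta/2$-fat arc with $\theta_{\Delta/2}^\mu(ij)>1$, then do the same nodewise accounting. Two points in your backward-arc discussion need fixing, though both are easily repaired with ingredients you already have.

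First, the argument ``the average slope over $[\bar f_{ij}^\mu,\bar f_{ij}^\mu+\Delta/2]$ equals $1$'' is not what your construction gives. You push until the \emph{output} increases by $\Delta/2$, not until $\theta_{\Delta/2}^\mu(ij)=1$; the forward slope at $\bar f$ is only $\le 1$, so the concavity comparison ``left interval $\ge$ right interval'' does not yield $\ge 1$ on the left. The correct (and simpler) argument is the one the paper uses: since the input change satisfies $\bar f_{ij}^\mu-f_{ij}^\mu<\Delta/2$ (which you already noted), we have $\bar f_{ij}^\mu-\Delta/2<f_{ij}^\mu$, and hence by monotonicity $\Gamma_{ij}^\mu(\bar f_{ij}^\mu)-\Gamma_{ij}^\mu(\bar f_{ij}^\mu-\Delta/2)\ge \Gamma_{ij}^\mu(\bar f_{ij}^\mu)-\Gamma_{ij}^\mu(f_{ij}^\mu)=\Delta/2$, i.e.\ $\theta_{\Delta/2}^\mu(ji)\le 1$.

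Second, the claim that monotonicity of slopes turns $\theta_\Delta^\mu(ji)\le 1$ into $\theta_{\Delta/2}^\mu(ji)\le 1$ is false---concavity goes the other way (shorter right-intervals see \emph{larger} average slope of $\Gamma$, so $\theta_{\Delta/2}\ge\theta_\Delta$; this is exactly why type (ii) arcs exist). Fortunately you do not need it: your selection rule modifies \emph{every} residual arc (forward or backward) that is $\Delta/2$-fat with $\theta_{\Delta/2}^\mu>1$, so any unmodified arc already satisfies $\theta_{\Delta/2}^\mu\le 1$ at $f=\bar f$. That case is trivial, not a monotonicity argument.
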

\iftoggle{full}{
\begin{proof}
Consider a $\Delta/2$-fat arc $ij$ with $\theta_{\Delta/2}^\mu(ij)>1$ for $f$, that is,
\begin{equation}
 \Gamma^{-1}_{ij}\left(\Gamma_{ij}(f_{ij})+\frac\Delta2\mu_j\right)-f_{ij} <\frac{\Delta}2\mu_i.\label{eq:violate}
\end{equation}
There are two possible scenarios: {\em (a)} $ij$ was not $\Delta$-fat, that is,
\begin{equation}
\frac \Delta2\mu_j\le \Gamma_{ij}(u_{ij})-\Gamma_{ij}(f_{ij}) \le \Delta\mu_j,\label{eq:ujel}
\end{equation}
or {\em (b)} $ij$ was also a $\Delta$-fat arc. Then by $\Delta$-conservativity,
\begin{equation}
 \Gamma^{-1}_{ij}(\Gamma_{ij}(f_{ij})+\Delta\mu_j)-f_{ij} \ge \Delta\mu_i.\label{eq:regiel}
\end{equation}
In both cases, let us define
\[
\bar f_{ij}=\Gamma^{-1}_{ij}\left(\Gamma_{ij}(f_{ij})+\frac \Delta2 \mu_j\right).
\]
$\Delta/2$-fatness of $ij$ guarantees that this is well-defined.
%
In case {\em (a)}, we claim that $ij$ is not $\Delta/2$-fat for $\bar f$.
Indeed, 
\[
\Gamma_{ij}(u_{ij})-\Gamma_{ij}(\bar f_{ij})=\Gamma_{ij}(u_{ij})-\left(\Gamma_{ij}(f_{ij})+\frac \Delta2 \mu_j\right)< \frac \Delta2 \mu_j.
\]
The last inequality follows by the second part of (\ref{eq:ujel}).
In case {\em(b)}, we claim that if 
$ij$ is a $\Delta/2$-fat arc for $\bar f$ then $\theta_{\Delta/2}^\mu(ij)\le 1$ must hold for $\bar f$.
Indeed, if we subtract (\ref{eq:violate}) from (\ref{eq:regiel}), we get
\[
 \Gamma^{-1}_{ij}\left(\Gamma_{ij}(f_{ij})+\Delta\mu_j\right)-\Gamma^{-1}_{ij}\left(\Gamma_{ij}(f_{ij})+\frac\Delta2\mu_j\right)> \frac{\Delta}2\mu_i,
\]
and by substituting $\bar f_{ij}$, it follows that
\[
\Gamma^{-1}_{ij}\left(\Gamma_{ij}(\bar f_{ij})+\frac\Delta2\mu_j\right)-\bar f_{ij}> \frac{\Delta}2\mu_i,
\]
that is, $\theta_{\Delta/2}^\mu(ij)<1$ for $\bar f$.

 We next show that  if $ji$ is also a $\Delta/2$-fat arc for $\bar f$, then
$\theta_{\Delta/2}^\mu(ji)\le 1$ holds for $\bar f$. Indeed, using (\ref{def:theta-reverse}),
$\theta_{\Delta/2}^\mu(ji)\le 1$  for $\bar f$ is equivalent to 
\[
{\Gamma_{ij}(\bar f_{ij})-\Gamma_{ij}\left(\bar f_{ij}-\frac\Delta2\mu_i\right)}\ge \frac\Delta2\mu_j.
\]
Equivalently,
\[
{\Gamma_{ij}(f_{ij})+\frac \Delta2\mu_j-\Gamma_{ij}\left(\bar f_{ij}-\frac\Delta2\mu_i\right)}\ge \frac\Delta2\mu_j.
\]
Subtracting $\frac \Delta2\mu_j$, rearranging and applying the monotone increasing function $\Gamma_{ij}^{-1}$, we get
$f_{ij}\ge \bar f_{ij}-\frac\Delta2\mu_i$, that follows from (\ref{eq:violate}) by substituting $\bar f_{ij}$.

We define $\bar f_{ij}$ the above way whenever $ij$ is a $\Delta/2$-fat arc with $\theta_{\mu}(ij)>1$. (As a simple consequence of concavity, this cannot be the case for both $ij$ and $ji$.)
If this does not hold for neither $ij$ nor $ji$, then let $\bar f_{ij}=f_{ij}$. The next claim compares $f_{ij}$ and $\Gamma(f_{ij})$ to $\bar f_{ij}$ and $\Gamma(f_{ij})$.
\begin{claim}
$|\bar f_{ij}^\mu-f^\mu_{ij}|\le \frac \Delta2$ and $|\Gamma^\mu_{ij}(\bar f_{ij}^\mu)-\Gamma^\mu_{ij}(f^\mu_{ij})|\le \frac \Delta2$.
\end{claim}
\begin{proof}
There is nothing to prove if $\bar f_{ij}=f_{ij}$. Assume $f_{ij}$ was increased as above (the statement is equivalent for $ij$ and $ji$).
The first part is identical to  (\ref{eq:violate}).
By the definition of $\bar f_{ij}$,
\[
\Gamma_{ij}(\bar f_{ij})-\Gamma_{ij}(f_{ij})=\Gamma_{ij}(f_{ij})+\frac\Delta2\mu_j-\Gamma_{ij}(f_{ij})= \frac\Delta2\mu_j,
\]
giving the second part.
\end{proof}
For $\Delta/2$-conservativity, we also need to show that $\bar f$ has no $\Delta/2$-negative nodes with $\mu_i>1/M_i$. 
By the above claim, the total possible change of relabeled flow on arcs incident to $i$ is 
$d_i\Delta/2$. A node is nonnegative for  $\Delta$ if $e_i^\mu\ge d_i\Delta$ and for $\Delta/2$ if  $e_i^\mu\ge d_i\Delta/2$. Consequently, a nonnegative node cannot become $\Delta/2$-negative.

Finally, $Ex_{\Delta/2}^\mu(f)\le Ex_{\Delta}^\mu(f)+\sum_{i\in V}d_i\Delta/2$, and  each arc is responsible for creating at most $\Delta/2$ units of new excess.
This gives $Ex_{\Delta/2}^\mu(\bar f)\le Ex_{\Delta}^\mu(f)+\frac m2\Delta$, as required.
\end{proof}
}{We give the proof in the Appendix.  Analogous claims are proved in \cite{Minoux86} and \cite{Hochbaum90}.}
The subroutine {\sc Adjust}($\Delta$) performs the  simple modifications described in the proof
\iftoggle{full}{(in contrast to the linear case, this function has only one parameter).}{.}

\iftoggle{full}{

\subsection{$\Delta$-canonical labelings}\label{sec:concave-canon}}{

\medskip

}
Given a pseudoflow $f$ and a $\Delta$-conservative labeling $\mu$, the arc $ij\in E_f^\mu(\Delta)$ is called {\em tight} if $\theta^\mu_\Delta({ij})=1$. A directed path in $E_f^\mu(\Delta)$ is called tight if it consists of tight arcs. 
$\mu$ is a \textsl{$\Delta$-canonical} labeling, if from each node $i$ there exists a tight path
to a $\Delta$-negative or to a $\Delta$-neutral node. 
 {Such a path is approximately a highest gain $\Delta$-fat augmenting path.}
The subroutine \textsc{Tighten-Label}$(f,\mu,\Delta)$  returns a $\Delta$-canonical label $\mu'\ge \mu$ for a $\Delta$-conservative
label $\mu$. \iftoggle{full}{This is almost identical to the algorithm described in Section~\ref{sec:delta-canon}. The only difference is in the definition of the multiplier $\alpha$, which is given by (\ref{eq:alpha-def}) for the linear case.
Instead, we define
\begin{align*}
\alpha=\min\left\{
\min 
\left\{ \frac{1}{\theta_\Delta^\mu({ij})}:  ij\in E_f^\mu(\Delta)\right\},
\min\left\{\frac{e^\mu_i}{d_i\Delta}: i\in V-S\right\}
\right\}.
\end{align*}
In an iteration, we multiply every $\mu_i$ by $\alpha$ for $i\in V-S$, where $S$ is the set of nodes from which there exists a tight path to a $\Delta$-negative or a $\Delta$-neutral node. We claim that as in the linear case, this maintains $\Delta$-conservativity, and extends $S$ by at least one node.
This is a simple consequence of the fact that multiplying $\mu_i$ by $\alpha$ multiplies $\theta_\mu(ij)$ by $\alpha$ for every incident arc $ij$.

To verify that $\mu$ remains $\Delta$-conservative, we also have to check $\theta_\Delta^
\mu(ij)\le 1$ on all arcs $ij\in E_f^\Delta(\mu)$, $j\in V-S$. This follows by the convexity of $\Gamma_{ij}^{-1}$.}{%
 This is a multiplicative variant of Dijsktra's algorithm (see the full version for details).
In every iteration, let $S$ be the set of nodes from which there exists a tight path to a $\Delta$-negative or $\Delta$-neutral node.
We increase $\mu_i$ for every node in $V\setminus S$
at the same rate, until either a new arc becomes tight or a $\Delta$-positive node becomes $\Delta$-neutral.
In both cases, $S$ is extended.}

\subsection{The main algorithm}\label{sec:concave-alg}

\begin{figure}[htb]
\begin{center}
\fbox{\parbox{\textwidth}{
\begin{tabbing}
xxxxx \= xxx \= xxx \= xxx \= xxxxxxxxxxx \= \kill
\> \textbf{Algorithm} \textsc{Symmetric Concave Fat-Path}\\
\> \textbf{for} $i\in V$ \textbf{do} $\mu_i\leftarrow \frac{1}{M_i}$;\\
\>   \textbf{for} $ij\in E$ \textbf{do}  $f_{ij}\leftarrow u_{ij}$; \\
\>  $\Delta\leftarrow MU+1$;\\
\> \textbf{while }$(\ketenn)\Delta\ge {\varepsilon}$ \textbf{do}\\
\> \> \textbf{do}\\
\> \>  \> \textsc{Tighten-Label}$(f,\mu,\Delta)$;\\
\> \>  \> $D\leftarrow \{i\in V:  e_i>(d_i+1)\Delta\}$;\\
\> \> \> $N_0\leftarrow \{i\in V: e_i\le d_i\Delta\}$;\\
\> \> \> \textbf{pick} $s\in D$, $t\in N$ connected by a tight path $P$;\\
\> \> \> send $\Delta$ units of flow along $P$;\\
\> \>  \textbf{while} $D\neq \emptyset$;\\
\> \> \textsc{Adjust}$(\Delta)$;\\
\> \>  $\Delta\leftarrow \frac{\Delta}2$;\\
\> \textbf{return} $\varepsilon$-approximate optimal solution $f$. \\
\end{tabbing}
}}
\caption{The algorithm for symmetric concave generalized flows}\label{code:concave-genflow}
\end{center}
\end{figure}

Let us initialize $\mu_i=1/M_i$ for every $i\in V$, and $f_{ij}=u_{ij}$ for every $ij\in E$. \iftoggle{full}{(We set the upper bounds rather than the lower bounds 0 because $\Gamma_{ij}(0)=-\infty$ is allowed.)}{}
Let us pick the initial value $\Delta=MU+1$. 

The algorithm consists of $\Delta$-phases, and terminates with an $\varepsilon$-approximate solution
if $(\ketenn)\Delta<\varepsilon$. During the $\Delta$-phase, we maintain a pseudoflow $f$ and a  $\Delta$-conservative labeling $\mu$.
The $\mu_i$ values may only increase.
Let $D$ denote the set of nodes  $i$ with ${e}_i^\mu>(d_i+1)\Delta$. The $\Delta$-phase consists of iterations,  and terminates whenever $D$ becomes empty.
In each iteration, we update $\mu$ to a canonical labeling by calling \textsc{Tighten-Label}$(f,\mu,\Delta)$. If $D\neq\emptyset$ still holds, send $\Delta$ units of relabeled flow on a tight path from some $s\in D$ 
to a $\Delta$-neutral or $\Delta$-negative node $t$.

\subsection{Analysis}\label{sec:concave-anal}
\begin{claim}\label{cl:gen-initial}
The initial $\mu$ is $\Delta$-conservative, and $\Delta$-conservativity is maintained during the entire $\Delta$-phase. {\trivi}{}
\end{claim}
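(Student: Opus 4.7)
The plan is to verify two assertions separately: that the initial configuration is $\Delta$-conservative, and that each update inside a $\Delta$-phase preserves the invariant.

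For the initial $\Delta$-conservativity, I would observe that since $f_{ij}=u_{ij}$ on every arc $ij\in E$, no forward arc lies in $E_f$. Each residual arc is therefore a backward arc $ji$ (with $ij\in E$ and $f_{ij}>0$), whose \st equals $s_f(ji)=f_{ij}=u_{ij}\le U$. The relabeled \st is then $s_f^\mu(ji)=u_{ij}M_i\le UM<\Delta=MU+1$, so no arc is $\Delta$-fat and $E_f^\mu(\Delta)=\emptyset$. The condition $\theta^\mu_\Delta(ij)\le 1$ thus holds vacuously, and the remaining requirement $\mu_i\ge 1/M_i$ (with equality at $\Delta$-negative nodes) is immediate since $\mu_i=1/M_i$ for every $i\in V$.

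For maintenance, I would examine the two updates inside a single iteration of the $\Delta$-phase. First, \textsc{Tighten-Label}$(f,\mu,\Delta)$ only raises labels, and only at nodes outside the current set $S$ of nodes admitting a tight $\Delta$-fat path to a $\Delta$-negative or $\Delta$-neutral node. Every $\Delta$-negative node belongs to $S$ from the start, so $\mu_i=1/M_i$ is preserved there. The multiplier $\alpha$ is picked so that $\theta^\mu_\Delta(ij)\le 1$ is kept on every $\Delta$-fat arc---for arcs whose head lies outside $S$ this uses convexity of $\Gamma_{ij}^{-1}$---and the procedure halts the increase of $\mu_i$ exactly at the moment $i$ would cease to be $\Delta$-positive, so no $\Delta$-positive node is pushed across into the $\Delta$-negative region.

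Second, each path augmentation sends $\Delta$ units of relabeled flow along a tight path $P$ from some $s\in D$ to a $\Delta$-neutral or $\Delta$-negative $t\in N_0$. Applying Lemma~\ref{lem:conservative-preserved} arc-by-arc to $P$ preserves $\theta^\mu_\Delta(\cdot)\le 1$; labels are untouched. It remains to check that no new $\Delta$-negative node is created. At every internal node of $P$ the relabeled inflow and outflow both equal $\Delta$ (by tightness together with Lemma~\ref{lem:conservative-preserved}), so $e^\mu_i$ is unchanged; the source satisfies $e^\mu_s>(d_s+1)\Delta$, hence $e^\mu_s-\Delta>d_s\Delta$ after the push and $s$ stays $\Delta$-positive; the head $t$ only gains relabeled excess, so it cannot become $\Delta$-negative either. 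The delicate point throughout is the rigid invariant $\mu_i=1/M_i$ at $\Delta$-negative nodes: one cannot afford to inadvertently create such a node after already having raised its label. Both subroutines are designed with precisely this in mind, so the proof reduces to reading off these design choices in combination with Lemma~\ref{lem:conservative-preserved}.
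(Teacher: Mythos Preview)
Your argument is correct and follows essentially the same approach as the paper: show $E_f^\mu(\Delta)=\emptyset$ initially (so the arc condition is vacuous and $\mu_i=1/M_i$ everywhere), then invoke the design of \textsc{Tighten-Label} and Lemma~\ref{lem:conservative-preserved} for the two updates in a phase. You are in fact more careful than the paper in explicitly verifying that path augmentation creates no new $\Delta$-negative node (the paper subsumes this into Lemma~\ref{lem:conservative-preserved}, whose proof only treats the arc condition).
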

\iftoggle{full}{\begin{proof}
Initially,  $f\equiv u$ and hence $E_f$ is the set of backward arcs. 
For an arc $ij\in E$, $s_\Delta^\mu(ji)=u_{ij}/\mu_i\le MU<\Delta$, and hence $E_f^\mu(\Delta)=\emptyset$.
Also, $\mu_i=1/M_i$ holds for every node $i$.
\textsc{Tighten-Label}$(f,\mu,\Delta)$ clearly maintains
$\Delta$-conservativity. 
We use only tight arcs to send flow, and Lemma~\ref{lem:conservative-preserved} guarantees that this preserves $\Delta$-conservativity.
At the end of the $\Delta$-phase, \textsc{Adjust}$(\Delta)$ transforms $f$ to a $\Delta/2$-conservative pseudoflow.
\end{proof}}{}

\begin{claim}\label{cl:preproc}
The $\Delta$-phase starts with  $Ex^\mu_\Delta(f)\le (\ketenn)\Delta$.
\end{claim}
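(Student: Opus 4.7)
The plan is to mirror the inductive argument used for Claim~\ref{cl:ex-delta} in the linear case, substituting Lemma~\ref{lem:modify-Delta-nonlin} for Lemma~\ref{lem:modify-Delta}. I would prove the claim by induction on the phases.

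For the base case, I would use the initialization $f\equiv u$, $\mu_i = 1/M_i$, $\Delta = MU+1$. From the formula for $e_i$ and the bounds $|b_i|\le U$, $|u_{ij}|\le U$, $|\Gamma_{ij}(u_{ij})|\le U$, one has $|e_i|\le (d_i+1)U$, hence the relabeled excess satisfies
\[
e_i^\mu = M_i e_i \le M(d_i+1)U < (d_i+1)(MU+1) = (d_i+1)\Delta.
\]
Therefore $\max\{e_i^\mu - d_i\Delta, 0\} < \Delta$ for every $i$, and summing over $V$ gives $Ex_\Delta^\mu(f) < n\Delta \le (2n+3m)\Delta$.

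For the inductive step, suppose the $\Delta$-phase has just ended and the next phase is about to begin with parameter $\Delta/2$. The phase terminates only when $D=\emptyset$, i.e.\ $e_i^\mu \le (d_i+1)\Delta$ for every $i\in V$, which yields
\[
Ex_\Delta^\mu(f) \;=\; \sum_{i\in V}\max\{e_i^\mu - d_i\Delta,\,0\} \;\le\; n\Delta.
\]
Then \textsc{Adjust}$(\Delta)$ is invoked, and Lemma~\ref{lem:modify-Delta-nonlin} guarantees that the resulting pseudoflow $\bar f$ is $\Delta/2$-conservative and satisfies
\[
Ex_{\Delta/2}^\mu(\bar f) \;\le\; Ex_\Delta^\mu(f) + \tfrac{3}{2}m\Delta \;\le\; n\Delta + \tfrac{3}{2}m\Delta \;=\; (2n+3m)\,\tfrac{\Delta}{2},
\]
which is exactly the bound required at the start of the $\Delta/2$-phase.

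The argument is essentially routine once Lemma~\ref{lem:modify-Delta-nonlin} is in hand; the only subtlety is making sure the constant $2n+3m$ matches on both sides of the induction, which it does precisely because the $n\Delta$ bound from $D=\emptyset$ combined with the $\tfrac{3}{2}m\Delta$ bound from \textsc{Adjust} halves to $(2n+3m)(\Delta/2)$. No genuine obstacle arises here, since the difficult structural work (bounding the excess created by the local linearization refinement when moving from $\Delta$ to $\Delta/2$) has already been absorbed into Lemma~\ref{lem:modify-Delta-nonlin}.
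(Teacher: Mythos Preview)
Your proof is correct and follows essentially the same approach as the paper: an induction on phases, with the inductive step combining the $Ex_\Delta^\mu(f)\le n\Delta$ bound from $D=\emptyset$ with the $\tfrac32 m\Delta$ increase from Lemma~\ref{lem:modify-Delta-nonlin}. Your base case is handled slightly differently---you bound each $e_i^\mu$ by $(d_i+1)\Delta$ directly to get $Ex_\Delta^\mu(f)<n\Delta$, whereas the paper uses the cruder global estimate $Ex_\Delta^\mu(f)\le M(\sum_i|b_i|+mU)\le (m+n)MU$---but both arrive at the same conclusion.
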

\begin{proof}
For the initial solution, 
 $Ex^\mu_\Delta(f)\le M(\sum_{i\in V}|b_i|+mU)\le (m+n)MU$. The claim follows, since  $\Delta=MU+1$.
Once we finish all iterations in the $\Delta$-phase, $D=\emptyset$ implies $Ex^\mu_\Delta(f)\le n\Delta$.
By Lemma~\ref{lem:modify-Delta-nonlin}, \textsc{Adjust}$(\Delta)$ transfroms $f$ to a $\Delta/2$-conservative solution
by increasing the excess by at most $\frac 32m\Delta$.
Hence the $\Delta/2$ phase starts with $Ex^\mu_\Delta(f)\le (\ketenn)\Delta/2$, proving the claim.
\end{proof}

\begin{lemma}\label{lem:concave-push-bound}
A $\Delta$-phase consists of at most $\konstlin$ iterations.
\end{lemma}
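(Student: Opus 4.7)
The plan is to mirror the potential-function argument from the linear case (Lemma~\ref{lem:push-bound}), using Claim~\ref{cl:preproc} as the seed bound. Define
\[
\Psi \;=\; \sum_{i\in V}\left\lfloor \frac{\max\{e^\mu_i-(d_i+1)\Delta,\,0\}}{\Delta}\right\rfloor.
\]
Since $\max\{e^\mu_i-(d_i+1)\Delta,0\}\le \max\{e^\mu_i-d_i\Delta,0\}-\Delta\cdot\mathbf{1}[i\in D]$ whenever the first quantity is positive, we get $\Psi\le Ex^\mu_\Delta(f)/\Delta$. At the start of the phase, Claim~\ref{cl:preproc} therefore yields $\Psi\le \konstlin$.

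Next, I would verify that $\Psi$ is monotone nonincreasing along the execution of the phase. In every call to \textsc{Tighten-Label}$(f,\mu,\Delta)$ only the $\mu_i$'s increase, so each $e^\mu_i=e_i/\mu_i$ can only decrease; consequently $\Psi$ does not increase. The main structural point is what happens when we push $\Delta$ units of (relabeled) flow along a tight path $P$ from $s\in D$ to $t\in N_0$. By Lemma~\ref{lem:conservative-preserved}, on every tight arc $ij\in P$ the relabeled inflow at $j$ equals the relabeled outflow from $i$ (both equal to $\Delta$), so $e^\mu_j$ is unchanged for every internal node $j$ of $P$. At the endpoints, $e^\mu_s$ decreases by exactly $\Delta$ and $e^\mu_t$ increases by exactly $\Delta$.

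For $t$, the choice $t\in N_0$ gives $e^\mu_t\le d_t\Delta$ beforehand, hence afterwards $e^\mu_t\le (d_t+1)\Delta$ and $t$ contributes $0$ to $\Psi$. For $s$, the choice $s\in D$ gives $e^\mu_s>(d_s+1)\Delta$, so its floor term is at least $1$ before the push and decreases by exactly $1$ after. Thus each path augmentation strips exactly one unit from $\Psi$, while \textsc{Tighten-Label} never increases it. Combined with the initial bound $\Psi\le \konstlin$ and the invariant $\Psi\ge 0$, this yields at most $\konstlin$ augmentations in the $\Delta$-phase.

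The one step that needs the most care is the invariance of internal-node excesses during a push, which is where the concave setting could in principle differ from the linear one; but here Lemma~\ref{lem:conservative-preserved} already guarantees the exact identity $\Gamma^\mu_{ij}(f^\mu_{ij}+\Delta)-\Gamma^\mu_{ij}(f^\mu_{ij})=\Delta$ on tight arcs, so the argument transfers verbatim. Everything else is the same bookkeeping as in Lemma~\ref{lem:push-bound}.
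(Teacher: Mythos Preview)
Your argument is exactly the paper's: same potential $\Psi$, same appeal to Claim~\ref{cl:preproc} for the seed bound, same monotonicity under \textsc{Tighten-Label}, and the same use of Lemma~\ref{lem:conservative-preserved} to keep internal excesses fixed during a push.

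One small slip (shared with the paper's terse proof): from $s\in D$ you only get $e^\mu_s>(d_s+1)\Delta$, so $\lfloor (e^\mu_s-(d_s+1)\Delta)/\Delta\rfloor$ can be $0$, not ``at least $1$''; in that edge case the floor does not drop. The clean fix is to replace $\lfloor\cdot\rfloor$ by $\lceil\cdot\rceil$ in $\Psi$: then the $s$-term is genuinely $\ge 1$ before the push and drops by at least $1$, the $t$-term stays at $0$ since $e^\mu_t+\Delta\le (d_t+1)\Delta$, and the initial bound $\Psi\le Ex^\mu_\Delta(f)/\Delta\le \konstlin$ still holds because $\lceil x\rceil\le x+1$ turns the $(d_i+1)\Delta$ shift back into the $d_i\Delta$ shift defining $Ex^\mu_\Delta$.
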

\begin{proof}
Consider the potential function $\Psi=\sum_{i\in V}\lfloor \max\{{e}^\mu_i-(d_i+1)\Delta,0\}/\Delta\rfloor$. 
By Claim~\ref{cl:preproc}, $\Psi\le\konstlin$ holds at the beginning.  
In the relabeling steps, $\Psi$ may only decrease, and in every path augmentation, it decreases by exactly 1.
\end{proof}

Recall that $\kappa_f=\sum_{i\in V}M_i\kappa_i=\sum_{i\in V} M_i\min\{-e_i,0\}$ denotes the excess discrepancy.
For a $\Delta$-conservative $\mu$,   $M_i\kappa_i=e_i^\mu=$ holds for every node $i$ with $e_i<0$, because of  $\mu_i=1/M_i$.
Consequently, $\kappa_f$ is the total relabeled deficiency of the negative nodes.
The next theorem shows that if $\Delta<\varepsilon/(\ketenn)$, then we have an $\varepsilon$-optimal solution at the end of the $\Delta$-phase.
\nottoggle{full}{The key part of the proof is showing that if the algorithm runs forever, $f$ converges to an optimal solution $f^*$.}{}
\begin{theorem}\label{thm:delta-opt}
At the end of phase $\Delta$, the actual $f$ is $(\ketenn)\Delta$-optimal. \nottoggle{full}{\trivi}{}
\end{theorem}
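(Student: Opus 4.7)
The plan is to bound $\kappa_f-\kappa^*$ via weak convex duality, treating $\mu$ as an approximate dual certificate. The primal is a convex program in $(f,\kappa)$; attaching multiplier $\lambda_i=1/\mu_i$ to the constraint $e_i(f)+\kappa_i\ge 0$ and minimizing the Lagrangian over $\kappa_i\ge 0$ sends $\kappa_i$ to $0$ (the coefficient $M_i-\lambda_i\ge 0$ is precisely the dual-feasibility condition $\mu_i\ge 1/M_i$). Splitting the remaining sum arc by arc reduces the dual function to
\[
g(\mu)\;=\;\sum_i b_i^\mu\;+\;\sum_{ij\in E}\min_{x\in[0,u_{ij}^\mu]}\phi_{ij}(x),\qquad \phi_{ij}(x):=x-\Gamma_{ij}^\mu(x),
\]
which is convex on each arc's range. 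Weak duality then gives $\kappa^*\ge g(\mu)$. For the current $f$, since $\mu_i=1/M_i$ at every node with $e_i(f)<0$, a direct computation (from $\sum_i e_i^\mu = -\sum_{ij}\phi_{ij}(f_{ij}^\mu)-\sum_i b_i^\mu$) yields the identity $\kappa_f=Ex^\mu(f)+\sum_{ij}\phi_{ij}(f_{ij}^\mu)+\sum_i b_i^\mu$. Subtracting,
\[
\kappa_f-\kappa^*\;\le\;Ex^\mu(f)\;+\;\sum_{ij\in E}\bigl[\phi_{ij}(f_{ij}^\mu)-\min\phi_{ij}\bigr].
\]

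Next, I will bound the first term using the loop exit condition: $D=\emptyset$ forces $e_i^\mu\le(d_i+1)\Delta$ for every $i\in V$, and summing gives $Ex^\mu(f)\le\sum_i(d_i+1)\Delta=(n+2m)\Delta$.

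The technical heart is to show the per-arc bound $\phi_{ij}(f_{ij}^\mu)-\min\phi_{ij}\le\Delta$, using $\Delta$-conservativity. Let $x^*$ be the minimizer of $\phi_{ij}$ on $[0,u_{ij}^\mu]$. Suppose $x^*>f_{ij}^\mu$: if $ij$ is $\Delta$-fat, then $\theta_\Delta^\mu(ij)\le 1$ translates to $\Gamma_{ij}^\mu(f_{ij}^\mu+\Delta)-\Gamma_{ij}^\mu(f_{ij}^\mu)\le\Delta$, hence $\phi_{ij}(f_{ij}^\mu+\Delta)\ge\phi_{ij}(f_{ij}^\mu)$ and convexity confines $x^*$ to $[f_{ij}^\mu,f_{ij}^\mu+\Delta]$, on which $\phi_{ij}$ drops by at most $\Delta$; if $ij$ is not $\Delta$-fat, the total residual gain $\Gamma_{ij}^\mu(u_{ij}^\mu)-\Gamma_{ij}^\mu(f_{ij}^\mu)<\Delta$ directly bounds the drop over the full forward range. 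The case $x^*<f_{ij}^\mu$ is symmetric using the backward residual arc: $\Delta$-fatness of $ji$ gives $\Gamma_{ij}^\mu(f_{ij}^\mu)-\Gamma_{ij}^\mu(f_{ij}^\mu-\Delta)\ge\Delta$ and pins $x^*$ into $[f_{ij}^\mu-\Delta,f_{ij}^\mu]$, while if $ji$ is not $\Delta$-fat then $s_f^\mu(ji)=f_{ij}^\mu<\Delta$, so the possible decrease is already less than $\Delta$. Summing over the $m$ arcs gives a contribution of at most $m\Delta$, and therefore
\[
\kappa_f-\kappa^*\;\le\;(n+2m)\Delta+m\Delta\;=\;(n+3m)\Delta\;\le\;(2n+3m)\Delta.
\]

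The hard part will be exactly this per-arc bound: because $\Delta$-conservativity only controls the average slope of $\Gamma_{ij}^\mu$ across a window of length $\Delta$, the pointwise derivative $\Gamma_{ij}^{\mu+}(f_{ij}^\mu)$ can still exceed $1$, so the sharp conservativity hypothesis of Theorem~\ref{thm:conc-opt} fails. The convexity of $\phi_{ij}$ is the essential saving tool: it pins the minimizer $x^*$ into a small two-sided window around $f_{ij}^\mu$ on which the $\Delta$-conservativity inequalities directly bound the variation of $\phi_{ij}$ by $\Delta$, bypassing the need for a pointwise conservativity property.
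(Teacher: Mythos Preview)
Your proof is correct and takes a genuinely different route from the paper's own argument.

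The paper proves the theorem \emph{indirectly}: it imagines the algorithm running for infinitely many phases, extracts a subsequential limit $(f^*,\mu^*)$, verifies that this limit satisfies the optimality conditions of Theorem~\ref{thm:conc-opt}, and then observes that the discrepancy $\kappa_f$ can only decrease by at most $(2n+3m)\Delta'$ during phase $\Delta'$, so the total drop from phase $\Delta$ to the limit is bounded by the geometric sum $(2n+3m)(\Delta/2+\Delta/4+\cdots)<(2n+3m)\Delta$.

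Your argument is \emph{direct}: you treat the current labeling $\mu$ as a feasible Lagrangian dual vector (the condition $\mu_i\ge 1/M_i$ is exactly dual feasibility), obtain the explicit lower bound $g(\mu)=\sum_i b_i^\mu+\sum_{ij}\min\phi_{ij}$ on $\kappa^*$ by weak duality, and bound the gap $\kappa_f-g(\mu)$ arc by arc. The per-arc estimate $\phi_{ij}(f_{ij}^\mu)-\min\phi_{ij}\le\Delta$ is the substantive step, and your case analysis using $\Delta$-conservativity on the forward and backward residual arcs, combined with convexity of $\phi_{ij}$ to localize the minimizer, is clean and correct. You even obtain the slightly sharper constant $(n+3m)\Delta$.

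What each approach buys: your duality argument is shorter, avoids any compactness or limit reasoning, and makes the role of $\mu$ as an approximate certificate completely transparent. The paper's limit argument, on the other hand, establishes the additional fact that the iterates $f^{(t)}$ actually converge (along a subsequence) to an optimal solution; this is not a byproduct of your proof, and the paper relies on exactly this convergence later, in Section~\ref{sec:market} (Theorem~\ref{thm:conv} and Lemma~\ref{lem:mu-bound}), to locate an exact rational optimum near the algorithm's output.
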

\iftoggle{full}{
\begin{proof}
Let us keep running the algorithm forever unless it finds a 0-discrepancy solution at some phase.
First, consider the case when for some $\Delta'=\Delta/2^k$, we terminate with a 0-discrepancy solution. 
In all phases between $\Delta$ and $\Delta'$ , the total decrease of excess discrepancy is bounded by $(\ketenn)(\Delta/2+\Delta/4+\ldots+\Delta/2^k)< (\ketenn)\Delta$. Since in the $\Delta'$-phase we have a 0-discrepancy solution,  the total discrepancy at the end of the  $\Delta$-phase is at most $(\ketenn)\Delta$, proving the theorem.

Assume now the procedure runs forever. 
For each $i\in V$, $\kappa_i$ is decreasing and thus converges to some limit $\kappa_i^*$. 
Let $\kappa^*=\sum_{i\in V}M_i\kappa_i^*$. As above, the total decrease of the excess discrepancy after phase $\Delta$ is bounded by $(\ketenn)\Delta$, hence
$\kappa_f\le \kappa^*+(\ketenn)\Delta$. 
The proof finishes by constructing an optimal
pseudoflow $f^*$ with discrepancy $\kappa^*$.

Let $f^{(t)}$ denote the flow at time $t$, for  $\Delta^{(t)}=\Delta_0/2^t$, with labels $\mu_i^{(t)}$.
For each node $i$, $\mu_i^{(t)}$  is increasing; let $\mu_i^*=\lim_{t\rightarrow\infty} \mu_i^{(t)}$. 
For every $ij\in E$, $f_{ij}^{(t)}$ is  a bounded sequence ($0\le f_{ij}^{(t)}\le  u_{ij}$).
Consequently, we can choose an infinite set $T'\subseteq \mathbb{N}$ so that restricted to $t\in T'$, all
sequences $f_{ij}^{(t)}$ converge; let $f_{ij}^*$  denote the limits.
We shall prove that $f^*$ is an optimal pseudoflow with optimal labeling $\mu_i^*$,
completing the proof.

Let $V_\infty=\{i: \mu_i^*=\infty\}$. 
We claim that $V-V_\infty\neq\emptyset$. Indeed, if $i$ is $\Delta$-negative in a certain phase, then $\mu_i=1/M_i$, and
once $i$ becomes $\Delta$-positive or neutral, it would never again become $\Delta$-negative. Consequently, the set of $\Delta$-negative nodes is decreasing. Once it becomes empty,  we arrive at a 0-discrepancy solution. If it never becomes empty, then we have a set $N^*$ which remains the set of $\Delta$-negative nodes after a finite number of steps and thus $\mu_i^*=1/M_i$ for $i\in N^*$.

Let $e_i^*$ denote the  excesses of $f^*$
If $e_i^*<0$, then clearly $i\in N^*$ and $\mu_i^*=1/M_i$. If  $e_i^*>0$, we shall prove $\mu_i^*=\infty$.
For a contradiction, assume $\mu_i^*<\infty$. Then for sufficiently large $t\in T'$, $(d_i+\ketenn)\Delta^{(t)}\mu_i^{(t)}<e_i^{(t)}$ and thus $Ex^\mu_{\Delta^{(t)}}(f)>(\ketenn)\Delta^{(t)}$, a contradiction.

We have to prove $\Gamma^{\mu^*+}_{ij}(f_{ij}^{*\mu^*})\le 1$ whenever $ij\in E_{f^*}$.
If $\mu_{j}^*<\infty$, then $\Gamma^{\mu^*+}_{ij}(f_{ij}^{*\mu^*})=0$. If $\mu_{j}^*<\infty$, then
the definition (\ref{def:theta}) gives
\begin{equation*}
1\ge \theta^{\mu^{(t)}}_{\Delta^{(t)}}(ij)=\frac{\Delta^{(t)}\mu_j^{(t)}}{\Gamma_{ij}^{-1}(\Gamma_{ij}(f_{ij}^{(t)})+\Delta^{(t)}\mu_j^{(t)})-f_{ij}^{(t)}}\cdot \frac{\mu_i^{(t)}}{\mu_j^{(t)}}.
\end{equation*}
Then $\Delta^{(t)}\mu_j^{(t)}\rightarrow 0$ and hence the  first fraction converges to $\Gamma_{ij}^+(f_{ij}^*) =1/\{\Gamma^{-1}_{ij}\}^+(\Gamma_{ij}^+(f_{ij}^*) )$, while the second to $\mu_i^*/\mu_j^*$, leading to the conclusion using Claim~\ref{cl:deriv-relabel}. 
\end{proof}
}{}
\begin{theorem}\label{thm:concave-running-time-bound}
The above algorithm finds an $\varepsilon$-approximate solution to the symmetric concave generalized flow problem in $O(m(m+n\log n)\log(MUm/\varepsilon))$ oracle calls.
\end{theorem}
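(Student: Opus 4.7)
The proof will be a straightforward accounting argument, pulling together the invariants and bounds already established. The plan is to separately bound (a) the number of $\Delta$-scaling phases, (b) the number of iterations inside each phase, (c) the work per iteration, and then to invoke Theorem~\ref{thm:delta-opt} for the approximation guarantee.

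For (a), the algorithm starts with $\Delta = MU+1$ and halves $\Delta$ at the end of each phase, terminating once $(\ketenn)\Delta < \varepsilon$. Hence the number of phases is $O\bigl(\log(MU(\ketenn)/\varepsilon)\bigr) = O(\log(MUm/\varepsilon))$, using $n\le m+1$ to absorb the coefficient into the logarithm.

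For (b), Lemma~\ref{lem:concave-push-bound} immediately gives at most $\konstlin = O(m+n)$ iterations per $\Delta$-phase; note that this bound counts the total number of relabelings plus path augmentations performed in the inner \textbf{do}-loop, since the potential $\Psi$ cannot go negative and decreases by exactly $1$ per augmentation. One further call to \textsc{Adjust}$(\Delta)$ is made at the end of each phase, which costs only $O(m)$ time, since it inspects each arc independently per the proof of Lemma~\ref{lem:modify-Delta-nonlin}.

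For (c), each iteration consists of one call to \textsc{Tighten-Label}$(f,\mu,\Delta)$, one identification of a source $s\in D$ and a tight path $P$ to a $\Delta$-neutral/negative sink, and one augmentation of $\Delta$ units of flow along $P$. \textsc{Tighten-Label} is the dominating cost: it is a multiplicative variant of Dijkstra's algorithm in which distances (rather $\mu$-ratios) are grown until either a new arc becomes tight or a $\Delta$-positive node becomes $\Delta$-neutral, so using Fredman and Tarjan's heap implementation it runs in $O(m+n\log n)$ arithmetic operations and $O(m)$ oracle calls (one per evaluation of $\theta^\mu_\Delta(ij)$). The path augmentation itself costs $O(n)$ time plus $O(n)$ oracle calls. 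Multiplying (a), (b), and (c) yields the claimed running time of $O(m(m+n\log n)\log(MUm/\varepsilon))$.

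The approximation guarantee is handed to us by Theorem~\ref{thm:delta-opt}: at the end of the last phase the current $\Delta$ satisfies $(\ketenn)\Delta < \varepsilon$, so $f$ is $\varepsilon$-optimal. The only place one must be slightly careful is verifying that the invariants of the inner loop (namely $\Delta$-conservativity of $\mu$ and $\mu_i = 1/M_i$ on $\Delta$-negative nodes) really persist across both \textsc{Tighten-Label} steps and path augmentations; this is where Lemma~\ref{lem:conservative-preserved} and Claim~\ref{cl:gen-initial} are used and there is no additional work to do. I do not foresee a genuine obstacle: the only subtlety is the oracle-call count inside \textsc{Tighten-Label}, which can be bounded arc-by-arc because each $\theta^\mu_\Delta(ij)$ depends only on the pair $(f_{ij},\mu_i,\mu_j)$ and so needs recomputation only when one of $\mu_i$ or $\mu_j$ changes, keeping the per-iteration oracle cost within $O(m)$.
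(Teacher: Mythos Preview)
Your proposal is correct and follows essentially the same approach as the paper: bound the number of phases by $O(\log(MUm/\varepsilon))$ from the initial and terminal values of $\Delta$, bound the iterations per phase by Lemma~\ref{lem:concave-push-bound}, and bound the cost of each iteration by the $O(m+n\log n)$ Fibonacci-heap implementation of \textsc{Tighten-Label}, finally invoking Theorem~\ref{thm:delta-opt} for the approximation guarantee. Your additional remarks on the cost of \textsc{Adjust}, the path augmentation, and the per-arc oracle-call count are correct elaborations that the paper leaves implicit.
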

\begin{proof}
The initial value of $\Delta$ is $MU+1$, and we terminate if $\Delta<\varepsilon/(\ketenn)$ by Theorem~\ref{thm:delta-opt}. Hence the total
number of scaling phases is $O(\log(MUm/\varepsilon))$. The number of iterations in a phase is $O(m)$ by Lemma~\ref{lem:concave-push-bound}, and the running time of an iteration is dominated by \textsc{Tighten-Label}, a slightly modified version of Dijkstra's algorithm that can be implemented in $O(m+n\log n)$ time using Fibonacci heaps as in \cite{Fredman87}
\end{proof}

\iftoggle{full}{%
\section{Sink versions of the problems}\label{sec:sink}
In this section, we show how the algorithms in Sections~\ref{sec:gen-alg} and \ref{sec:concave} can be applied to solve the
to the sink versions of the corresponding problems. 
For generalized flows, let us set $M_t=1$ and $M_i=B^n+1$ for every $i\in V-t$.
Let us set $b_t=\lceil\sum_{j:jt\in E}\gamma_{jt}u_{jt}-\sum_{j:tj\in E}\ell_{tj}+1\rceil\le d_tB^2+1$. This a strict upper bound on $\sum_{j:jt\in E}\gamma_{jt}f_{jt}-\sum_{j:tj\in E}f_{tj}$, hence $e_t< 0$ will hold for any pseudoflow.

Let us run the algorithm for the symmetric formulation with these $M_i$'s, returning an optimal pseudoflow $f$ and optimal labels $\mu$.
We claim that $f$ is also optimal for the sink formulation. If $e_i\ge 0$ for all $i\neq t$, this is clearly the case. 

On the other hand, we claim that if there exists a node $i$ with $e_i<0$, then the sink version is infeasible.
First, we show that such an $i$ cannot be reached from $t$ on a path in $E_f$.
 Indeed, if $P$ were a $t$--$i$ path in $E_f$, 
then $1\ge\gamma^\mu(P)=\gamma(P)\mu_t/\mu_i$. Since both $i$ and $t$ are negative, $\mu_i=1/M_i=1/(B^n+1)$ and $\mu_t=1$.
Consequently, $\gamma(P)\le 1/(B^n+1)$. This is a contradiction since $\gamma(P)$ is the product of at most $n$ rational numbers, each with denominator at most $B$.
Let $V'\subseteq V$ be the set of nodes $j$ for which there exists a path in $E_f$ from $j$ to some $i\in V-t$ with $e_i<0$. 
This set verifies that the sink version cannot be feasible.

By setting the $b_t$ value and the $M_i$'s, $B$ has increased to $d_t B^2+1$ and $M=B^n+1$.
This gives running time $O(m^2(m+n\log n)\log B)$.

\medskip

Let us turn to concave generalized flows.%
}{%
\section{Sink version of the problem}\label{sec:sink}
Let us now show how the algorithm for the symmetric version can be used to solve the sink version.}
An $\varepsilon$-approximate solution to the sink version means a pseudoflow $f$
with $\sum_{i\in V-t}\max\{0,-e_i\}\le \varepsilon$ and $e_t$ being at least the optimum value minus $\varepsilon$.

Let us set $b_t=U^*+1$, a strict upper bound on $\sum_{j:jt\in E}\Gamma_{jt}(f_{jt})-\sum_{j:tj\in E}f_{tj}$%
\iftoggle{full}{ (we defined $U^*$ in Section~\ref{sec:complexity}).}{.}
 Thus $e_t<0$ is always guaranteed.
Let us set $M_i=\lceil 2U^*/\varepsilon\rceil+1$ if $i\in V-t$ and $M_t=1$. Let us run the algorithm for the symmetric formulation to obtain an $\varepsilon$-optimal solution $f$.

If $\kappa_f>2U^*+\varepsilon$, then no feasible solution may exist. Indeed, by the definition of $U^*$, if there is a feasible solution $f'$, then there exists one with $e_t\ge -U^*$.
If $f'$ is such a feasible solution for the sink formulation, then its excess discrepancy for the symmetric formulation is at most $\kappa_{f'}\le b_i+U^*\le 2U^*$, a contradiction as $f$ was $\varepsilon$-optimal for the symmetric formulation.

If $\kappa_f\le 2U^*+\varepsilon$, then 
\[
\sum_{i\in V-t}\max\{0,-e_i\}=\frac 1{\lceil 2U^*/\varepsilon\rceil+1}\sum_{i\in V-t}M_i\kappa_i\le \frac{\kappa_f}{\lceil 2U^*/\varepsilon\rceil+1}\le \varepsilon.
\]
Also $\kappa_t$ cannot be further than $\varepsilon$ from the optimum value of $e_t$ for the sink formulation.
Indeed, let $f'$ be the optimal solution to the sink formulation with $e'_t$ flow reaching the sink. Then $\kappa_{f'}=b_t-e'_t$.
The claim follows by 
\[b_t-e_t'+\varepsilon=\kappa_{f'}+\varepsilon\ge \kappa_f\ge \kappa_t=b_t-e_t,
\] 
and thus $e_t\ge e'_t-\varepsilon$.
This gives a running time bound  $O(m(m+n\log n)\log(U^*m/\varepsilon))$.

\iftoggle{full}{
\section{Finding the optimal solution for rational convex programs}\label{sec:market}
In this section, we first give a general theorem which shows how an approximate solution to the sink version can be converted to an exact optimal solution, given that one exists.
We shall verify the required technical properties with appropriate parameters for nonsymmetric Arrow-Debreu Nash bargaining. 
Unlike the linear Fisher model, ADNB might be infeasible. However, it can be shown that if the problem is infeasible, then for appropriate (polynomially small) $\varepsilon$, the $\varepsilon$-approximate version is also infeasible.
Similar reductions should hold for all other rational convex programs 
discussed in Section~\ref{sec:market-app} as well, giving polynomial time algorithms for finding  optimal solutions.

\begin{theorem}\label{thm:conv}
Let problem $\cal P$ be given by the sink formulation with $n$ nodes and $m$ arcs, and complexity parameters $U$, $U^*$. Assume $\cal P$ is guaranteed to have a rational optimal solution,
and the following conditions hold for some values $\varepsilon$, $T$ and a function $\tau(n,m,U^*)$.
\begin{itemize}
\item[(P1)] Consider the algorithm for  the sink version for an $\varepsilon$-approximation.
Then either there is no $\varepsilon$-feasible solution, or 
 $\mu_i\le T$ holds for any $i\in V$, even if running the algorithm for
 an arbitrary number of phases.
\item[(P2)] 
 A subroutine is provided for finding  an optimal solution $\tilde f$ in $\tau(n,m,U^*)$ time, if the following assumptions hold.
Assume that for each $ij\in E$, we are given an interval $I_{ij}\subseteq [\ell_{ij},u_{ij}]$ with $|I_{ij}|\le 2T\varepsilon$, with the guarantee that there exists an optimal solution $f^*$ with $f^*_{ij}\in I_{ij}$ for all $ij\in E$. 
\end{itemize}
Then there exist an algorithm for finding the exact optimal solution or proving that the problem is infeasible in $O(m(m+n\log n)\log(U^*m/\varepsilon))+\tau(n,m,U^*)$.
\end{theorem}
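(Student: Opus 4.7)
The plan is to run the $\varepsilon$-approximation algorithm for the sink formulation and then postprocess its output with the subroutine supplied by (P2). Concretely, I would first execute the sink-version algorithm of Section~\ref{sec:sink} with target accuracy $\varepsilon$; in $O(m(m+n\log n)\log(U^*m/\varepsilon))$ operations this returns either a certificate of infeasibility in the sense of (P1) (namely, no $\varepsilon$-feasible pseudoflow is produced), in which case I output ``infeasible'' --- this is valid because any exact optimal solution would in particular be $\varepsilon$-feasible --- or an $\varepsilon$-approximate pseudoflow $f$ together with the terminal $\Delta$-conservative labeling $\mu$ in which $\mu_i\le T$ for every node $i$.

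Given such a pair $(f,\mu)$, for each arc $ij\in E$ I would define
$$I_{ij}:=\bigl[\max\{\ell_{ij},\,f_{ij}-T\varepsilon\},\ \min\{u_{ij},\,f_{ij}+T\varepsilon\}\bigr]\subseteq[\ell_{ij},u_{ij}],$$
so that $|I_{ij}|\le 2T\varepsilon$, and feed these intervals to the subroutine of (P2), which by hypothesis returns an exact optimum $f^*$ in $\tau(n,m,U^*)$ time. Summing the two costs yields the total running time claimed in the theorem; the only thing left to verify is the hypothesis of (P2), that some optimal $f^*$ actually satisfies $f^*_{ij}\in I_{ij}$ for every arc.

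The hard part is this arc-by-arc closeness $|f_{ij}^*-f_{ij}|\le T\varepsilon$, and it is the main obstacle. My strategy would be to formally continue the algorithm past the $\varepsilon$-termination through further phases of halving scale $\Delta_k=\Delta/2^k$, where $\Delta=\Theta(\varepsilon/m)$ is the terminal scale, and to take the limit pseudoflow $f^*$, which by the convergence analysis in the proof of Theorem~\ref{thm:delta-opt} is an exact optimum. Because (P1) forces the labels to remain bounded by $T$ throughout all these extra phases, each augmentation of $\Delta_k$ relabeled units changes $f_{ij}$ by at most $\mu_i\Delta_k\le T\Delta_k$ on each arc, and each \textsc{Adjust}$(\Delta_k)$ step contributes at most $T\Delta_k/2$ more; with at most $\konstlin$ augmentations per phase, the total displacement of $f_{ij}$ across all subsequent phases is bounded by $(\konstlin+1)\,T\sum_{k\ge 0}\Delta_k=O(m T\Delta)=O(T\varepsilon)$. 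After possibly running the approximation at accuracy $\varepsilon/C$ for a suitable absolute constant $C$ (which only changes the running time by a constant factor inside the logarithm), this yields $|f^*_{ij}-f_{ij}|\le T\varepsilon$ on every arc, so $f^*$ lies in $\prod_{ij}I_{ij}$, as required. The technically delicate step is tightening the algorithm's aggregate excess-discrepancy guarantees into a per-arc displacement bound, and this is precisely where the uniform label bound from (P1) is indispensable.
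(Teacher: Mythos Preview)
Your proposal is correct and follows essentially the same approach as the paper: run the $\varepsilon$-approximation, then (conceptually) continue the phases indefinitely so that the limit pseudoflow $f^*$ is optimal as in Theorem~\ref{thm:delta-opt}, and use the uniform label bound from (P1) to convert the per-phase relabeled-flow changes into a per-arc displacement bound that sums geometrically to $O(T\varepsilon)$. The paper's own argument is slightly terser---it simply asserts that the total change of $f^\mu_{ij}$ in a $\Delta$-phase is at most $(\ketenn)\Delta$ and hence $f_{ij}$ changes by at most $T(\ketenn)\Delta$---and does not explicitly mention the constant-factor adjustment you note, but the substance is identical.
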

We remark that in {\em(P2)}, $\tilde f= f^*$ is not required.
\begin{proof}
Let us formulate the symmetric version for $\varepsilon$-approximation as in Section~\ref{sec:sink}.
Assume  we run the algorithm for this problem forever, as in the proof of Theorem~\ref{thm:delta-opt}. 
The $\mu_j$'s shall converge to
some finite values $\mu_j^*\le T$. In any  $\Delta$-phase, the total change of $f^\mu_{ij}$  is bounded
by $\varepsilon'=(\ketenn)\Delta$, and thus $f_{ij}$ may change by at most $T\varepsilon'$. Therefore all $f_{ij}$'s converge to some values $f^*_{ij}$, which
can be seen to give an optimal solution, as in the proof  of Theorem~\ref{thm:delta-opt}. 

The algorithm terminates whenever $\Delta<\varepsilon/(\ketenn)$.
At this point, the intervals $I_{ij}=[f_{ij}-T\varepsilon,f_{ij}+T\varepsilon]$ satisfy the conditions in {\em(P1)}, since  $|f_{ij}-f^*_{ij}|\le T\varepsilon$.
Running the $\varepsilon$-approximation algorithm and then the algorithm in {\em(P2)} gives the running time bound.
\end{proof}

To ensure property {\em(P1)}, a useful method is to enforce the existence of a unique optimal solution by perturbing the input data, as done by Orlin \cite{Orlin10} for linear Fisher markets. If there is a unique rational optimal solution $f^*$ with all entries having denominator at most $Q$, then setting $2T\varepsilon<1/Q$ enables us to identify the set of arcs with $f^*_{ij}>0$. This can be already enough to compute $f^*$ efficiently.

\subsection{Application to nonsymmetric Arrow-Debreu Nash bargaining}
Let us now apply Theorem~\ref{thm:conv} for the nonsymmetric ADNB problem.
Let us assume all utilities $U_{ij}$, budgets $m_i$ and disagreement utilities $c_i$ are nonnegative integers, with 
$U_{\max}=\max\{U_{ij}: i\in B, j\in G\}$,
$R=\max\{m_i: {i\in B}\}$, and $C=\max\{c_i: {i\in B}\}$. 
Let $n=|G|+|B|$ and let $m$ be the number of pairs $ij$ with $U_{ij}>0$;  in the concave generalized flow instance, the number of nodes is $n+1$ and the number of arcs is $m+|B|$. Let us assume that there exists at least one arc with positive utility  incident to any buyer and to any good. The special case $c\equiv 0$ is identical to Fisher's market with linear utilities. 

Consider a candidate solution with price $p_j$ for each good $j\in G$. Let $x_{ij}\ge 0$ denote the amount of good $j$ purchased by buyer $i$. It follows from the KKT-conditions (see also \cite{Vazirani11}) that $(p,x)$ is an optimal solution if and only if
{\em(i)} $\sum_{i\in B}x_{ij}=1$ for each good $j$, that is, each good is fully sold; and
{\em(ii)} for any buyer $i$ and good $j$, $U_{ij}/p_j\le (\sum_{j\in G}U_{ij}x_{ij}-c_i)/m_i$, and equality holds whenever
$x_{ij}>0$.


By strict concavity of the objective, the utilities $\sum_{j\in G}U_{ij}x_{ij}$ accrued
by the players are the same in any optimal solution whenever there is a feasible solution. Yet these same values can be obtained by different allocations.
As in \cite{Orlin10}, we assume that there exist a unique optimal allocation as well.
This can be done by a  lexicographic perturbation of the $U_{ij}$ values,
without significantly increasing the running time.  This guarantees that the set
of arcs with $x_{ij}>0$ in an optimal solution is cycle-free.

For the concave generalized flow instance, let us set upper capacity $v_{ji}=2$ on each arc $ji$ with $j\in G$, $i\in B$, and $v_{it}=2\sum_{j\in G}U_{ij}$ on each arc $it$ (we set larger capacities so that the capacity constraints would never become tight). 
Hence the complexity parameter $U$ is bounded by $2|G|U_{\max}$. We shall prove the following.

\begin{theorem}\label{thm:reduction}
Let $K=nRU_{\max}$. Setting $T=U^*=\max\{C,nK\log K\}$, $\varepsilon=1/(2K^nU^*)$ satisfies the
requirements on $U^*$ in Section~\ref{sec:complexity} and (P1) and (P2) in Theorem~\ref{thm:conv}.
Our algorithm delivers an optimal solution in  running time 
 $O(m(m+n\log n)(n\log (nRU_{\max})+\log C))$ for nonsymmetric ADNB.
\end{theorem}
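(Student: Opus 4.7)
The plan is to apply Theorem~\ref{thm:conv} to the sink-formulation instance for nonsymmetric ADNB, verifying the complexity bound on $U^*$, condition (P1) on the labels, and condition (P2) on the rounding subroutine. First, I would perform a lexicographic perturbation of the utilities $U_{ij}$ (in the spirit of \cite{Orlin10}) to make the optimal allocation $x^*$ unique, so that its support $S=\{ij:x_{ij}^*>0\}$ is cycle-free; this costs only an additive blow-up in bit-length and does not affect the running-time order.

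For the complexity parameter $U^*$, the upper bound on $e_t$ is trivial from the capacities: $e_t \le \sum_i m_i\log(2nU_{\max}) = O(nR\log K) \le nK\log K$. For the lower bound at feasibility, after the perturbation the optimum $(z^*,x^*)$ is the unique solution of the linear system obtained from the KKT relations $p_j=\lambda_i U_{ij}$ on arcs in $S$, $\lambda_i=m_i/z_i$, $z_i+c_i=\sum_j U_{ij}x_{ij}$, and $\sum_i x_{ij}=1$. Applying Cramer's rule on the forest-restricted system yields that $z_i^*$ and $x_{ij}^*$ are rationals with common denominator $Q\le K^n$; in particular $z_i^*\ge 1/Q$, so $\sum_i m_i\log z_i^* \ge -nR\log Q \ge -n^2R\log K$. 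Together with $|b_i|=c_i\le C$, this gives $U^* = \max\{C,nK\log K\}$, as claimed.

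For (P1), the central observation is that $\mu_t=1$ throughout the algorithm (since $e_t<0$ always by our choice of $b_t$), and conservativity of $\mu$ on the arc $it$, combined with $\Gamma_{it}(\alpha)=m_i\log\alpha$ substituted into the definition~(\ref{def:theta}) of $\theta_\Delta^\mu$, forces $\mu_i \le z_i(e^{\Delta/m_i}-1)/\Delta = O(z_i/m_i)\le O(nU_{\max}) = O(K)$ for buyers $i$ once $\Delta\le m_i$; the earlier large-$\Delta$ phases are handled by the fact that $\mu_i$ has not yet departed much from $1/M_i$. For goods $j$, conservativity on any $\Delta$-fat arc $ji$ with $U_{ij}>0$ gives $\mu_j \le \mu_i/U_{ij} = O(K)$, comfortably below $T=nK\log K$. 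For (P2), the interval width $2T\varepsilon = 1/K^n$ is strictly smaller than $1/Q$, so $I_{ij}\ni 0$ if and only if $x_{ij}^*=0$; having thereby identified $S$, the KKT system described above is a linear system on the forest $S$ and can be solved exactly in polynomial time by Gaussian elimination.

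The main obstacle is the bound in (P1): the naive bound $\mu_j^*\le 1/p_j^* \le Q \le K^n$ obtained from the optimal dual is exponentially too weak. The key point is that the algorithm's $\mu_i$ is controlled by the \emph{current} flow $z_i\le 2nU_{\max}$ through the linearization~(\ref{def:theta}) to the always-negative node $t$, rather than by $z_i^*$ or by the optimal prices. Once (P1) and (P2) are in place, the running time follows: the approximation algorithm of Theorem~\ref{thm:concave-running-time-bound} runs in $O(m(m+n\log n)\log(U^*m/\varepsilon))$ time, and with $\varepsilon = 1/(2K^n U^*)$ the log-factor simplifies to $O(n\log K+\log C) = O(n\log(nRU_{\max})+\log C)$, dominating the polynomial-time rounding subroutine.
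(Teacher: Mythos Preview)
Your treatment of $U^*$, the rationality bound via Cramer's rule, (P2), and the running-time arithmetic all match the paper's argument essentially verbatim. The gap is in your verification of (P1).

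You bound $\mu_i$ for a buyer $i$ at each phase by invoking $\theta_\Delta^\mu(it)\le 1$ on the arc $it$. But that inequality is only imposed when $it$ is $\Delta$-fat, and you give no argument that it is. In fact the algorithm initializes $f_{it}=u_{it}$, so $it$ is not even residual at the start; nothing prevents \textsc{Tighten-Label} from raising $\mu_i$ substantially before $f_{it}$ drops below capacity. Your sentence ``earlier large-$\Delta$ phases are handled by the fact that $\mu_i$ has not yet departed much from $1/M_i$'' is not justified: a single call to \textsc{Tighten-Label} can multiply $\mu_i$ by an arbitrary factor when $i$ has no outgoing $\Delta$-fat arcs into $S$ (it stops only when $i$ becomes $\Delta$-neutral). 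So the per-phase argument does not close.

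The paper's route (Lemma~\ref{lem:mu-bound}) avoids this by exploiting the monotonicity of $\mu_i$: since $\mu_i$ is non-decreasing, it suffices to bound the limit $\mu_i^*$. One passes to a convergent subsequence of the flows and shows the limit $f^*$ is optimal for the symmetric instance with a $0$-conservative labeling $\mu^*$. At that optimum, feasibility of the sink version forces $f^*_{it}>0$ and the deliberately generous capacity $v_{it}=2\sum_j U_{ij}$ forces $f^*_{it}<v_{it}$, so both $it$ and $ti$ are residual. Conservativity at $\Delta=0$ then applies to \emph{all} residual arcs (no fatness hypothesis needed), and the derivative condition on $it$ yields exactly $\mu_i^* = f^*_{it}/m_i\le U\le U^*$; the bound for goods follows from conservativity on $ji$ as you wrote. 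Your computation $\mu_i\le z_i(e^{\Delta/m_i}-1)/\Delta\to z_i/m_i$ is in fact the same bound in the limit; what is missing from your write-up is the monotonicity-plus-limit step that lets you use it.
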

The running time $\tau(n,m,U^*)$ will be negligible compared to the main algorithm and therefore it does not affect the complexity bound.
In the rest of the section, we shall verify the above choices. 

\begin{lemma}[{see \cite[Thm 2]{Vazirani11}, \cite[Lemma 2.1]{Orlin10}}]\label{lem:fisher-unique}
Assuming that the problem is feasible and there exists a unique optimal allocation $x^*$, all positive $x^*_{ij}$ values 
are rational numbers with  a common denominator $S\le K^n$.
\end{lemma}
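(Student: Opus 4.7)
My plan is to adapt Orlin's analysis for the linear Fisher market \cite{Orlin10} to the ADNB setting, keeping track of how the disagreement utilities $c_i$ enter the picture. The argument has three steps.

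\textbf{Step 1 (forest support).} I would first show that uniqueness of $x^*$ forces its bipartite support graph $\{(i,j):x^*_{ij}>0\}$ to be a forest. If a cycle $i_1-j_1-i_2-\cdots-j_l-i_1$ existed in the support, the signed cycle indicator $z$ (alternating $\pm 1$ on the cycle edges, $0$ elsewhere) would satisfy $\sum_i z_{ij}=0$ at every good, so $x^*\pm\varepsilon z$ remains feasible for small $\varepsilon>0$. Under the lexicographic perturbation of the $U_{ij}$'s assumed to guarantee uniqueness, some $u_i$ would change strictly along $z$; combined with strict concavity of $\sum_i m_i\log(u_i-c_i)$, one of $x^*\pm\varepsilon z$ is a distinct feasible point achieving the same first-order optimality, contradicting uniqueness. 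Hence each connected component of the support is a tree on at most $n$ vertices.

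\textbf{Step 2 (linear system on each tree).} Fix a tree $T$ with $p$ goods, $q$ buyers, $k=p+q\leq n$ vertices and $k-1$ edges. The KKT conditions give $p_jw_i=m_iU_{ij}$ for every $(i,j)\in E(T)$, where $w_i:=u_i-c_i$. I would set up a square linear system in the variables $\{x_e:e\in E(T)\}\cup\{w_i:i\text{ a buyer of }T\}$ consisting of the $p$ market-clearing equations $\sum_i x_{ij}=1$; the $q$ utility-definition equations $w_i=\sum_j U_{ij}x_{ij}-c_i$; and the $q-1$ buyer-pair equations $m_iU_{ij}w_{i'}-m_{i'}U_{i'j}w_i=0$ obtained from KKT for buyers $i,i'$ sharing a good $j$ in $T$ (exactly $q-1$ such equations are independent, since the ``shared-good'' graph on buyers of $T$ is a tree). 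This is a $(k+q-1)\times(k+q-1)$ integer system, every coefficient bounded by $RU_{\max}$, and uniqueness implies nonsingularity.

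\textbf{Step 3 (Cramer and determinant bound).} By Cramer's rule, each $x^*_e$ is a ratio of two integer determinants of submatrices of the coefficient matrix $M$, so the common denominator $S$ divides $|\det M|$. The main obstacle is bounding $|\det M|\leq K^n=(nRU_{\max})^n$: a direct Leibniz bound on a matrix of size $2n-1$ gives only $(2n)!(RU_{\max})^{2n}$, which is too weak. To tighten it, I would use the utility-definition equations to substitute out the $w_i$ variables (as in \cite{Orlin10}), reducing to a square system of size $k-1\leq n-1$ in the $x_e$'s alone, whose row coefficients are products of two $U$-values and one $m$-value; the tree structure of the support then lets me factor one $U$-value out of each reduced buyer-pair row, leaving an effective entry bound $RU_{\max}$. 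Leibniz then yields $|\det M|\leq n!\cdot(RU_{\max})^n\leq n^n(RU_{\max})^n=K^n$, completing the proof.
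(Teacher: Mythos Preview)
Your high-level strategy---write the KKT conditions as a square integer linear system on the forest support and bound the common denominator by the determinant---matches the paper's. The difference is in the choice of auxiliary variables and the determinant bound, and this is where your argument breaks.

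The paper linearizes using the reciprocal prices $q_j=1/p_j$ rather than the surplus utilities $w_i$. With variables $\{x_{ij}:ji\in F\}\cup\{q_j:j\in G\}$, the system is simply
\[
\sum_{i} x_{ij}=1\ (j\in G),\qquad \sum_{k} U_{ik}x_{ik}-U_{ij}m_i\,q_j=c_i\ (ji\in F),
\]
which is square of size $|G|+|F|$, has integer entries all bounded by $U_{\max}R$, and the Hadamard inequality together with $|F|\le n-1$ gives $S\le (nU_{\max}R)^n=K^n$ directly---no substitution or factoring needed.

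Your setup with $w_i$ forces you into Step~3, and that step does not go through as written. After you substitute $w_i=\sum_{j'}U_{ij'}x_{ij'}-c_i$ into a buyer-pair equation $m_iU_{ij}w_{i'}-m_{i'}U_{i'j}w_i=0$, the coefficient on $x_{i'j'}$ is $m_iU_{ij}U_{i'j'}$ and on $x_{ij'}$ is $-m_{i'}U_{i'j}U_{ij'}$; these share no common factor in general, so you cannot ``factor one $U$-value out of each reduced buyer-pair row'' while keeping an integer matrix. Without that step, the entries are only bounded by $RU_{\max}^2$, and Leibniz on a $(k-1)\times(k-1)$ matrix gives at best $(n-1)!\,(RU_{\max}^2)^{n-1}$, which exceeds $K^n$. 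Even if you skip the substitution and apply Hadamard to your original $(k+q-1)$-system, the $q$ utility rows contribute $U_{\max}^q$ and the $q-1$ pair rows contribute $(RU_{\max})^{q-1}$, again yielding an unwanted extra $U_{\max}^{q-1}$.

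The fix is exactly the paper's: switch your auxiliary unknowns from $w_i$ to $q_j=1/p_j$, write one KKT equation per support edge, and apply Hadamard rather than Leibniz.
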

\begin{proof}
The optimal allocations $x^*$ and prices $p^*$ can be uniquely obtained given the set $F$ of 
arcs $ji$ with $x^*_{ij}>0$.
If we introduce the variable $q_j=1/p_j$, then an optimal solution 
must satisfy the following system of linear equations.
\begin{align}\label{system:eq}
\sum_{j:ji\in E} x_{ij}&=1 &\forall i\in G\\
\sum_{k:ki\in E} U_{ik}x_{ik}-U_{ij}m_iq_j&=c_i &\forall ji\in F\nonumber
\end{align}
We claim that this system has a unique solution $(x^*,p^*)$.
Fixing the price of one good $j_0$ as $p_{j_0}=\alpha$ in a component of $F$, it uniquely determines $p_j$ for any good $j$ 
in the same component as $\alpha$ times the product of the $U_{ab}$ values on the unique path from $j_0$ to $j$. 
Similarly, this determines the best bang-per-buck values $b_i=U_{ij}/p_j=(\sum_{k:ki\in E} U_{ik}x_{ik}-c_i)/m_i$ for $ji\in F$, which are proportional to $1/\alpha$.
The optimality conditions imply that in an equilibrium, the money spent by buyer $i$ is $r_i=m_i+c_i/b_i$. In each component of $F$, the sum of prices should be equal to the money spent by the buyers. This uniquely determines all prices  and bang-per-buck values in the component. 
The $x_{ij}$ values in the component have to sum up to 1 for each good $i$ and 
$\sum_{k:ki\in E} U_{ik}x_{ik}=b_im_i+c_i$. As $F$ is a forest, this has a unique solution.

In the solution to (\ref{system:eq}), a common denominator is the determinant $S$ of a largest non-singular submatrix of the constraint matrix. The Hadamard-bound gives $S\le (nU_{\max}R)^n=K^n$, using that $|F|\le n-1$.
\end{proof}
The above proof also gives a simple linear time algorithm for finding the optimal solution, verifying {\em(P2)} 
if $2T\varepsilon<1/K^n$, with $\tau(n,m,U^*)$ being negligible compared to the running time of the approximation algorithm.

Next we justify the choice of $U^*$. $U\le U^*$ clearly holds. For an arbitrary pseudoflow $f$,
$e_t\le \sum_{i\in G} m_i\log v_{it}\le nR\log U\le U^*$. It is left to show that if the problem is feasible,
there exists a feasible solution with $e_t>-U^*$.
Since the $U_{ij}$ and $c_{i}$ values are integers, whenever 
$\sum_{j:ji\in E}U_{ij}x_{ij}-c_i>0$, it should be at least $1/K^n$. Consequently, if the sink version of the problem is feasible, the optimal objective value is at least $e_t\ge \sum_{i\in G} m_i\log (1/K^n)\ge -n^2R\log K\ge -U^*$.

The next claim verifies {\em(P1)} and thus completes the proof of Theorem~\ref{thm:reduction}.

\begin{lemma}\label{lem:mu-bound}
Either the problem is not feasible, or $\mu_k\le U^*$ holds for any $k\in B\cup G$ in arbitrary $\Delta$-phase.
\end{lemma}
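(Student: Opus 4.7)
Assume the problem is feasible, as otherwise the first alternative of the lemma holds. My approach is to run the algorithm indefinitely, as in the proof of Theorem~\ref{thm:delta-opt}, and to bound the monotone increasing limits $\mu^*_k = \lim_{t\to\infty} \mu^{(t)}_k$. Since labels only grow during the algorithm, it suffices to establish $\mu^*_k \le U^*$ for every $k \in B\cup G$.

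By the convergence argument of Theorem~\ref{thm:delta-opt}, there is a pseudoflow $f^*$ (a subsequential limit of the $f^{(t)}$) such that $(f^*,\mu^*)$ forms an optimal conservative pair for the symmetric formulation. The choice $M_i = \lceil 2U^*/\varepsilon\rceil+1$ for $i\neq t$ is large enough that any symmetric optimum must have $\kappa^*_i = 0$ (hence $e^*_i\ge 0$) for $i\neq t$; thus $f^*$ also optimally solves the sink version of ADNB, which by the perturbation assumption admits a unique optimizer. The KKT conditions at that optimizer yield $e^*_k = 0$ for every $k \in B\cup G$ (markets clear, budgets are tight), so the case $\mu^*_k = \infty$ from the proof of Theorem~\ref{thm:delta-opt} cannot arise for such $k$ and each $\mu^*_k$ is finite.

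The next step is to compute $\mu^*_k$ explicitly from conservativity. Feasibility forces the surplus $w_i = z_i - c_i > 0$ for each buyer, so $0 < f^*_{it} < u_{it} = 2\sum_j U_{ij}$ and both arcs $it$ and $ti$ lie in $G_{f^*}$. Applying conservativity in both directions, using $\Gamma_{it}^+(w_i) = \Gamma_{it}^-(w_i) = m_i/w_i$, pins down $\mu^*_i = w_i/m_i$. Similarly, $\sum_i x^*_{ij} = 1$ supplies, for each good $j$, a buyer $i$ with $0 < x^*_{ij} < u_{ji} = 2$, so both arcs $ji$ and $ij$ are residual and conservativity pins down $\mu^*_j = \mu^*_i/U_{ij}$.

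These formulas translate into the claimed numerical bound. From $w_i \le \sum_j U_{ij} x^*_{ij} \le |G|\,U_{\max} \le nU_{\max}$ and $m_i \ge 1$, we get $\mu^*_i \le nU_{\max}$ for every buyer. Picking any buyer $i$ with $x^*_{ij} > 0$ and using $U_{ij} \ge 1$ gives $\mu^*_j \le \mu^*_i \le nU_{\max}$ for every good. Since $nU_{\max} \le K \le U^*$, we conclude $\mu^{(t)}_k \le \mu^*_k \le U^*$ in every $\Delta$-phase. The main subtlety is ruling out $\mu^*_k = \infty$ for $k \in B\cup G$; as explained in the second paragraph, this is precisely where the choice of large $M_i$ combined with market-clearing (Lemma~\ref{lem:fisher-unique}) is used.
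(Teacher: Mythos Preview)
Your proposal is correct and follows essentially the same approach as the paper: run the algorithm indefinitely, pass to the limits $(f^*,\mu^*)$, argue that under feasibility $f^*$ is optimal for the sink version, and then read off $\mu^*_i = f^*_{it}/m_i$ for buyers and $\mu^*_j \le \mu^*_i/U_{ij}$ for goods from conservativity on the arcs $it$ and $ji$. The paper's version is slightly more direct in two places: it obtains $f^*_{it}>0$ immediately from $\kappa_{f^*}<\infty$ (rather than via your assertion that ``$M_i$ large enough forces $\kappa^*_i=0$'', which you state but do not justify), and for goods it only uses the one-sided bound $\mu^*_j U_{ij}/\mu^*_i\le 1$ instead of pinning down equality via two-sided conservativity.
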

\begin{proof}
Recall from Section~\ref{sec:sink} that we solve the sink version by reducing it to the symmetric algorithm with
$M_t=1$ and $M_i=\lceil 2U^*/\varepsilon\rceil+1$.
Since  $\mu_i$ is non-decreasing, these  values converge to some limits $\mu_i^*\in \mathbb{R}\cup\{\infty\}$.
We have $0\le f_{ij}\le u_{ij}$ on all arcs $ij\in E$ in every phase, and therefore we can choose an infinite subset $T'\subseteq \mathbb{N}$ so that
all $f_{ij}$'s converge if we restrict ourselves to iteration numbers in $T'$. As in Theorem~\ref{thm:delta-opt}, it can be easily verified that the limit
$f^*$ is an optimal solution to the symmetric version with conservative labeling $\mu^*$. 

As in Section~\ref{sec:sink}, if $\kappa_f>2U^*$, then the sink version is not feasible, and otherwise  $f^*$ is also optimal to the sink version. 
In the latter case, $f^*_{it}>0$ for arbitrary $i\in B$ must hold as otherwise $e_t=-\infty$ in the sink version gives infeasibility.
Recall also that the symmetric version was defined such that $e_t<0$ holds for every feasible solution and therefore $\mu^*_t=1$. 

Consider now an arbitrary $i\in B$.
Both $it,ti\in E_f$ (it is easy to verify that $f_{it}=v_{it}$ is impossible),
and therefore $\frac{\mu^*_i}{\mu^*_t}\frac{m_i}{f^*_{it}}=1$ must hold by the conservativity of $\mu^*$. 
This means $\mu^*_i=f^*_{it}/{m_i}\le U\le U^*$. 

Finally, let $j\in G$. Then for an arbitrary arc $ji\in E$, $ji\in E_f$ easily follows, and therefore conservativity gives
$\frac{\mu^*_j}{\mu^*_i}U_{ji}\le 1$, which implies $\mu^*_j\le U^*$.
\end{proof}

Finally, we remark that if we apply this algorithm to linear Fisher markets ($c\equiv 0$), the algorithm runs in a fundamentally different way as \cite{Devanur08} or \cite{Orlin10}. While both these algorithms increase the prices, our algorithm works the other way around: it starts with the highest possible prices, and decreases them.

}

\section{Discussion}\label{sec:disc}
We have given the first polynomial time combinatorial algorithms for both the symmetric and the sink formulation of the concave generalized flow problem. Our algorithm is not strongly polynomial. In fact, no such algorithm is known already for the linear case: it is a fundamental open question to find
a strongly polynomial algorithm for linear generalized flows. If resolved, a natural question could be to devise a strongly polynomial algorithm for some class of convex generalized flow problems, analogously to
the recent result \cite{Vegh11b}, desirably including the market and Nash bargaining applications.

Linear  Fisher market is also captured by \cite{Vegh11b}. A natural question is if
there is any direct connection between our model and the
convex minimum cost flow model studied in \cite{Vegh11b}.
Despite certain similarities, no reduction is known in any
direction. Indeed, no such reduction is
 known even between the linear special cases, that is, generalized flows and minimum-cost circulations.
In fact, the only known market setting captured by both is linear Fisher. 
Perfect price discrimination and ADNB are not known to be reducible to flows with convex objective. In contrast, spending constraint utilities \cite{Vazirani10spending} are not known to be captured by our model, although they are captured by the other.

\iftoggle{full}{
\medskip

As discussed in Section~\ref{sec:overview-concavegen}, it seems difficult to extend  any generalized flow algorithm having separate cycle cancelling and flow transportation subroutines. While this includes the majority of combinatorial algorithms, there are some exceptions. Goldberg, Plotkin and Tardos \cite{Goldberg91}
gave two different algorithms: besides \textsc{Fat-Path}, they also presented \textsc{MCF}, an algorithm that uses a minimum-cost circulation algorithm directly as a subroutine. Hence for the concave setting, it could be possible to develop a similar algorithm using a minimum concave cost circulation algorithm, for example \cite{Hochbaum90} or \cite{Karzanov97} as a black box.

Another approach that avoids scaling is \cite{Wayne02} for minimum-cost generalized flows and \cite{Restrepo09} for generalized flows: these algorithms can be seen as extensions of the cycle cancelling method, extending minimum mean cycles to GAP's in a certain sense. While it does not seem easy, it might be possible to develop such an algorithm for concave generalized flows as well.

In defining an $\varepsilon$-approximate solution for the sink version of concave generalized flows, we allow two types of errors, both for the objective and for feasibility.
A natural question is if either of these could be avoided. While the value oracle model as we use it, seems to need feasibility error, it might be possible to avoid it using a 
stronger oracle model as in \cite{Karzanov97}. One might also require a feasible solution as part of the input, as a starting point to maintain feasibility (For example if all lower bounds and node demands are 0 and $\Gamma_{ij}(0)=0$ on all arcs $ij$, then  $f\equiv 0$ is always feasible).
}

\subsection*{Acknowledgments}
The author is grateful to Vijay Vazirani for several fruitful discussions on market equilibrium problems.
\nottoggle{full}{
\section*{Appendix}
\begin{proof}[Proof of Lemma~\ref{lem:modify-Delta-nonlin}]
Consider a $\Delta/2$-fat arc $ij$ with $\theta_{\Delta/2}^\mu(ij)>1$ for $f$, that is,
\begin{equation}
 \Gamma^{-1}_{ij}\left(\Gamma_{ij}(f_{ij})+\frac\Delta2\mu_j\right)-f_{ij} <\frac{\Delta}2\mu_i.\label{eq:violate}
\end{equation}
There are two possible scenarios: {\em (a)} $ij$ was not $\Delta$-fat, that is,
\begin{equation}
\frac \Delta2\mu_j\le \Gamma_{ij}(u_{ij})-\Gamma_{ij}(f_{ij}) \le \Delta\mu_j,\label{eq:ujel}
\end{equation}
or {\em (b)} $ij$ was also a $\Delta$-fat arc. Then by $\Delta$-conservativity,
\begin{equation}
 \Gamma^{-1}_{ij}(\Gamma_{ij}(f_{ij})+\Delta\mu_j)-f_{ij} \ge \Delta\mu_i.\label{eq:regiel}
\end{equation}
In both cases, let us define
\[
\bar f_{ij}=\Gamma^{-1}_{ij}\left(\Gamma_{ij}(f_{ij})+\frac \Delta2 \mu_j\right).
\]
$\Delta/2$-fatness of $ij$ guarantees that this is well-defined.
%
In case {\em (a)}, we claim that $ij$ is not $\Delta/2$-fat for $\bar f$.
Indeed, 
\[
\Gamma_{ij}(u_{ij})-\Gamma_{ij}(\bar f_{ij})=\Gamma_{ij}(u_{ij})-\left(\Gamma_{ij}(f_{ij})+\frac \Delta2 \mu_j\right)< \frac \Delta2 \mu_j.
\]
The last inequality follows by the second part of (\ref{eq:ujel}).
In case {\em(b)}, we claim that if 
$ij$ is a $\Delta/2$-fat arc for $\bar f$ then $\theta_{\Delta/2}^\mu(ij)\le 1$ must hold for $\bar f$.
Indeed, if we subtract (\ref{eq:violate}) from (\ref{eq:regiel}), we get
\[
 \Gamma^{-1}_{ij}\left(\Gamma_{ij}(f_{ij})+\Delta\mu_j\right)-\Gamma^{-1}_{ij}\left(\Gamma_{ij}(f_{ij})+\frac\Delta2\mu_j\right)> \frac{\Delta}2\mu_i,
\]
and by substituting $\bar f_{ij}$, it follows that
\[
\Gamma^{-1}_{ij}\left(\Gamma_{ij}(\bar f_{ij})+\frac\Delta2\mu_j\right)-\bar f_{ij}> \frac{\Delta}2\mu_i,
\]
that is, $\theta_{\Delta/2}^\mu(ij)<1$ for $\bar f$.

 We next show that  if $ji$ is also a $\Delta/2$-fat arc for $\bar f$, then
$\theta_{\Delta/2}^\mu(ji)\le 1$ holds for $\bar f$. Indeed, 
$\theta_{\Delta/2}^\mu(ji)\le 1$  for $\bar f$ can be written in the  form 
\[
{\Gamma_{ij}(\bar f_{ij})-\Gamma_{ij}\left(\bar f_{ij}-\frac\Delta2\mu_i\right)}\ge \frac\Delta2\mu_j.
\]
Equivalently,
\[
{\Gamma_{ij}(f_{ij})+\frac \Delta2\mu_j-\Gamma_{ij}\left(\bar f_{ij}-\frac\Delta2\mu_i\right)}\ge \frac\Delta2\mu_j.
\]
Subtracting $\frac \Delta2\mu_j$, rearranging and applying the monotone increasing function $\Gamma_{ij}^{-1}$, we get
$f_{ij}\ge \bar f_{ij}-\frac\Delta2\mu_i$, that follows from (\ref{eq:violate}) by substituting $\bar f_{ij}$.

We define $\bar f_{ij}$ the above way whenever $ij$ is a $\Delta/2$-fat arc with $\theta_{\mu}(ij)>1$. (As a simple consequence of concavity, this cannot be the case for both $ij$ and $ji$.)
If this does not hold for neither $ij$ nor $ji$, then let $\bar f_{ij}=f_{ij}$. The next claim compares $f_{ij}$ and $\Gamma(f_{ij})$ to $\bar f_{ij}$ and $\Gamma(f_{ij})$.
\begin{claim}
$|\bar f_{ij}^\mu-f^\mu_{ij}|\le \frac \Delta2$ and $|\Gamma^\mu_{ij}(\bar f_{ij}^\mu)-\Gamma^\mu_{ij}(f^\mu_{ij})|\le \frac \Delta2$.
\end{claim}
\begin{proof}
There is nothing to prove if $\bar f_{ij}=f_{ij}$. Assume $f_{ij}$ was increased as above (the statement is equivalent for $ij$ and $ji$).
The first part is identical to  (\ref{eq:violate}).
By the definition of $\bar f_{ij}$,
\[
\Gamma_{ij}(\bar f_{ij})-\Gamma_{ij}(f_{ij})=\Gamma_{ij}(f_{ij})+\frac\Delta2\mu_j-\Gamma_{ij}(f_{ij})= \frac\Delta2\mu_j,
\]
giving the second part.
\end{proof}
For $\Delta/2$-conservativity, we also need to show that $\bar f$ has no $\Delta/2$-negative nodes with $\mu_i>1/M_i$. 
By the above claim, the total possible change of relabeled flow on arcs incident to $i$ is 
$d_i\Delta/2$. A node is nonnegative for  $\Delta$ if $e_i^\mu\ge d_i\Delta$ and for $\Delta/2$ if  $e_i^\mu\ge d_i\Delta/2$. Consequently, a nonnegative node cannot become $\Delta/2$-negative.

Finally, $Ex_{\Delta/2}^\mu(f)\le Ex_{\Delta}^\mu(f)+\sum_{i\in V}d_i\Delta/2$, and  each arc is responsible for creating at most $\Delta/2$ units of new excess.
This gives $Ex_{\Delta/2}^\mu(\bar f)\le Ex_{\Delta}^\mu(f)+\frac m2\Delta$, as required.
\end{proof}
}
\bibliographystyle{abbrv}
\bibliography{concave}

\begin{thebibliography}{10}

\bibitem{Ahlfeld87}
D.~P. Ahlfeld, J.~M. Mulvey, R.~S. Dembo, and S.~A. Zenios.
\newblock Nonlinear programming on generalized networks.
\newblock {\em ACM Transactions on Mathematical Software (TOMS)},
  13(4):350--367, 1987.

\bibitem{amo}
R.~K. Ahuja, T.~L. Magnanti, and J.~B. Orlin.
\newblock {\em Network Flows: Theory, Algorithms, and Applications}.
\newblock Prentice-Hall, Inc., Feb. 1993.

\bibitem{Bertsekas97}
D.~P. Bertsekas, L.~C. Polymenakos, and P.~Tseng.
\newblock An {$\varepsilon$}-relaxation method for separable convex cost
  network flow problems.
\newblock {\em SIAM Journal on Optimization}, 7(3):853--870, 1997.

\bibitem{Cohen94}
E.~Cohen and N.~Megiddo.
\newblock New algorithms for generalized network flows.
\newblock {\em Mathematical Programming}, 64(1):325--336, 1994.

\bibitem{Dantzig63}
G.~B. Dantzig.
\newblock {\em Linear Programming and Extensions}.
\newblock Princeton University Press, Princeton, 1963.

\bibitem{Devanur08}
N.~R. Devanur, C.~H. Papadimitriou, A.~Saberi, and V.~V. Vazirani.
\newblock Market equilibrium via a primal--dual algorithm for a convex program.
\newblock {\em Journal of the ACM (JACM)}, 55(5):22, 2008.

\bibitem{Edmonds72}
J.~Edmonds and R.~M. Karp.
\newblock Theoretical improvements in algorithmic efficiency for network flow
  problems.
\newblock {\em Journal of the ACM (JACM)}, 19(2):248--264, 1972.

\bibitem{Eisenberg59}
E.~Eisenberg and D.~Gale.
\newblock Consensus of subjective probabilities: The pari-mutuel method.
\newblock {\em The Annals of Mathematical Statistics}, 30(1):165--168, 1959.

\bibitem{Fleischer02}
L.~K. Fleischer and K.~D. Wayne.
\newblock Fast and simple approximation schemes for generalized flow.
\newblock {\em Mathematical Programming}, 91(2):215--238, 2002.

\bibitem{Fredman87}
M.~L. Fredman and R.~E. Tarjan.
\newblock Fibonacci heaps and their uses in improved network optimization
  algorithms.
\newblock {\em Journal of the ACM (JACM)}, 34(3):596--615, 1987.

\bibitem{Goel10}
G.~Goel and V.~V. Vazirani.
\newblock A perfect price discrimination market model with production, and a
  (rational) convex program for it.
\newblock {\em Mathematics of Operations Research}, 36:762--782, 2011.

\bibitem{Goldberg91}
A.~V. Goldberg, S.~A. Plotkin, and {\'E}.~Tardos.
\newblock Combinatorial algorithms for the generalized circulation problem.
\newblock {\em Mathematics of Operations Research}, 16(2):351, 1991.

\bibitem{Goldberg89}
A.~V. Goldberg and R.~E. Tarjan.
\newblock Finding minimum-cost circulations by canceling negative cycles.
\newblock {\em Journal of the ACM (JACM)}, 36(4):873--886, 1989.

\bibitem{Goldfarb96}
D.~Goldfarb and Z.~Jin.
\newblock A faster combinatorial algorithm for the generalized circulation
  problem.
\newblock {\em Mathematics of Operations Research}, 21(3):529--539, 1996.

\bibitem{Goldfarb02a}
D.~Goldfarb, Z.~Jin, and Y.~Lin.
\newblock A polynomial dual simplex algorithm for the generalized circulation
  problem.
\newblock {\em Mathematical programming}, 91(2):271--288, 2002.

\bibitem{Goldfarb97}
D.~Goldfarb, Z.~Jin, and J.~B. Orlin.
\newblock Polynomial-time highest-gain augmenting path algorithms for the
  generalized circulation problem.
\newblock {\em Mathematics of Operations Research}, 22(4):793--802, 1997.

\bibitem{Goldfarb02}
D.~Goldfarb and Y.~Lin.
\newblock Combinatorial interior point methods for generalized network flow
  problems.
\newblock {\em Mathematical programming}, 93(2):227--246, 2002.

\bibitem{Hochbaum90}
D.~S. Hochbaum and J.~G. Shanthikumar.
\newblock Convex separable optimization is not much harder than linear
  optimization.
\newblock {\em Journal of the ACM (JACM)}, 37(4):843--862, 1990.

\bibitem{Jain11}
K.~Jain.
\newblock Generalized online matching with concave utilities.
\newblock {\em Unpublished manuscript}, 2011.

\bibitem{Jain10}
K.~Jain and V.~V. Vazirani.
\newblock Eisenberg-gale markets: Algorithms and game-theoretic properties.
\newblock {\em Games and Economic Behavior}, 70(1):84--106, 2010.

\bibitem{Jewell62}
W.~S. Jewell.
\newblock Optimal flow through networks.
\newblock {\em Operations Research}, 10:476--499, 1962.

\bibitem{Kalai77}
E.~Kalai.
\newblock Nonsymmetric {Nash} solutions and replications of 2-person
  bargaining.
\newblock {\em International Journal of Game Theory}, 6(3):129--133, 1977.

\bibitem{Kapoor96}
S.~Kapoor and P.~M. Vaidya.
\newblock Speeding up {Karmarkar's} algorithm for multicommodity flows.
\newblock {\em Mathematical programming}, 73(1):111--127, 1996.

\bibitem{Karzanov97}
A.~V. Karzanov and S.~T. McCormick.
\newblock Polynomial methods for separable convex optimization in unimodular
  linear spaces with applications.
\newblock {\em SIAM J. Comput.}, 26(4):1245--1275, 1997.

\bibitem{Minoux86}
M.~Minoux.
\newblock Solving integer minimum cost flows with separable convex cost
  objective polynomially.
\newblock {\em Mathematical Programming Study}, 25:237, 1985.

\bibitem{Nisan07}
N.~Nisan, T.~Roughgarden, E.~Tardos, and V.~Vazirani.
\newblock {\em Algorithmic Game Theory}.
\newblock Cambridge University Press New York, NY, USA, 2007.

\bibitem{Onaga66}
K.~Onaga.
\newblock Dynamic programming of optimum flows in lossy communication nets.
\newblock {\em IEEE Transactions on Circuit Theory}, pages 308--327, 1966.

\bibitem{Onaga67}
K.~Onaga.
\newblock Optimum flows in general communication networks.
\newblock {\em Journal of the Franklin Institute}, 283(4):308--327, 1967.

\bibitem{Orlin93}
J.~B. Orlin.
\newblock A faster strongly polynomial minimum cost flow algorithm.
\newblock {\em Operations Research}, 41(2):338--350, 1993.

\bibitem{Orlin10}
J.~B. Orlin.
\newblock Improved algorithms for computing fisher's market clearing prices.
\newblock In {\em Proceedings of the 42nd ACM symposium on Theory of
  computing}, pages 291--300. ACM, 2010.

\bibitem{Radzik93}
T.~Radzik.
\newblock Approximate generalized circulation.
\newblock {\em Technical Report93-2, Cornell Computational Optimization
  Project, Cornell University}, 1993.

\bibitem{Radzik04}
T.~Radzik.
\newblock Improving time bounds on maximum generalised flow computations by
  contracting the network.
\newblock {\em Theoretical Computer Science}, 312(1):75--97, 2004.

\bibitem{Restrepo09}
M.~Restrepo and D.~P. Williamson.
\newblock A simple {GAP}-canceling algorithm for the generalized maximum flow
  problem.
\newblock {\em Mathematical Programming}, 118(1):47--74, 2009.

\bibitem{Shigeno07}
M.~Shigeno.
\newblock A survey of combinatorial maximum flow algorithms on a network with
  gains.
\newblock {\em Journal of the Operations Research Society of Japan},
  47:244--264, 2004.

\bibitem{Shigeno06}
M.~Shigeno.
\newblock Maximum network flows with concave gains.
\newblock {\em Math. Program}, 107(3):439--459, 2006.

\bibitem{Shmyrev09}
V.~I. Shmyrev.
\newblock An algorithm for finding equilibrium in the linear exchange model
  with fixed budgets.
\newblock {\em Journal of Applied and Industrial Mathematics}, 3(4):505--518,
  2009.

\bibitem{Tardos98}
{\'E}.~Tardos and K.~D. Wayne.
\newblock Simple maximum flow algorithms in lossy networks.
\newblock In {\em Proceedings of IPCO, Lecture Notes in Computer Science},
  volume 1412, pages 310--324, 1998.

\bibitem{Truemper77}
K.~Truemper.
\newblock On max flows with gains and pure min-cost flows.
\newblock {\em SIAM Journal on Applied Mathematics}, 32(2):450--456, 1977.

\bibitem{Truemper78}
K.~Truemper.
\newblock Optimal flows in nonlinear gain networks.
\newblock {\em Networks}, 8(1):17--36, 1978.

\bibitem{Vazirani10spending}
V.~V. Vazirani.
\newblock Spending constraint utilities with applications to the {Adwords}
  market.
\newblock {\em Mathematics of Operations Research}, 35(2):458--478, 2010.

\bibitem{Vazirani11}
V.~V. Vazirani.
\newblock The notion of a rational convex program, and an algorithm for the
  {Arrow-Debreu Nash} bargaining game.
\newblock {\em Journal of ACM (JACM)}, 59(2), 2012.

\bibitem{Vegh11b}
L.~A. V\'egh.
\newblock Strongly polynomial algorithm for a class of minimum-cost flow
  problems with separable convex objectives.
\newblock In {\em Proceedings of STOC}, 2012.
\newblock (to appear).

\bibitem{Wayne02}
K.~D. Wayne.
\newblock A polynomial combinatorial algorithm for generalized minimum cost
  flow.
\newblock {\em Mathematics of Operations Research}, pages 445--459, 2002.

\end{thebibliography}

\end{document}